\renewcommand{\ALG@name}{Algorithm}
\newcommand{\wh}[1]{\widehat{#1}}
\def\C {\,|\:}
\newcommand\mN{\mathcal N}
\newcommand\mG{{\mathcal G}}
\newcommand\mF{{\mathcal F}}
\newcommand\mX{{\mathcal X}}
\newcommand\btheta{{\bm \theta}}
\newcommand\bphi{{\bm \phi}}
\newcommand\E{\mathbb E}
\newcommand\bomega{{\bm\omega}}
\renewcommand\d{{\mathrm d}}
\renewcommand\P{\mathbb P}
\newcommand\mL{\mathcal L}
\newcommand\mA{\mathcal A}
\newcommand\Xn{X^{(n)}}
\newcommand\R{\mathbb R}
\renewcommand\b{{\bm{\beta}}}
\newcommand\e{\mathrm e}
\newcommand\1{{\mathbf 1}}
\newcommand\iid{\overset{\text{iid}}{\sim}}
\newcommand{\KL}{{\text{KL}}}
\newcommand{\norm}[1]{\left\Vert#1\right\Vert}
\newcommand{\abs}[1]{\left\vert#1\right\vert}
\newcommand{\wt}[1]{\widetilde{#1}}
\def\argmin{\mathop{\arg\min}}
\newtheorem{thm}{Theorem}
\def\spacingset#1{\renewcommand{\baselinestretch}%
{#1}\small\normalsize} 
\begin{document}

\title{Generative Bayesian Inference with GANs}

\author{\name Yuexi Wang \email yxwang99@illinois.edu \\
\addr Department of Statistics \\
University of Illinois Urbana-Champaign\\
Champaign, IL 61820, USA 
\AND Veronika Ro\v{c}kov\'a \email veronika.rockova@chicagobooth.edu\\
       \addr Booth School of Business\\
       University of  Chicago\\
       Chicago, IL 60637, USA}

\editor{Daniel Roy}

\maketitle

\begin{abstract}
In the absence of explicit or tractable likelihoods, Bayesians often resort to approximate Bayesian computation (ABC) for inference.
Our work bridges ABC with deep neural implicit samplers based on generative adversarial networks (GANs) and adversarial variational Bayes.
Both ABC and GANs   compare aspects of observed and fake data to simulate from posteriors  and likelihoods, respectively. We develop a Bayesian GAN (B-GAN) sampler that directly targets the posterior by solving an adversarial optimization problem. B-GAN is driven by a deterministic mapping learned on the ABC reference by {\em conditional} GANs.
Once the mapping has been trained, iid posterior samples are obtained by filtering noise at a negligible additional cost.
We propose two post-processing local refinements using (1) data-driven proposals with importance reweighting, and (2) variational Bayes. 
We support our findings with frequentist-Bayesian results, showing that the typical total variation distance between the true and approximate  posteriors converges to zero for certain neural network generators and discriminators.  Our findings on simulated data show highly competitive performance relative to some of the most recent  likelihood-free posterior simulators.
\end{abstract}

\begin{keywords}
Approximate Bayesian Computation, Generative Adversarial Networks, Implicit Models, Likelihood-free Bayesian Inference, Variational Bayes
\end{keywords}

 \section{ABC and Beyond} 
For a practitioner, much of the value of the Bayesian inferential approach hinges on the ability to compute the entire posterior distribution. 
Very often, it is easier to infer data-generating probability distributions through  simulator models rather than   likelihood functions. However, Bayesian computation  with simulator models can be particularly grueling.

We assume that $\theta\in\Theta\subset\R^d$ is a parameter controlling a simulator-based model that gives rise to a data vector  $\Xn_\theta=(X_1,\dots,X_n)'\sim P_\theta^{(n)}$ which is {\em not} necessarily iid. The model may be provided by a probabilistic program that can be easily simulated but its implicit likelihood $p_\theta^{(n)}=\pi(\Xn\C\theta)$ cannot be evaluated. 
For an unknown inferential target $\theta_0\in\Theta$,  our goal is to approximate the  post-data  inferential density (i.e. the posterior)
 \begin{equation}
 \pi(\theta\C\Xn_{0})\propto p_\theta^{(n)}(\Xn_0)\pi(\theta),\label{eq:post}
 \end{equation}
where $\Xn_0 \sim P_{\theta_0}^{(n)}$ denotes the observed data. We allow for the possibility that {\em both} the likelihood $p_\theta^{(n)}(\cdot)$ and/or the prior $\pi(\theta)$ are analytically intractable but easy to draw from.


Without the obligation to build a model, Approximate Bayesian Computation (ABC) \citep{beaumont2002approximate, sisson2018overview}   provides an approximation to the posterior \eqref{eq:post} by matching aspects of observed and fake data.
This is accomplished via forward simulation of the so-called ABC reference table $\{(\theta_j,\Xn_j)\}_{j=1}^T$ where $\theta_j$'s have been sampled from the prior $\pi(\theta)$ and fake data $\Xn_j$'s have been sampled from the likelihood $p_{\theta_j}^{(n)}(\cdot)$. In order to keep only plausible parameter draws, this table is then  filtered through an accept/reject mechanism to weed out parameter values $\theta_j$ for which the summary statistics of the fake and observed data  were too far.
Our work, albeit not being an ABC method per-se, builds off of recent ABC and simulation-based Bayesian inference innovations described below.


ABC Regression adjustment \citep{beaumont2002approximate, beaumont2003estimation, blum2010non} is a post-processing step  that re-weights and re-adjusts the location of $\theta_j$'s gathered by rejection ABC 
by fitting a (weighted) regression model  of $\theta_j$'s onto summary statistics $s_j=s(\Xn_j)$. 
Such a  model can be regarded as provisional density estimator of $\pi(\theta\C \Xn)$ derived from  $s(\Xn)$ under certain regression distributional assumptions.
More flexible conditional density  estimators, such as neural mixture density  networks  \citep{papamakarios2016fast, lueckmann2017flexible}, have been successfully integrated into ABC without the burden of choosing summary statistics.  Our approach is related to these developments. However,  we do not attempt to learn a flexible parametric approximation  to the   posterior (or the likelihood \citep{lueckmann2019likelihood, papamakarios2019sequential}). 
Instead, we find an {\em implicit} neural sampler from an approximation to $\pi(\theta\C \Xn)$  by training  Generative Adversarial Networks \citep{goodfellow2016deep} on the ABC reference table. 
GANs have been originally conceived to simulate from complex likelihoods by contrasting observed and fake data. ABC, on the other hand,  contrasts observed and fake data  to simulate from complex posteriors.
Bringing together these two approaches,  we propose  the B-GAN posterior sampler, an incarnation of  conditional GANs \citep{gauthier2014conditional, mirza2014conditional,athey2021using,zhou2022deep} for likelihood-free  Bayesian simulation. By contrasting the ABC reference table with a fake dataset under the same marginal distribution $\pi(\Xn)$, B-GAN learns to simulate from an approximation to the conditional distribution $\pi(\theta\C\Xn)$.  Similarly as \citet{papamakarios2016fast} and  \citet{lueckmann2017flexible}, our method is also global in the sense that it learns $\pi(\theta\C \Xn)$ for {\em any} $\Xn$, not necessarily $\Xn_0$. 
More perfected posterior reconstructions can be  obtained with post-processing steps that zoom in onto the posterior distribution evaluated at $\Xn_0$. We consider two such refinements based on: (1) reinforcement learning with  importance sampling,  and (2)  adversarial variational Bayes. We describe each approach below.


Simple rejection ABC   may require exceedingly many trials to obtain only a few accepted samples when the posterior $\pi(\theta\C \Xn_0)$ is much narrower than the prior $\pi(\theta)$
 (see e.g. \citet{marjoram2003markov, sisson2007sequential, beaumont2009adaptive}).  This has motivated query-efficient ABC techniques which intelligently decide where to propose next  (see \citet{jarvenpaa20batch, hennig2012entropy}  for decision-theoretic reasoning or \citet{jarvenpaa2019efficient} and \citet{gutmann2016bayesian} for implementations based on Bayesian optimization and surrogate models). 
Alternatively, \citet{lueckmann2017flexible} learn a Bayesian mixture density network approximating the posterior over multiple rounds of adaptively chosen simulations and use more flexible proposal distributions (not necessarily the prior) with a built-in importance-reweighting scheme. A similar strategy was used in \citet{papamakarios2019sequential} who used  a pilot run of mixture density networks   to learn the proposal distribution for the next round. 
Although $\Xn_0$ {\em is not} used in B-GAN training, it can be used in the proposal inside the ABC reference table. 
Similarly as in \citet{papamakarios2019sequential}, we use $\Xn_0$ to construct a flexible proposal (i.e. an empirical Bayes prior) and convert the draws to posterior samples under the original prior by importance reweighting. 
This  'reinforcement learning' refinement substantially improves the reconstruction accuracy and can be justified by theory.  

Our vanilla {B-GAN} sampler uses contrastive learning \citep{gutmann2018likelihood,durkan2020contrastive} to estimate the conditional distribution $\pi(\theta\C\Xn)$ for {\em any} $\Xn$. 
Since $\Xn_0$ is used only at the evaluation stage (not the training stage), we can custom-make the sampler to $\Xn_0$ by using the B-GAN output {(or the output after reinforcement learning)} as an initialization for implicit variational Bayes   optimization  \citep{tran2017hierarchical, huszar2017variational}.
 Implicit variational Bayes attempts to approximate the posterior using densities which are defined implicitly by a push-forward mapping.  B-GAN also trains such a mapping but the generator will have never seen observed data. At later stages of B-GAN training, we can thereby modify the objective function for the generator so that it minimizes a lower bound to the marginal likelihood. Since the likelihood cannot be evaluated,  we use contrastive learning inside the variational objective to compute the lower bound \citep{huszar2017variational, tran2017hierarchical}. We consider  the joint-contrastive form \citep{ huszar2017variational, durkan2020contrastive}, where the classifier is still trained to learn the joint likelihood ratio using the ABC reference table (similarly as in B-GAN). However, the generator is now trained on $\Xn_0$ by maximizing the evidence lower bound. This algorithm is related to the B-GAN simulator, but uses $\Xn_0$ during the training stage.

Contrastive learning has been used inside Bayesian likelihood-free sampling algorithms before (see e.g. \citet{wang2021approximate,gutmann2018likelihood,kaji2021mh}).  
Both \citet{wang2021approximate} and \citet{kaji2021mh}  assume iid data  with a large enough sample size $n$ to be able to apply classification algorithms for each iteration  of Metropolis Hastings and ABC, respectively.
Our approach does not require iid data and can flexiblely accommodate almost any shape of data. We also do not require to run classification at each posterior simulation step.

We show highly competitive performance of our methods (relative to state-of-the art likelihood-free Bayesian methods) on several simulated examples.
While conceptually related methodology has occurred before  \citep{papamakarios2016fast, lueckmann2017flexible,ramesh2022gatsbi}, theory supporting these likelihood-free Bayesian approaches has been lacking. We provide new frequentist-Bayesian theoretical  results for the typical variational distance between the true and approximated posteriors. 
We analyze Wasserstein versions of both the B-GAN algorithm as well as adversarial variational Bayes. With properly tuned neural networks, we show that this distance goes to zero as $n\rightarrow\infty$ with large enough ABC reference tables.

  The outline of our paper is as follows. \Cref{sec:adv_bayes} reviews conditional GANs and introduces the Bayesian GAN sampler together with the reinforcement adjustments. In \Cref{sec:vb}, we describe another local enhancement strategy inspired by implicit variational Bayes. In \Cref{sec:theory}, we investigate the theoretical guarantees of the B-GAN posteriors. The performance of our methods is illustrated on simulated datasets  in \Cref{sec:performance} and a real data application in \Cref{sec:real_data}. In \Cref{sec:discussion}, we conclude with a discussion.

\section{Adversarial Bayes}\label{sec:adv_bayes}

Generative Adversarial Networks (GANs) \citep{goodfellow2016deep} are a game-theoretic construct in artificial intelligence  designed to simulate from likelihoods over complex objects. GANs involve two machines playing a game against one another. A {\em Generator} aims to deceive a {\em Discriminator} by simulating fake samples that resemble observed data while, at the same time, the  Discriminator learns to tell the fake and real data apart. This process iterates until the generated data are indistinguishable by the Discriminator and can be regarded as genuine likelihood samples.  Below, we review several recent GAN innovations and propose an incarnation for  simulation from a posterior as opposed to a likelihood.

\subsection{Vanilla GANs}\label{sec:vGAN}
In its simplest form, GANs learn how to implicitly simulate from the likelihood $p_{\theta_0}^{(n)}(\cdot)$ using only its realizations $\Xn_0\in\mX$ where $\Xn_0\sim p_{\theta_0}^{(n)}$ when $\theta_0$ is unknown.
 Recall that draws from implicit distributions can be obtained by passing a random noise vector $Z^{(n)}\in\mathcal Z$  through  a non-stochastic pushforward mapping  $g_{\b}(\cdot):\mathcal Z\rightarrow\mathcal X$. 
The original GANs formulation \citep{goodfellow2016deep}  involves a Generator, specified by the mapping $g_{\b}(\cdot)$, that attempts to generate samples similar to $\Xn_0$ by filtering $Z^{(n)}$, i.e. $Z^{(n)}\sim \pi_Z^{(n)}, \Xn=g_\beta(Z^{(n)}) \sim p_{\theta}^{(n)}$.

The  coefficients $\b$ in the Genrator are iteratively updated depending on the feedback received from the Discriminator. The Discriminator, specified by a mapping  $d(\cdot):\mathcal X\rightarrow(0,1)$, gauges similarity between 
$\Xn$ and $\Xn_0$ with a discrepancy between their (empirical) distributions.  {Hereafter we use $X$ to denote a generic dataset  as $X\in \mX$ for simplicity of notation.} At a population level,  a standard way of comparing  two distributions, say $P_{\theta_0}^{(n)}$ and $P_{\theta}^{(n)}$, is with the symmetrical Jensen-Shannon divergence 
which can be   written as a solution to a particular optimization problem 
\begin{equation}\label{eq:JS}
\text{JS}(P_{\theta}^{(n)},P_{\theta_0}^{(n)})=\ln 2+0.5\times\sup_{d:\mX\rightarrow (0,1)}\left\{E_{X\sim P_{\theta}^{(n)}}\ln \big[d(X)\big]+E_{X\sim P_{\theta_0}^{(n)}}\ln\big[1-d(X)\big]\right\}.
\end{equation}

The optimal Discriminator $d^*(\cdot)$, solving the optimization \eqref{eq:JS}, is $d^*(X)=p_\theta^{(n)}(X)/\big[p_\theta^{(n)}(X)+p_{\theta_0}^{(n)}(X)\big]$ \citep[Proposition 1]{goodfellow2014generative}. The optimal Generator is then defined through  the optimal value $\b^*$  which leaves the Discriminator  maximally confused, i.e. $d^*(X)=1/2$ and therefore $p_{\theta_0}^{(n)}(X)=p_\theta^{(n)}(X)$ uniformly over $\mathcal X$.

Despite the nice connection to likelihood ratios, original GANs \citep{goodfellow2014generative} may suffer from training difficulties when the discriminator becomes too proficient early on \citep{gulrajani2017improved, arjovsky2017towards}.   Alternative divergences have been implemented inside GANs that are less prone to these issues. For example,  the Wasserstein distance \citep{arjovsky2017towards} also admits a dual representation 
\begin{equation}\label{eq:wasser}
d_W\Big(P_{\theta}^{(n)},P_{\theta_0}^{(n)}\Big)= \sup_{f \in  \mF_W} \left|E_{X\sim P_{\theta}^{(n)}} f(X) - E_{X\sim P_{\theta_0}^{(n)}}f(X)\right|
\end{equation}
where  $\mF_W=\{f: \norm{f}_L\leq 1\}$ are functions with a Lipschitz semi-norm $\norm{f}_L$ at most one. The function $f(\cdot)$ is often referred to as the {\em Critic}. In our implementations, we will concentrate on the Wasserstein version of GANs \citep{arjovsky2017wasserstein}.

\subsection{Conditional GANs for Bayes}\label{sec:cond_GANs}
While originally intended for simulating from likelihoods underlying observed data,  GANs can be extended to  simulating from  distributions {\em conditional on} observed data.
Certain aspects of conditional GANs (cGANs) have been investigated earlier \citep{gauthier2014conditional, mirza2014conditional} in various contexts including causal inference \citep{athey2021using} or non-parametric regression \citep{zhou2022deep}. Our work situates conditional GANs firmly within the context of ABC and likelihood-free posterior simulation.  Before we describe our development in Section \ref{sec:bayesian_GANs}, we first introduce the terminology of cGANs within a  Bayesian context. We will intentionally denote  with $X$  the conditioning variables and focus on the conditional distribution $\pi(\theta\C X)$ for the inferential parameter $\theta\in\Theta$ with a prior $\pi(\theta)$.
Similarly as with vanilla GANs in Section \ref{sec:vGAN},  cGANs again involve two adversaries   represented by two mappings.  We focus on the Wasserstein version here. For readers that are less familiar with GANs and cGANs, we include the JS-based cGAN in \Cref{sec:JS_detail}, which is more intuitive but less stable in training.

\begin{definition}(Generator) \label{def:generator}
We define  a deterministic {\em generative model}   as  a mapping $g:(\mathcal Z\times \mathcal X)\rightarrow \Theta$ that filters  noise random variables $Z\in\mathcal Z$ to obtain samples from an implicit conditional density $\pi_g(\theta\C X)$. This conditional model then defines an implicit joint model $\pi_g(X,\theta)=\pi_g(\theta\C X)\pi(X)$, where $\pi(X)=\int_\Theta p_{\theta}^{(n)}(X)\pi(\theta)\d\theta$ is the marginal likelihood.
\end{definition}

\begin{definition}(Critic) 
	We define a deterministic {\em critic model} as a mapping $f:(\mathcal X\times \Theta)\rightarrow \R$, which estimates the Wasserstein distance between the data pairs $(X,\theta)$ from $\pi(X,\theta)$ and $\pi_g(X,\theta)$.
\end{definition}

The main distinguishing feature, compared to vanilla GANs, is that the conditioning random vector $X$ enters {\em both} mappings.
The task is to flexibly parametrize $g_{\b}(\cdot)$, e.g. using neural networks as will be seen later, in order to approximate the joint density model $\pi(X,\theta)$ as closely as possible.
Ideally, we would like to recover an (oracle) function $g^*: \mathcal Z \times \mX \to \Theta$ such that the conditional distribution of $g^*(Z,X)$ given $X$ is the same as  $\pi(\theta\mid X)$.  The existence of such an oracle $g^*$ is encouraged by the noise-outsourcing lemma from probability theory  \citep{kallenberg2002foundations,zhou2022deep} which we reiterate as Lemma \ref{lem:noise_outsourcing} in Section \ref{sec:JS_detail} for Gaussian $Z$.  
{Although it is not necessary for $\Theta$ and $Z$ to be of the same dimension $d$, we choose $\pi_Z= N(0, I_d)$ to balance the expressiveness of the generator and the discriminator.}

The premise of conditional GANs rests in the fact that matching two joint distributions, while fixing a marginal distribution, is equivalent to matching conditional distributions. This implies  that $g^*(Z, X)$, given $X$, is indeed distributed according to  $\pi(\theta\C X)$. The question remains how to find the oracle mapping $g^*$ in practice. In the Wasserstein cGAN, the minimax game is characterized as 
\begin{equation}\label{eq:wcgan}
(g^*,f^*)=\arg\min\limits_{g \in \mG}\max\limits_{f \in \mF} \left|  E_{X\sim \pi(X), Z\sim \pi_Z} f(X,g(Z,X))- E_{(\theta, X)\sim \pi(X,\theta)}   f(X,\theta)\right|,
\end{equation}
where, using  $\mF=\mF_W$ (1-Lipschitz in $\theta$), $g^*$ minimizes the Wasserstein distance between $\pi_g(X,\theta)$ and $\pi(X,\theta)$. When $\mG$ and $\mF$ are expressive enough, the minimum is achieved if and only if the Wasserstein distance between $\pi_{g^*} (X, \theta)$ and $\pi(X, \theta)$ is zero, which equivalently means $\pi_{g^*} (X, \theta)=\pi(X, \theta)$. With the same marginal $\pi(X)$, the solution $g^*$ essentially satisfies 
$\pi_{g^*} (\theta\mid X )= \pi(\theta\mid X)$.

\subsection{Generative Bayesian Inference with GANs}\label{sec:bayesian_GANs}

To implement the adversarial game \eqref{eq:wcgan} in practice, one needs to (a) parametrize $\mathcal F$ and $\mathcal G$ (for instance using neural networks) and (b) to replace the expectations in \eqref{eq:wcgan} with empirical counterparts.
 Both of these steps will introduce approximation error. We provide theoretical insights later in Section \ref{sec:theory}.
We assume that the generator class $\mG=\{g_{\b}:(\mathcal Z\times \mathcal X)\rightarrow \Theta\,\,\text{where}\,\, \b\in \R^G\}$ is parametrized with $\b$   and the critic class $\mF=\{f_{\bomega}:(\mathcal X\times \Theta)\rightarrow \R\,\,\text{where}\,\, \bomega\in \R^C\}$ is parametrized with $\bomega$. We use neural networks (with ReLU activations)   and support this choice with  theory (Corollary \ref{cor:dnn}).

For the empirical version, one can  use the ABC reference table which consists of  simulated data pairs $\{(\theta_j,\Xn_j)\}_{j=1}^T$ 
generated from the joint  model $\pi(\Xn,\theta)=p_{\theta}^{(n)}(\Xn)\pi(\theta)$ under the prior $\pi(\theta)$.  
 Specifically, we use $n$ to refer the total data dimensionality. Each draw $\Xn_j\sim P^{(n)}_{\theta}$ can be either $n^*$ iid $q$-dimensional observations, or $q$-dimensional times-series of length $n^*$, such that $n=n^**q$.
From now on, we will simply denote $\Xn_j$ with $X_j$, similarly for $\Xn_0$ and $\Xn$.

We can break the relationship between $\theta$ and $X$ in the ABC reference table by contrasting these data pairs  with another dataset consisting of $\{(g(Z_j,X_j),X_j)\}_{j=1}^T$ where $Z_j$'s have been sampled from $\pi_Z(\cdot)$. 
Keeping the same $X_j$'s essentially means that we are keeping the same marginal.   The dataset $\{(g(Z_j,X_j),X_j)\}_{j=1}^T$  encapsulates iid  draws from $\pi_g(X,\theta)$. These two datasets can be then used to approximate the expectations in \eqref{eq:gan_obj} and \eqref{eq:wcgan}.
A high-level description of an algorithm for solving  the   Jensen-Shannon (JS) version of the game from Lemma \ref{lemma:cGAN} is outlined in \Cref{alg:KL-GAN}
(Appendix Section \ref{sec:JS_cGAN}).  
 As mentioned earlier, the JS version may suffer from training issues \citep{arjovsky2017towards}.  We provide an illustration of such  issues using convergence diagnostics on a toy example  in Section \ref{sec:JS_cGAN}.

\begin{figure}[!t]
\centering
\begin{minipage}{\linewidth}
\begin{algorithm}[H]
	\caption{\em  Vanilla B-GAN   }\label{alg:WGAN}
\centering
	
\resizebox{0.8\linewidth}{!}{
		\begin{tabular}{l l}
\hline
			\multicolumn{2}{c}{\bf Input \cellcolor[gray]{0.6} }\\	
				\hline
			\multicolumn{2}{c}{Prior $\pi(\theta)$, observed data $X_0$ and noise distribution $\pi_Z(\cdot)$}\\
				\hline
			\multicolumn{2}{c}{\bf Training \cellcolor[gray]{0.6}}\\		
			\hline
			\multicolumn{2}{c}{Initialize  network parameters {$\bomega^{(0)}=0$ and $\b^{(0)}=0$}}\\
	
			\multicolumn{2}{c}{  \cellcolor[gray]{0.9}{\bf Reference Table} }\\
			\multicolumn{2}{l}{ For $j=1,\dots, T$:\qquad Generate   $(X_j,\theta_j)$ where $\theta_j\sim\pi(\theta)$ and $X_j \sim P_{\theta_j}^{(n)}$. } \\
			\multicolumn{2}{c}{ \cellcolor[gray]{0.9}{\bf Wasserstein GAN}  }\\
			\multicolumn{2}{l}{For $t=1,\dots, N$:}\\
			\quad {\bf Critic Update} ($N_\text{critic}$ steps): For $k=1, \ldots, N_\text{critic}$ \\
			\qquad \qquad  Generate   $Z_{j}\sim\pi_Z(z)$ for  $j=1,\ldots, T$.\\
			\qquad \qquad  Update $ \bomega^{(t)}$ by applying stochastic gradient descent on     \eqref{eq:wgan_obj}. &\\ 
			\quad {\bf Generator Update} (single step) \\
			\qquad  \qquad  Generate noise $Z_j\sim\pi_Z(z)$ for  $j=1,\ldots, N$. \\
			\qquad  \qquad Update $\b^{(t)}$ by applying stochastic gradient descent on    \eqref{eq:wgan_obj}. &\\
			\multicolumn{2}{c}{ \cellcolor[gray]{0.6}{\bf Posterior Simulation}: }\\
			\multicolumn{2}{l}{ For $i=1,\dots, M$:\qquad	Simulate	 $Z_i\sim\pi_Z(z)$ and set $\wt \theta_i=  g_{\b^{(N)}}(Z_i, X_0)$.}
				\end{tabular}}
\end{algorithm}
\end{minipage}
\end{figure}

In our implementations, we thereby consider the empirical version of  \eqref{eq:wcgan} which again involves simulated datasets $\{(\theta_j,X_j)\}_{j=1}^T$  and   
$\{Z_j\}_{j=1}^T \iid \pi_Z(\cdot)$  to obtain
\begin{equation}\label{eq:wgan_obj}
\wh\b_T=\arg\min\limits_{\b: g_{\b}\in \mG}\left[\max\limits_{\bomega: f_{\bomega}\in \mF_W}\left|\sum_{j=1}^T f_{\bomega}\big(X_j,g_{\b}(Z_j,X_j)\big) - \sum_{j=1}^T f_{\bomega}(X_j,\theta_j)\right|\right].
\end{equation}
One particular way of solving this problem is summarized in \Cref{alg:WGAN}. 
In terms of the constraint on $\bomega$ to ensure the Lipschitz condition   $\|f_{\bomega}\|_L\leq 1$,
 the original  Wasserstein GANs implementation \citep{arjovsky2017towards} used gradient clipping, which may lead to computational issues.  Alternatively, \citet{gulrajani2017improved}   imposed a soft version of the constraint with a penalty
 on the  gradient of  $f_{\bomega}$  with respect to a convex combination of the two contrasting datasets. We adopt the one-sided penalty same as \citet{athey2021using}, with details in \Cref{sec:implementation}. To stabilize gradients,  the critic is updated multiple times (ideally until convergence) before each update of the generator, which is different from  \Cref{alg:KL-GAN} where such a stabilization may not be feasible  \citep{arjovsky2017wasserstein}.

Our  GAN  framework for Bayesian posterior simulation (i.e. \Cref{alg:WGAN} further referred to as B-GAN) consists of a neural sampler $g_{\wh\b_T}(Z,X)$ which generates samples from an approximate posterior, i.e. conditioning on the observed data $X_0$, by filtering iid noise as follows
\begin{equation}\label{eq:simulate1}
\wt \theta_j=g_{\wh\b_T}(Z_j,X_0)\quad\text{where}\quad Z_j\iid\pi_Z(\cdot)\quad\text{for}\quad j=1,\dots,M.
\end{equation}
When $g_{\wh\b_T}$ is close to $g^*$, the samples $\wt\theta_j$ will arrive approximately from $\pi(\theta\C X_0)$. One of the practical appeals of this sampling procedure is that, once the generator has been trained, the simulation cost is negligible.
Note that our observed data $X_0\sim p_{\theta_0}^{(n)}(X)$ are {\em not involved} in the training stage, {\em only} in the simulation stage \eqref{eq:simulate1}. 
We illustrate \Cref{alg:WGAN} on a toy example.  The configurations of our B-GAN networks and optimization hyperparameters  are described in  \Cref{sec:implementation_toy}.  Note that our B-GAN approach is different from the Bayesian GAN of \citet{saatci2017bayesian}, which places a probabilistic structure over the GAN parameters and update the parameteters via sampling instead of pure optimization. Our work utilizes the GAN framework for likelihood-free Bayesian inference and the training scheme is optimization-based.

\begin{figure}[!t]
\centering
\includegraphics[width=0.95\textwidth]{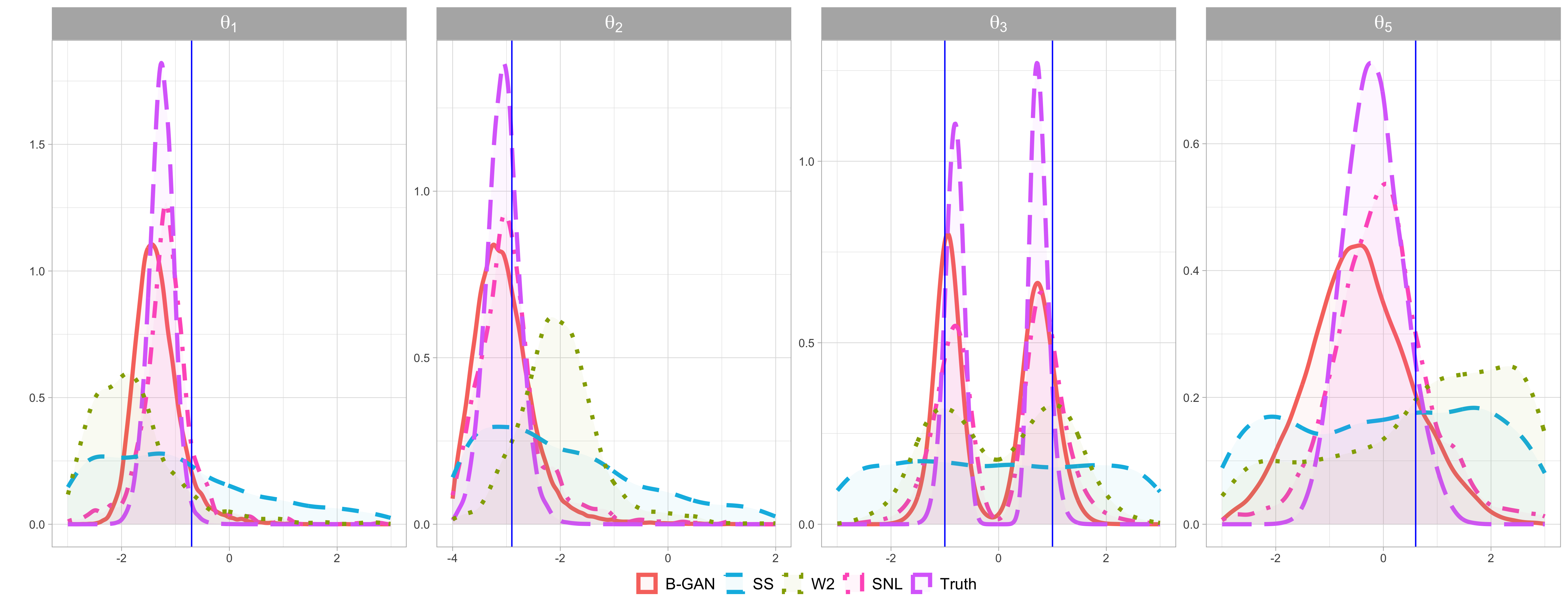}
\caption{\footnotesize The approximate d posteriors given by B-GAN, SNL, SS, and W2 for the toy example. The results for $\theta_4$ are similar to $\theta_3$ and thus not shown here. }\label{fig:gauss_ex1}
\end{figure}

\begin{example}[Toy Example]\label{ex:toy}
This toy example  (analyzed earlier in  \citep{papamakarios2019sequential}) exposes the fragility of ABC methods in a relatively simple setting. 
The experiment entails  $n=4$  two-dimensional Gaussian observations  
$X =(x_{1}, x_{2}, x_{3}, x_{4})'$ with $x_j\sim \mN(\mu_{\btheta}, \Sigma_{\btheta})$ parametrized by $\btheta =(\theta_1,\theta_2,\theta_3,\theta_4,\theta_5)'$, where
\vspace{-0.5cm}
 $$ \mu_{\btheta}=(\theta_1, \theta_2)'  \quad \text{and}\quad  \Sigma_{\btheta}= \left(\begin{matrix}
 s_1^2 &\rho s_1 s_2\\
\rho s_1 s_2  &  s_2^2\\
 \end{matrix}
 \right)
 $$
with $s_1=\theta_3^2,  s_2=\theta_4^2$ and  $\rho=\text{tanh}(\theta_5)$. The parameters $\btheta$ are endowed with a uniform prior on $[-3,3]\times[-4,4]\times [-3,3]\times [-3,3]\times [-3,3]$. Approximating the posterior can be tricky because the signs of parameters $\theta_3$ and $\theta_4$ are not  identifiable, yielding multimodality.
We generate  $X_0$ with parameters $ \btheta_0 = (-0.7, -2.9, -1.0, -0.9, 0.6)'$. Since we have access to the true posterior, we can directly compare our posterior reconstructions with the truth. 
We compare B-GAN with (1) ABC using naive summary statistics (SS) (mean and variance), (2) 2-Wasserstein distance ABC \citep{bernton2019approximate}, and (3) Sequential Neural Likelihood (SNL) \citep{papamakarios2019sequential} with the default setting suggested by the authors. We provide all implementation details in \Cref{sec:implementation_toy}. For each method, we obtained $M=1\,000$ samples and plotted them in \Cref{fig:gauss_ex1}. Our B-GAN approach as well as SNL nicely capture the multimodality.
 Although the B-GAN posterior quickly locates a neighborhood around true values $\theta_0$, its variance appears to be bigger than the true posterior, suggesting an approximation gap. This is expected because B-GAN is trained to perform well on average for any $X$, {\em not} necessarily for $X_0$. This motivates  our two refinement strategies: one based on active learning (Section \ref{sec:gan_ri}) and one based on variational Bayes (Section \ref{sec:vb}). 
\end{example}
Similarly as  with default ABC techniques, our B-GAN approach is not query-efficient, i.e. many prior guesses $\theta_j$ in the training dataset may be too far from the interesting areas with a likelihood support leaving only a few observations to learn about the conditional $\pi(\theta\C X_0)$. The next section presents a two-step approach which uses $X_0$ for proposal construction in the ABC reference table to obtain more valuable data-points in the reference table.

\subsection{Two-step Refinement} \label{sec:gan_ri}

\begin{figure}[!t]
\centering
\begin{minipage}{.8\linewidth}
\begin{algorithm}[H]
	\caption{\em  2-Step Refinement (B-GAN 2step) }\label{alg:WGAN-RL}
\centering
	\resizebox{\linewidth}{!}{	
			\begin{tabular}{l l}
\hline
			\multicolumn{2}{c}{\bf INPUT \cellcolor[gray]{0.6} }\\	
				\hline
			\multicolumn{2}{c}{Prior $\pi(\theta)$, observed data $X_0$ and noise distribution $\pi_Z(z)$}\\
					\hline
			\multicolumn{2}{c}{\bf Training \cellcolor[gray]{0.6}}\\		
			\hline
			\multicolumn{2}{c}{Initialize  network parameters {$\bomega^{(0)}=0$ and $\b^{(0)}=0$}}\\

			\multicolumn{2}{c}{\bf Pilot Run \cellcolor[gray]{0.9}}\\		
			\multicolumn{2}{c}{Apply  \Cref{alg:WGAN} with $\pi(\theta)$ to learn $\wh g_{pilot}(\cdot)$ }\\
			\multicolumn{2}{c}{\bf  \cellcolor[gray]{0.9}Reference Table}\\
				\qquad  Generate  pairs $\{(X_j,\theta_j)\}_{j=1}^T$ where $\theta_j=\wh g_{pilot}(Z_j,X_0)$ for $Z_j\sim\pi_Z$ and $X_j \sim P_{\theta_j}^{(n)}$.  &\\

			\multicolumn{2}{c}{\bf  \cellcolor[gray]{0.9}Refinement}\\
			
				\multicolumn{2}{c}{Apply Wasserstein GAN  step in \Cref{alg:WGAN} on $\{(X_j,\theta_j)\}_{j=1}^T$ and return $g_{\wh\b_T}(\cdot)$}\\
				
			\multicolumn{2}{c}{\bf  \cellcolor[gray]{0.9}Posterior Simulation}\\
			\qquad Simulate	 $\{Z_i\}_{i=1}^M\iid\pi_Z(z)$ and set $\wt \theta_i=  g_{\wh \b_T}(Z_i, X_0)$.\\
			\qquad Estimate $\hat w_i$ using either \eqref{eq:kde_plugin} or \eqref{eq:weights}.\\
	\hline
		\multicolumn{2}{c}{\bf OUTPUT \cellcolor[gray]{0.6}}\\			
\hline
		\multicolumn{2}{c}{ Pairs of posterior samples and weights $(\wt \theta_1,\hat w_1), \ldots, (\wt \theta_M, \hat w_M)$ }\\
				\end{tabular}}
\end{algorithm}
\end{minipage}
\end{figure}

Our chief goal is to find a high-quality approximation to the conditional distribution $\pi(\theta\C X)$ evaluated at the observed data $X=X_0$, not necessarily uniformly over the entire domain $\mathcal X$.
However, the ABC reference table $\{(\theta_j,X_j)\}_{j=1}^T$  may not contain enough data points $X_j$ in the vicinity of $X_0$ to train the simulator when the prior $\pi(\theta)$ is too vague.  
This can be remedied by generating a reference table using an auxiliary proposal distribution $\wt\pi(\theta)$ which is more likely to produce pseudo-observations $X_j$ that are closer to $X_0$. For example, a pilot simulator $g_{\wh \b_T}(Z,X_0)$ in \eqref{eq:simulate1} obtained from \Cref{alg:WGAN} under the original prior $\pi(\theta)$ can be  used to guide  simulations in the next round to  sharpen the reconstruction accuracy around $X_0$ \citep{papamakarios2016fast}.
Training the generator  $\wt g_{\wh \b_T}$, simulating from an approximation to $\wt\pi(\theta\C X_0)\propto p_{\theta_0}^{(n)}(X_0)\wt\pi(\theta)$, under the `wrong' prior  can be corrected for by  importance re-weighting with weights 
$r(\theta)=\pi(\theta)/\wt\pi(\theta)$.  Since the posterior $\wt\pi(\theta\C X_0)r(\theta)$ is proportional to $\pi(\theta\C X_0)$,
reweighting  the resulting samples $\wt \theta_j=\wt g_{\wh \b_T}(Z,\Xn_0)$  with weights  $w_j=r(\theta_j)$ will produce samples from an approximation to the original posterior (after normalization).
Algorithm \ref{alg:WGAN-RL} summarizes this two-step strategy, referred to as B-GAN-2S.

Since the  proposal density $\pi_{g_{\wh\b_T}}(\theta\C X_0)$ obtained in the pilot run {\em may not} have an analytical form, computing the importance weights $w_j=\pi(\theta_j)/\pi_{g_{\wh\b_T}}(\theta_j\C X_0)$ directly may not feasible. 
The density ratio $r(\theta)$ can  be approximated, however. For example, with a tractable prior $\pi(\theta)$ the importance weights $w_j$ can be estimated by
\begin{equation}\label{eq:kde_plugin}
\hat w_j= \frac{\pi(\theta_j)}{\hat \pi_{g_{\wh\b_T}}(\theta_j\C X_0)}
\end{equation}
where  $\hat \pi_{g_{\wh\b^T}}(\theta_j\C X_0)$ is a plugged-in kernel density estimator (KDE) estimator  \citep{terrell1992variable}.  This is particularly useful and efficient when the parameter dimension  is low. 
When the prior is also not tractable but simulatable,  the weights $w_j$ can be  directly estimated from classification by contrasting datasets $\wt\theta_j\sim\wt\pi(\theta)$ (label `0') and $\theta_j\sim\pi(\theta)$ (label `1'). 
In particular, training a classifier $\tilde D$ (see e.g.  \citet{cranmer2015approximating, durkan2020contrastive, gutmann2012noise}  for the explanation of the 'likelihood-ratio-trick' for classification based estimators), we can obtain
\begin{equation}\label{eq:weights}
\hat w_j=\frac{\tilde D(\wt \theta_j)}{1-\tilde D(\wt \theta_j)}. 
\end{equation}

\citet{papamakarios2016fast} used mixture density networks estimators of the conditional distribution $\pi(\theta\C x)$ after a pilot run to learn the proposal distribution $\wt \pi(\theta)$. In order to obtain an analytically tractable Gaussian mixture representation, their proposal $\wt\pi(\theta)$ has to be Gaussian and it cannot be narrower than any of the mixture components. We do not require such assumptions.
\citet{lueckmann2019likelihood} instead propose to directly incorporate the weights $r(\theta)$ inside training and relax the Gaussianity assumption to avoid the variance instability.  
Similarly as in \citet{lueckmann2019likelihood}, we could also incorporate weights $\hat w_j$ inside the objective function, e.g. multiplying each summand in \eqref{eq:wgan_obj} by $\hat w_j$. 

In the two-step refinement, the observed data $X_0$ only contribute to the proposal distribution $\wt\pi(\theta)$, not the training of the simulator $g_{\wh\b}(\cdot)$.
In  Section \ref{sec:vb}, we consider a variational Bayes variant which does involve $X_0$ in training. 

\begin{figure}[!t] 
\centering
\includegraphics[width=0.95\textwidth]{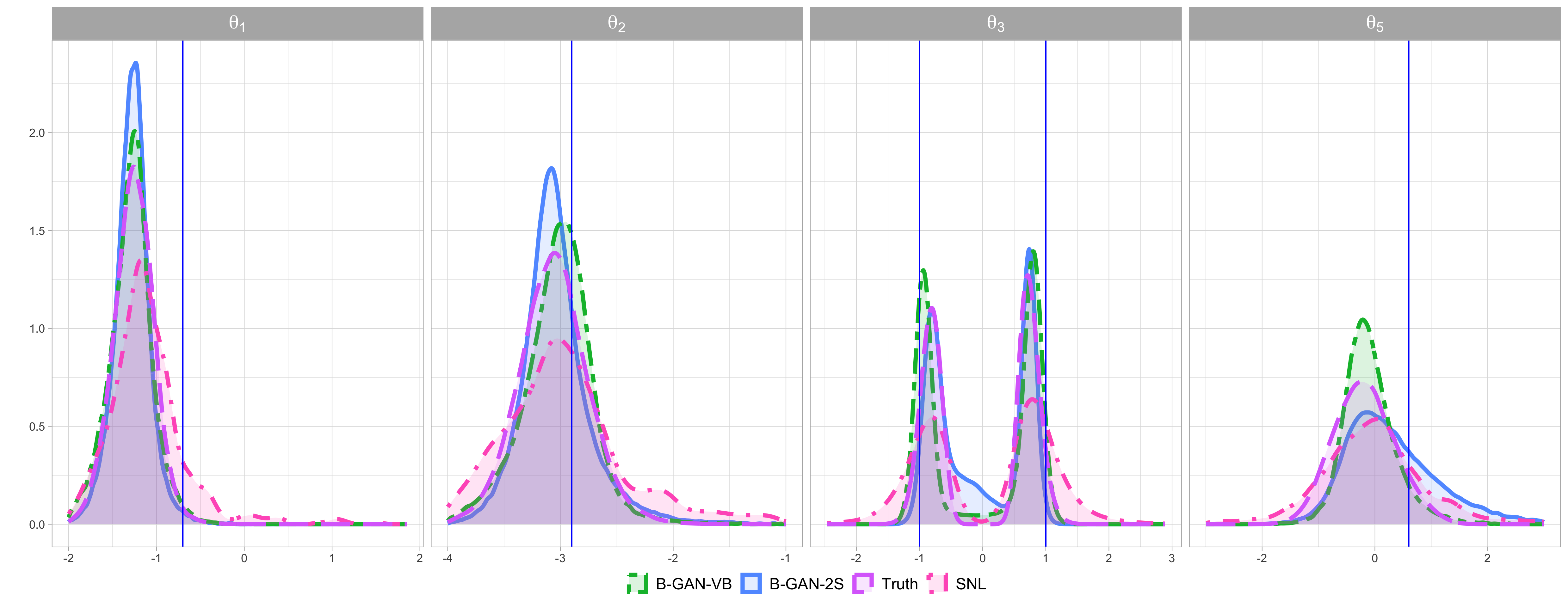}
\caption{\small Posterior densities under the Gaussian model. The true parameter is $\btheta_0=(-0.7, -2.9, -1.0, -0.9, 0.6)'$, while the signs of $\theta_3$ and $\theta_4$ are not identifiable.}\label{fig:gauss_ex2}
\end{figure}

\begin{example}[Toy Example Continued]\label{ex:toy_RL}
We continue our exploration of the Toy  Example \ref{ex:toy}.  We now use the output from B-GAN  (\Cref{alg:WGAN}) under the original uniform prior as a proposal distribution $\wt\pi(\theta)$ and generate training data  $\{(X_j,\theta_j)\}_{j=1}^T$ with a marginal $\wt\pi(X)=\int_{\mX}p_{\theta}^{(n)}(X)\wt\pi(\theta)d\theta$. To revert the generated posterior samples to the original uniform prior, we perform reweighting by $r(\theta)=\pi(\theta)/{\wt\pi(\theta)}$ using the kernel density estimator of $\wt\pi(\theta)$ obtained from the pilot B-GAN run in {\Cref{alg:WGAN}.  The number of training points used in the second step is $T=50\,000$.} We note that much smaller $T$ could be used if one were to perform more sequential refinements, not just one.  The re-weighted (and normalized) posterior is plotted against the truth, SNL and a variational Bayesian variant (introduced later in Section \ref{sec:vb}) in \Cref{fig:gauss_ex2}. Both B-GAN and B-GAN-2S provide tighter approximations to the true posterior  We repeated the experiment $10$ times and report the Maximum Mean Discrepancies (MMD) \citep{gretton2012kernel} between the true posterior and its approximations obtained from {$M=1\,000$ posterior draws} for each method in \Cref{fig:gauss_mmd}. Satisfyingly, B-GAN-2S yields the smallest MMD. We support this encouraging finding with our theoretical results in \Cref{sec:theory}. 

\begin{figure}[!t]
\centering
\includegraphics[width=0.8\textwidth, height=0.25\textwidth]{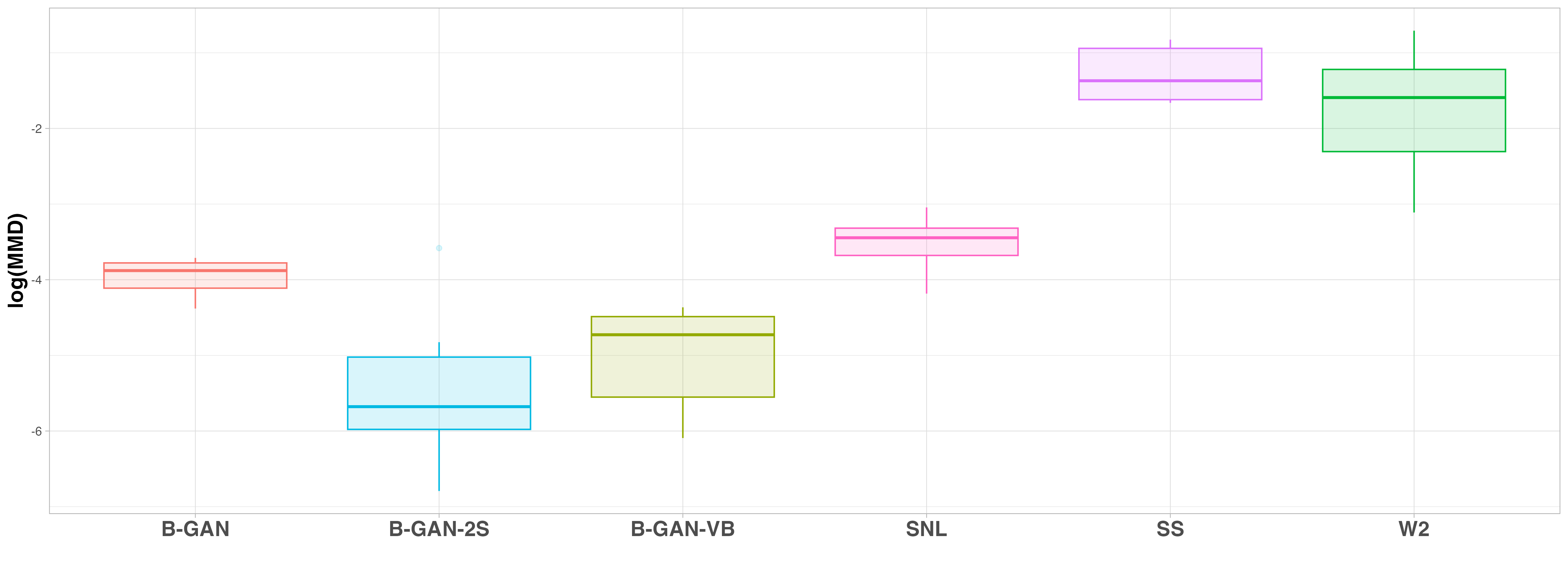}
\caption{\footnotesize Maximum Mean Discrepancies (MMD,  log scale) between the true posteriors and the approximated posteriors. The box-plots are computed from 10 repetitions. }\label{fig:gauss_mmd}
\end{figure}

\end{example}

\subsection{The Case of i.i.d. Observations}\label{sec:iid}

In this subsection, we begin by reviewing three key challenges that arise when the total data dimensionality 
$n$ is large. We then demonstrate how these challenges can be mitigated more effectively when the data consists of i.i.d. observations with large $n^*$ but moderate $q$. 

The first challenge stems from the fact that a large $n$ demands more complex and  expressive neural networks, which are often harder to train and require generating even bigger ABC reference tables. Second, as $n^*$ increases, the posterior becomes more concentrated, making it difficult to efficiently identify high posterior density regions starting from a non-informative prior. Lastly, the storage and memory requirements grow substantially when simulating or processing the same number of ABC draws with larger $n^*$.

The most straightforward and naive implementation is to flatten the $n^*\times q$ data matrix into a vector and condition on the entire vector in our GAN sampler. However, this approach inherits all three challenges and should be avoided when $n=n^**q$ is large.  Instead, we propose two alternative implementations tailored to i.i.d. datasets with large $n^*$, which help alleviate some of these computational burdens.

The first solution is to split the dataset into multiple batches, and use sequential update to parse only one batch at a time. For example, for i.i.d. dataset with very large $n^*$, we partition the data into batches of size $B$, parse the data into vectors of size $B\times q$, then use the posterior from the previous batch as a prior for the next. This is coherent with the Bayesian framework. The neural network can also be recycled from previous batch, which should further relieve the computational burden. This can also be applied for some dependent datasets, such as the Lotka-Volterra model, whose dependency is solely determined by the observation from last timestamp. For this type of model, sequential parsing is still valid with partition along the time horizon. For this solution, it helps with second challenge and the third challenge, and can mildly relieve the first challenge. The sequential update allows us to gradually approach the concentrated high posterior density region and reduces the size of datasets that needs to be generated each time to be $T*B*q$ instead of $T*n^**q$. Since the length of the flatten data vector is lower than $n^**q$, it requires less complex network structrues compared to the naive implementation, but this can still quickly grow overwhelming when $B$ also becomes really large. 

 The second solution leverages the exchangeable nature of i.i.d. observations. For an ideal parameter-efficient generator network, it should output the same conditional distribution regardless of the order of exchangeable inputs. In another words, it should satisfy
 \begin{align*}
 g_\beta(Z, [X_1, X_2, \ldots, X_n]) =g_\beta(Z, [X_{\pi(1)},X_{\pi(2)}, \ldots,X_{\pi(n)} ])
 \end{align*}
where $\pi(\cdot)$ can be any permutation. Similarly for the critic function, it should be able to provide the same Wasserstein distance estimate under any permutation of the conditional inputs.
 
For this parameter-efficient network, we adopt the deep set architecture from Zaheer et al. (2017), whose output is invariant to the order of inputs. To accommodate for the  dependency between the parameter and each individual observation, we parametrize the generator and critic networks as 
\begin{align*}
g_\beta(Z, [X_1, X_2, \ldots, X_n]) &= g^{(1)}_{\beta_1} (Z, \sum_{i=1}^n g^{(2)}_{\beta_2}(Z, X_i))\\
f_\omega(\theta, [X_1, X_2, \ldots, X_n]) &=  f^{(1)}_{\omega_1} (\theta, \sum_{i=1}^n f^{(2)}_{\omega_2}(\theta, X_i))
\end{align*}
where $g^{(1)}_{\beta_1},  g^{(2)}_{\beta_2}, f^{(1)}_{\omega_1}, f^{(2)}_{\omega_2}$ are sub-network structures inside the generator and critic functions. These structures allow us to efficiently learn conditional distribution on an exchangeable dataset without exploding network complexity.

We illustrate the efficiency of the three different implementations on the benchmark SBI example, M/G/1-queuing model. We follow the setting in \citet{fearnhead2011constructing}. Each observation is a 5-dimensional vector consisting of the first five inter-departure times $x_i = (x_{i1}, x_{i2}, x_{i3}, x_{i4}, x_{i5})'$. In the model, the service times $u_{ik}$ follow a uniform distribution $U[\theta_1, \theta_2]$, and the arrival times $w_{ik}$ are exponentially distributed with the rate $\theta_3$. We only observe the interdeparture times $x_i$, given by the process $x_{ik}=u_{ik}+\max(0, \sum_{j=1}^k w_{ij} - \sum_{j=1}^{k-1}x_{ij})$. We set true parameter to be $\theta_0=(1,5,0.2)$ and the prior on $(\theta_1, \theta_2-\theta_1, \theta_3)$ to be uniform on $[0,10]^2\times[0, 0.5]$.

To show how different implementations scale with $n$, we show three examples with $n=50, 100, 200$ in \Cref{fig:mg1}. Here we consider three treatments: (1) deep set structure (deep set), (2) sequential update by splitting $n^*$ observations into $5$ batches (sequential), and (3) stack all $n$ observations into one long vector (stacked). To show how the performance of these implementations vary with sample size $n^*$, each implementation uses the same network structure, i.e., same network complexity across different $n^*$. Details of the implementations are provided in \Cref{app:iid_implementation}.

\begin{figure}[!ht]
\centering
\includegraphics[width=0.75\textwidth]{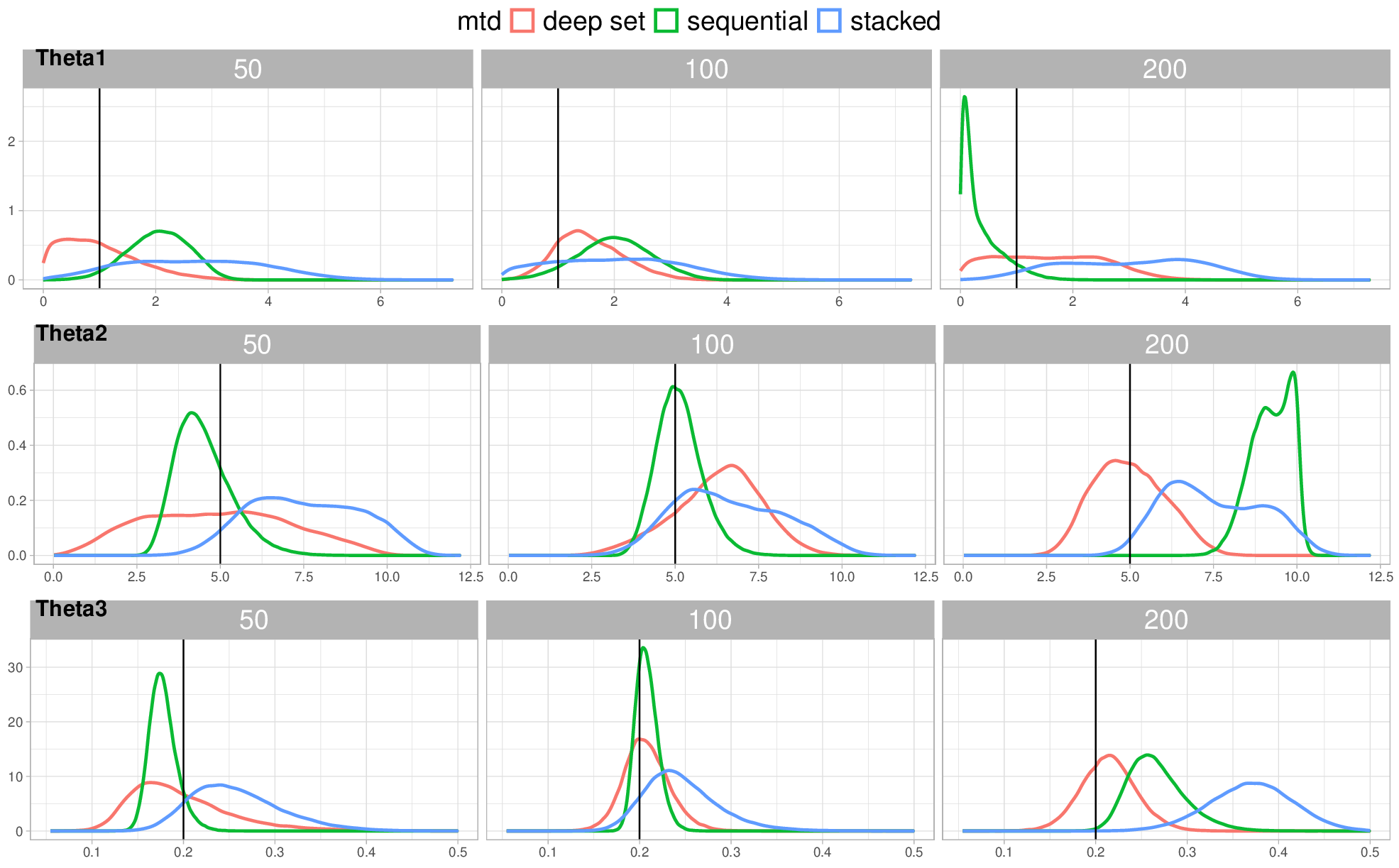}
\caption{Posterior for M/G/1-queuing under different implementations of B-GAN}\label{fig:mg1}
\end{figure}

From \Cref{fig:mg1}, we observe that for the deep set structure, the approximated posterior for $\theta_2$ and $\theta_3$ gets more and more concentrated when $n^*$ increases, and the approximated posterior covers the true values in all three $n^*$, showing that using the parameter-efficient network structure can scale better with $n^*$ for i.i.d. datasets. For the simple stacking implementation, since the input dimension becomes $5n^*$, it suffers from all three challenges we mention earlier. We observe that the network is able to give some rough approximation when $n^*=50$ and $n^*=100$ but fails to learn when $n^*=200$. For the batch sequential update, the input dimension is $5n^*/5=n^*$, so it is less affected by the growing sample size compared with the stacking implementation. Since the batch updates also help gradually learning the high posterior density area, the batch update implementation produces the most concentrated posterior for $n^*=50$ and $n^*=100$. However, it also fails to learn when $n^*=200$. This could happen due to  the cumulated errors in the batch update.

For the deep set structure, it largely circumvent the first challenge of exploding network complexity and partially relieve the second challenge from incorporating the statistical knowledge of exchangeability into the design. It cannot avoid the last challenge of growing data size which could be overwhelming for users with limited storage or memory. We recommend combing the deep set structure with  the sequential update for extremely large $n$ to achieve maximum efficiency.

\section{GAN Variational Bayes}\label{sec:vb}

Variational Bayes (VB) is an optimization-centric Bayesian inferential framework based on minimizing a divergence between approximate and real posteriors.
 VB typically reduces the infinite-dimensional optimization problem to a finite-dimensional one by forcing approximations into structured parametric  forms. 
Implicit distributions (defined as probabilistic programs) have the potential to yield  finer and tighter VB posterior approximations \citep{huszar2017variational, kingma2013auto, tran2017hierarchical, titsias2019unbiased}. This section highlights the connection between the  implicit variational Bayes inference and our B-GAN framework (Algorithm \ref{alg:WGAN} and \ref{alg:WGAN-RL}),  both of which target the posterior.

The VB setup consists of an (intractable) likelihood  $p_\theta^{(n)}(\cdot)$, prior $\pi(\theta)$  and a class of posterior approximations $q_{\b}(\theta\C X_0)$ indexed by $\b$. We are recycling the notation of $\b$ here to highlight the connection between the GAN generator and  the implicit variational  generator.
The goal of the VB approach is to find a set of parameters $\b^*$ that maximize  the evidence lower bound (ELBO) to the marginal likelihood 
\begin{align}\label{eq:elbo1}
\log \pi(X_0) \geq \mathcal L(\b)\equiv \int\log\left(\frac{\pi(X_0,\theta)}{q_{\b}(\theta\C X_0)}\right)q_{\b}(\theta\C X_0)\d\theta.
\end{align}
The tightness of the inequality (for the best approximating model within the class) increases with expressiveness of the inference model $q_{\b}(\cdot)$, where the equality occurs when 
 $q_{\b}(\theta\C X_0)=\pi(\theta\C X_0)$. 
Writing the ELBO $\mL(\b)= -\text{KL}\left( q_{\b}(\theta\C X_0)\| \pi(\theta\mid X_0)\right)+C$ in terms of Kullback-Leibler discrepancy, we have

\begin{equation}\label{eq:psistar}
\b^*=\arg\max_{\b} \mL(\b)=\arg\min E_{\theta \sim q_\beta(\theta\mid X_0)} \log\left(\frac{\pi(\theta\mid X_0)}{q_\beta(\theta\mid X_0)} \right)
\end{equation}

We estimate this expectation using $K$ Monte Carlo draws $\theta_k \sim q_\beta(\theta\mid X_0), k=1, \ldots, K$, yielding the empirical objective
\begin{equation}\label{eq:beta_VB}
\hat\b_\text{VB}=\arg\min_{\b} \frac{1}{K} \sum_{\theta_k \sim  q_\beta(\theta\mid X_0)} \log\left(\frac{\pi(\theta_k\mid X_0)}{ q_\beta(\theta_k\mid X_0)}\right)
\end{equation}
where  $q_\beta(\theta\mid X_0)$ is generated implicitly via $Z_k \sim \pi_Z$ and $ \theta_k=g(Z_k, X_0)$.

However, contrary to conventional VB algorithms, we cannot directly evaluate \eqref{eq:beta_VB} since neither $g_\b(\theta\mid X_0)$ nor $\pi(\theta\mid X_0)$ are tractable. To address this, we leverage  contrastive learning \citep{bickel2007discriminative} to estimate the density ratio between two implicit distributions. Here we first show how the loss function in \eqref{eq:beta_VB} can be re-written with density ratio estimator, and then we delve into details on how we actually implement that due to training stability concerns.

Recall the ``oracle classifier'' $d^*_{g_\b}$ in \Cref{lemma:cGAN}, we have
\[
\frac{d^*_{g_\b}(X,\theta)}{1-d^*_{g_\b}(X,\theta)}=\frac{\pi(X,\theta)}{q_{\b}(\theta\C X)\pi(X)},
\]
which allows  us to re-write the VB loss in \eqref{eq:beta_VB} as
\begin{equation}\label{eq:beta_VB_oracle}
\hat\b_\text{VB}=\arg\min_{\b} \frac{1}{K} \sum_{k=1}^K \text{logit} (d^*_{g_\b}(X_0, \theta_k)).
\end{equation}
Since the ``oracle classifier'' $d^*_{g_\b}$ is unavilable, it has to be estimated by solving a classification problem using a particular class of classifiers $\mathcal D=\{d_{\bphi}:(\mX\times\Omega)\rightarrow (0,1); \bphi\in\R^C\}$, for instance, neural networks parameterized by $\phi$. This adversarial idea -- replacing components of the ELBO with learned discriminators -- has been explored in earlier works \citep{mescheder2017adversarial,huszar2017variational,tran2017hierarchical}, typically for latent variable models. Here, instead of performing maximum likelihood estimation  \citep{mescheder2017adversarial}, we  focus purely on VB  inference with approximate posteriors $q_{\b}(\theta\C X_0)$ when $\theta$ is assigned a prior. 

While \eqref{eq:beta_VB_oracle} provides a clear and elegant formulation, it encounters two training challenges: (1) classifier estimation when $\theta_0$ is unknown, and (2) instability under KL loss. To tackle these issues, we make two modifications: (1) we use the VB loss as a regularization term, and (2) we replace the discriminator function with the critic function. We elaborate the details below.

To address (1), note that we cannot train a classifier using the pair $\{\theta_0, X_0\}$ and $\{g_\beta(Z, X_0), X_0\}$ since $\theta_0$ is unknown. Instead, we train $d_{\phi} (X, \theta)$ on samples from  the joint $\pi(X, \theta)$ versus $q_\b(\theta\mid X)\pi(X)$, and then plug in $X_0$ as the condition variable. This gives rises to an adversarial game: the {\em Generator} $g_{\b}(Z,X)$ maximizes the ELBO conditioned on $X_0$, and the {\em Discriminator} $d_{\bphi}(X,\theta)$ distinguishes between true and generated joint distributions. However, such training schedule makes it challenging to maintain the balance between the generator network and the discriminator network, which is extremely intricate for GAN training. When the generator is learning locally and the critic function is learning globally, the balance is easily broken when the generator misbehaves away from the vicinity of $X_0$.

 We resolve this issue by incorporating the ELBO a regularization term in addition to the original adversarial loss, which encourages the generator improves locally near $X_0$. Basically the training alternates between updating the discriminator to maximizing the GAN loss and updating the generator to minimize the \{GAN loss +$\lambda\cdot$ELBO loss\}, where $\lambda$ controls the strength of the regularization.

 To address (2), i.e., KL-based training instability, we consider a different Wasserstein formulation motivated by  our theoretical results in Theorem 1.  Our first term associated with the total variation bound
\[
\mathcal A_1(\mF,{\mG})\equiv \E \inf_{\bomega:f_{\bomega}\in\mF}\left\| \log \frac{\pi(\cdot\C X)}{\pi_{{\wh\b_T}}(\cdot\C X )}-f_\omega(X, \cdot)\right\|_{\infty}
\]
suggests that the critic function $f_\omega(X, \cdot)$  also approximates the log-likelihood ratio $\log \frac{\pi(\cdot\C X)}{\pi_{{\wh\b_T}}(\cdot\C X )}$. This allows us to  rewrite the ELBO term in \eqref{eq:beta_VB_oracle} as
\begin{equation}\label{eq:ELBO_critic}
\wh{\mathcal{L}}(\beta) =\frac{1}{K}\sum_{\theta_i = g_{ \beta}(Z_i, X_0)}  f_{\hat\omega} (X_0, \theta_i).
\end{equation}
where $f_{\hat\omega}$ is a trained critic approximating the log-density ratio, based on samples from joint distributions $\pi(X, \theta)$ and $q_\b(\theta\mid X)\pi(X)$.

Putting it all together, we present the VB-regularized WGANs training procedure. We initialize the generator and critic networks at $\b^{(0)}$ and $w^{(0)}$ and generate the ABC reference table $\{(\theta_j,X_j)\}_{j=1}^T\iid\pi(\theta,X)$  with
$\{Z_j\}_{j=1}^T \iid \pi_Z(\cdot)$. Then, for the $t$-th iteration, update
\begin{align}
w^{(t)} &= \arg \max_{w: f_w\in \mF} \sum_{j=1}^T f_w(X_j, g_{\beta^{(t)}}(Z_j, X_j)) - \sum_{j=1}^T f_w(X_j, \theta_j) \nonumber \\
\beta^{(t)} & =\arg\min_{\beta: g_\beta \in \mG}  \underbrace{\sum_{j=1}^T f_{w^{(t)}}(X_j, g_\beta(Z_j, X_j))}_\text{GAN Loss} +  
\lambda \cdot \underbrace{\sum_{j=T+1}^{T+K} f_{w^{(t)}}(X_0, g_\beta(Z_j, X_0)) }_\text{approx VB loss in \eqref{eq:ELBO_critic}} \label{eq:VB_generator}
\end{align}
This formulation effectively balances global learning from simulated datasets with local adaptation to the observed data $X_0$, offering advantages over purely adversarial schemes such as \Cref{alg:WGAN}, which ignore $X_0$ altogether. Another perspective on \eqref{eq:VB_generator} is that it can also be viewed as a weighted loss between simulated datasets $\{X_j\}_{j=1}^T$ and the observed dataset $X_0$. We would thereby expect this version to work better than  \Cref{alg:WGAN} which does not use $X_0$ at all.

Indeed, on the toy simulated example in \Cref{ex:toy_RL} we can see that the VB variant produces tighter reconstructions relative to the B-GAN approach. The performance, however, is not uniformly better than Algorithm \ref{alg:WGAN-RL}. We provide a snapshot from another repetition in \Cref{fig:gauss_ex2_another} (\Cref{sec:supplement_VB}), where the spikiness of B-GAN-VB  (especially obvious when estimating $\theta_2$) may explain why the MMDs between B-GAN-VB and the true posteriors are larger than for B-GAN-2S in \Cref{fig:gauss_mmd}. 
The algorithm is described in \Cref{alg:WGAN-VB}. There are also other variants of adversarial VB using the classification-based KL loss or the Wasserstein loss \citet{mescheder2017adversarial,huszar2017variational}, but we have found that they are less stable to train and thus exclude them from the comparisons.

\begin{figure}[!t]
\centering
\begin{minipage}{.9\linewidth}
\begin{algorithm}[H]
	\caption{\em  GAN Variational Bayes (Wasserstein Version)}\label{alg:WGAN-VB}
\centering
	\spacingset{1.1}
	\resizebox{\linewidth}{!}{	
			\begin{tabular}{l l}
\hline
		\multicolumn{2}{c}{\bf INPUT \cellcolor[gray]{0.6} }\\	
				\hline
			\multicolumn{2}{c}{Prior $\pi(\theta)$, observed data $X_0$ and noise distribution $\pi_Z(z)$}\\
					\hline
			\multicolumn{2}{c}{\bf Training \cellcolor[gray]{0.6}}\\	
			\hline	
						\multicolumn{2}{c}{Initialize  network parameters {$\bomega^{(0)}=0$ and $\b^{(0)}=0$}}\\

			\multicolumn{2}{c}{\bf Pilot Run \cellcolor[gray]{0.9}}\\		
			\multicolumn{2}{c}{Apply  \Cref{alg:WGAN} with $\pi(\theta)$ to learn $\wh g_{pilot}(\cdot)$ }\\
			\multicolumn{2}{c}{\bf  \cellcolor[gray]{0.9}Reference Table}\\
				\qquad  Generate  pairs $\{(X_j,\theta_j)\}_{j=1}^T$ where $\theta_j=\wh g_{pilot}(Z_j,X_0)$ for $Z_j\sim\pi_Z$ and $X_j \sim P_{\theta_j}^{(n)}$.  &\\

			\multicolumn{2}{c}{\bf  \cellcolor[gray]{0.9}WGAN Training }\\
			\multicolumn{2}{l}{For $t=1,\dots, N$:}\\
			\quad {\bf Critic Update} ($N_\text{critic}$ steps): Same  as  in Critic Update of \Cref{alg:WGAN}\\
			\quad {\bf Generator Update} (single step) \\
			\qquad  \qquad  Generate noise $Z_j\sim\pi_Z(z)$ for  $j=1,\ldots, N$. \\
			\qquad  \qquad Update $\b^{(t)}$ by applying stochastic gradient descent on  \eqref{eq:VB_generator}. &\\
				
			\multicolumn{2}{c}{\bf  \cellcolor[gray]{0.9}Posterior Simulation }\\
			\qquad Simulate	 $\{Z_i\}_{i=1}^M\iid\pi_Z(z)$ and set $\wt \theta_i=  g_{\b^{(N)}}(Z_i, X_0)$.\\
			\qquad Estimate $\hat w_i$ using either \eqref{eq:kde_plugin} or \eqref{eq:weights}.\\
	\hline
		\multicolumn{2}{c}{\bf OUTPUT \cellcolor[gray]{0.6}}\\			
\hline
		\multicolumn{2}{c}{ Pairs of posterior samples and weights $(\wt \theta_1,\hat w_1), \ldots, (\wt \theta_M, \hat w_M)$ }\\
				\end{tabular}}
\end{algorithm}
\end{minipage}
\end{figure}

\section{Theory}\label{sec:theory}
The purpose of this section is to provide theoretical solidification for the   implicit posterior simulators in Algorithm \ref{alg:WGAN}, \ref{alg:WGAN-RL} and \ref{alg:WGAN-VB}. We will quantify the typical (squared) total variation (TV) distance between the actual posterior and its approximation and illustrate that with carefully chosen neural generators and discriminators, the expected total variation distance vanishes as $n\rightarrow\infty$. We will continue denoting $\Xn$ simply by $X$.

We define with $\nu=P_\theta^{(n)}\otimes \Pi$ the joint measure on $\mX\times\Theta$ with a density $\pi(X,\theta)=p_{\theta}^{(n)}(X)\pi(\theta)$. The goal is to approximate this measure with $\mu_g$ 
 defined semi-implicitly through a density function $\pi_g(X,\theta)=\pi(X)\pi_g(\theta\C X)$ where $\pi(X)=\int \pi(X,\theta)\d\theta$ is the marginal likelihood and where the samples from the density $\pi_g(\theta\C X)$ are obtained  by the {\em Generator} in Definition \ref{def:generator}. Thus, by keeping the marginal distribution the same, the distribution $\pi_g(\theta\C X)$ is ultimately approximating the conditional distribution $\pi(\theta\C X)$.
The quality of the approximation will be gauged under 
the integral probability metric\footnote{{The absolute value can be removed due to the Monge-Rubinstein dual (Villani, 2008).}
} (IPM)
\begin{equation}\label{eq:ipm}
d_{\mF}(\mu_g,\nu)=\sup_{f\in\mF}\left|E_{(X,\theta)\sim\mu_g}f(X,\theta)-E_{(X,\theta)\sim\nu}f(X,\theta)\right|,
\end{equation}
where $\mF$ is a class of evaluation metrics  (such as Lipschitz-1 functions that yield the Wasserstein-1).
The IPM metric \eqref{eq:ipm}, due to shared marginals of the two distributions, satisfies
\begin{align}
d_{\mF}(\mu_g,\nu)
\leq E_X d_{\mF}\big(\mu_g(X),\nu(X)\big),\label{eq:ineq_joint}
\end{align}
where $\mu_g(X)$ and $\nu(X)$ denote the {\em conditional} measures  with densities $\pi_g(\theta\C X)$ and $\pi(\theta\C X)$.
At the population level, the B-GAN (Algorithm \ref{alg:WGAN}) minimax game finds an equilibrium
$$
g^*=\min_{g\in\mG} d_{\mF}(\mu_g,\nu),
$$
where $\mG$ is a class of generating functions (that underlie the implicit distribution $\mu_g$). 

Typically, both $\mF$ and $\mG$ would be parameterized by neural networks with the hope that the discriminator networks can closely approximate the metric $d_\mF$ and that the generator networks can flexibly represent distributions. In practice, one would obtain a data-driven estimator based on the {\em empirical} distribution $\bar\nu_T$ of $(\theta_j,X_j)$ for $1\leq j \leq T$ and the {\em empirical} distribution $\bar\mu_g$ of  $(g(Z_j,X_j),X_j)$ where $Z_j\sim \pi_Z$ for $1\leq j\leq T$.  Assuming that $\mG=\{g_{\b}:\b\in\R^G\}$,  the B-GAN estimator can be written as 
\begin{equation}\label{eq:bhat}
\wh \b_T\in\arg\min_{\b:g_{\b}\in\mG}\max_{\bomega:f_{\bomega}\in \mF}\left| E_{\,\bar\mu_g} f_{\bomega}\big(X, g_{\b}(Z,X)\big)-E_{\,\bar\nu_T} f_{\bomega}(X, \theta)\right|.
\end{equation}
For brevity, we will often denote the generator density $\pi_{g_{\b}}(\cdot)$ (see Definition \ref{def:generator}) simply by $\pi_{\b}(\cdot)$ and similarly for $\mu_{g_{\b}}$.
The next Theorem provides an upper bound on the typical  total variation (TV) distance between true and the approximated posterior measures $\nu(X_0)$ and $\mu_{\wh\b_T}(X_0)$  with densities $\pi(\theta\C X_0)$ and $\pi_{\wh\b_T}(\theta\C X_0)$, respectively.    The total variation distance can be upper bounded by three terms: (1) the ability of the critic to tell the true model apart from the approximating model
\begin{equation}\label{eq:term1}
\mathcal A_1(\mF,{\mG})\equiv \E \inf_{\bomega:f_{\bomega}\in\mF}\left\| \log \frac{\pi(\cdot\C X)}{\pi_{{\wh\b_T}}(\cdot\C X )}-f_\omega(X, \cdot)\right\|_{\infty}
\end{equation}
(2) the ability of the generator to approximate the average true posterior
\begin{equation}\label{eq:term2}
\mathcal A_2(\mG)\equiv \inf_{\b:g_{\b}\in\mG} \left[E_{X}\left\|\log \frac{\pi_{\b}(\cdot\C X)}{\pi(\cdot\C X)}\right\|_\infty\right]^{1/2},
\end{equation}
and (3) the complexity of the (generating and) critic function classes   measured the pseudo-dimension $Pdim(\cdot)$ and defined in Definition 2 in \citet{bartlett2017spectrally}.  Intuitively, pseudo dimension is the maximum number of points you can assign arbitrary real values to and still be able to fit them exactly using functions from the hypothesis space. For example, if the hypothesis space is the set of all linear functions, its pseudo dimension is 2, since you can fit exactly 2 arbitrary points (with real-valued outputs), but no more than that, with a single linear function.

Note that the $L^\infty$ norm in both \eqref{eq:term1} and \eqref{eq:term2} are both taken with respect to $\theta$. We use $\cdot$ in place of $\theta$ to avoid ambiguity. We denote with $\mathcal H=\big\{h_{\bomega,\b}: h_{\bomega,\b}(Z,X)=f_{\bomega}
\big(g_{\b}(Z,X),X\big)\big\}$  a structured composition of networks $f_{\bomega}\in\mF$ and $g_{\b}\in\mG$.

\begin{thm}\label{thm:bound}
Let $\wh \b_T$ be as in  \eqref{eq:bhat} where $\mF=\{f: \|f\|_\infty\leq B\}$ for some $B>0$. Denote with $\E$ the {expectation   with respect to empirical measure on   $\{(\theta_j, X_j)\}_{j=1}^T$ and $\{Z_j\}_{j=1}^T$ in the reference table}.
 {Assume that   the prior satisfies 
\begin{equation}\label{eq:prior_conc}
\Pi[B_n(\theta_0;\epsilon)]\geq \e^{-C_2n\epsilon^2}\quad \text{for some  $C_2>0$ and $\epsilon >0$},
\end{equation} }
where the KL neighborhood $B_n(\theta_0;\epsilon)$ is defined as
\begin{equation}
B_n(\theta_0;\epsilon)=\{\theta\in\Theta: \KL(P_{\theta_0}^{(n)}\| P_{\theta}^{(n)})\leq n\epsilon^2, V_{2,0}(P_{\theta_0}^{(n)},P_{\theta}^{(n)})\leq n\epsilon^2\}.
\end{equation}
Then for $T\geq Pdim(\mF)\vee Pdim(\mathcal H)$ we have for any $C>0$
$$
P_{\theta_0}^{(n)}\E\,  d_{TV}^2\left(\nu(X_0),\mu_{\wh\b_T}(X_0)\right)\leq \mathcal C_n^T(\epsilon,C),
$$
where,  for some $\wt C>0$ and $Pmax\equiv Pdim(\mF)\vee Pdim(\mathcal H)$,
$$
C_n^T(\epsilon,C)=\frac{1}{C^2n\epsilon^2} +\frac{\e^{(1+C_2+C)n\epsilon^2}}{4} \left[2\mathcal A_1(\mF,{\mG})+
\frac{B\,\mathcal A_2(\mG)}{\sqrt 2}+4\wt C\,B \sqrt{\frac{\log T\times Pmax}{T}}\right].
$$
\end{thm}
\proof Section \ref{sec:proof_thm_bound}

\begin{remark}
\citet{zhou2022deep} provided theoretical results for the total variation distance between {\em joint} distributions using the Jensen-Shannon version of conditional GANs. Contrastingly, we provide frequentist-Bayesian results, quantifying the typical squared total variation distance  between the true and approximate posteriors after integrating over the data generating process. In addition, we use the Wasserstein GANs, building on oracle inequalities established in \citet{liang2021well}.
\end{remark}

Interestingly, while the GAN approach is trained {\em without} the knowledge of $X_0$, Theorem \ref{thm:bound} obtains a finite-sample (fixed $n$ and $d$)  upper bound for the typical TV distance {\em after plugging $X_0$} into the sampler. This frequentist result has required the prior concentration condition \eqref{eq:prior_conc} which could be avoided if we were willing to assume bounded likelihood ratios \citep{kaji2021mh} or uniform estimation error \citep{frazier2024statistical}. With bounded likelihood ratios, we would then also not have the first term and the exponential multiplier in the bound.  
 
From \eqref{eq:ineq_joint},  it can be seen that  Algorithm \ref{alg:WGAN}  targets a lower bound to the {\em average} Wasserstein distance between the posterior $\pi(\theta\C X)$ and $\pi_g(\theta\C X)$ after marginalizing over $\pi(X)$.  In other words, Algorithm \ref{alg:WGAN} {\em is not} necessarily targeting $\pi(\theta\C X_0)$. The 2-step enhancement in Algorithm \ref{alg:WGAN-RL} provides more  data draws in the ABC table that more closely resemble $X_0$.
 Theorem \ref{thm:bound} applies to Algorithm \ref{alg:WGAN-RL} as well with slight modifications.

\begin{corollary}(2step B-GAN)\label{cor:bound}
Assume that $\wh\b_T$  in  \eqref{eq:bhat}  is learned  under the proposal distribution $\wt\pi(\theta)$ and
denote with $\wt\E$ the expectation  of the reference table under $\wt\pi(\theta)$. 
Assume that the original prior $\pi(\theta)$ satisfies \eqref{eq:prior_conc}. Then the importance re-weighted posterior reconstruction from Algorithm \ref{alg:WGAN-RL} satisfies
the statement in Theorem \ref{thm:bound}  with $\E$ replaced by $\wt\E$ and with {\small$\wt{\mathcal A}_2(\mG)\equiv \inf\limits_{\b:g_{\b}\in\mG} \Big[\int_{\mX}\wt\pi(X)\left\|\log \frac{\pi_{\b}(\theta\C X)}{\pi(\theta\C X)}\right\|_\infty\d X \Big]^{1/2}$} and
{\small$\wt{\mathcal A}_1(\mF,{\mG})\equiv \wt\E \inf\limits_{\bomega:f_{\bomega}\in\mF}\left\| \frac{\pi(\theta)}{\wt\pi(\theta)}\log \frac{\pi(\theta\C X)}{\pi_{{\wh\b_T}}(\theta\C X )}-f_\omega(X, \theta)\right\|_{\infty}$}.
\end{corollary}

\proof Section \ref{sec:proof_cor_bound}

 The (nonasymptotic)  bounds for the {\em typical} TV distance   in Theorem \ref{thm:bound} and Corollary \ref{cor:bound} are not refined enough to fully appreciate the benefits of the 2-step enhancement.
In  Remark \ref{rem:RL} in Appendix (Section \ref{sec:motivation_RL}), we provide an intuitive explanation for why the 2-step refinement version works so well in practice on each particular realization of $X_0$ (not only on average over many realizations $X_0$).
We also provide a version of Theorem \ref{thm:bound} for adversarial variational Bayes (Algorithm \ref{alg:WGAN-VB}) in Theorem \ref{thm:vb_bound} (Section \ref{sec:thm_VB}).

Theorem \ref{thm:bound} provides intuition for how the quality of the generator and discriminator networks affects the total variation distance. It also suggests how large $T$ should be  (relative to $n$ and $d$) to assure that the average squared TV distance is arbitrarily small (vanishing with $n$). Recall that the training data size $T$ can be chosen by the user.  In Remark \ref{rem:cor} below, we will focus on specific deep learning architectures and provide a condition for choosing $T$ such that the upper bound in Theorem \ref{thm:bound} is $o(1)$ as $T\rightarrow\infty$ for a suitable choice of $\epsilon$ and $C$ (potentially depending on $n$) and for $n,d$ fixed. Instead of fixing $n$ and allowing $T$ to grow, in Corollary \ref{cor:dnn} we fix $T$ (as a function of $n$) and allow $n$ to grow and show that the upper bound is $o(1)$ as $n\rightarrow\infty$.  We use the ReLU activation function $\sigma_{ReLU}(x)=\max\{x,0\}$ for both the critic and generator networks, which have good approximation properties \citep{schmidt2020nonparametric}.

\begin{definition}
We denote with $\mathcal F_L^B(S,W)$ a class of feed-forward ReLU neural networks $f$ with depth $L$ (i.e. the number of hidden layers plus one), width $W$ and size $S$ (total number of parameters in the network) such that $\|f\|_\infty\leq B$. The width is defined as  $W=\max\{w_0,\dots,w_L\}$ where $w_l$ is the width of the $l^{th}$ layer  with $w_0$   the input data dimension and $w_L$ the output dimension. With $\mathcal G_L^B(S,W)$, we denote the leaky ReLU neural networks with the same meaning of parameters.
\end{definition}

The following Corollary  warrants optimism when using neural networks for the generator and the discriminator. We formulate the Corollary in context of  Algorithm \ref{alg:WGAN} and note that a similar conclusion holds for Algorithm \ref{alg:WGAN-RL} as well. This Corollary shows that there exist deep learning architectures such that for large enough $T$ (depending on $n$), the typical squared TV distance vanishes as $n\rightarrow \infty$.

\begin{corollary}\label{cor:dnn}
Assume that the joint distribution $\pi(\theta,X)$ is realizable in the sense that there exists $g_{\b_0}\in\mathcal G_{L_0}^{B_0}(S_0,W_0)$ such that $\pi(\theta,X)=
\pi_{g_{\b_0}}(\theta,X)$. Assume that $\mathcal G_{L^*}^{B^*}(S^*,W^*)\subseteq G_{L_0}^{B_0}(S_0,W_0)$ is a class of leaky ReLU generative networks indexed by $\b$ where $\|\b_0\|_\infty\vee\|\b\|_\infty\leq b$ for some $b>0$. 
Assume that $\mathcal F=\mathcal F_L^B(S,W)$ are ReLU discriminator networks and $\pi_Z$ is uniform on $[0,1]^d$.  Assume  the prior concentration \eqref{eq:prior_conc}  is satisfied with $\epsilon_n>0$ such that  $\epsilon_n=O(1/\sqrt{n})$. For each   arbitrarily slowly increasing sequence $C_n$,  there exists 
$L, S,W>0$  and training data size $T$ (depending on $n$) such that 
we have
$
P_{\theta_0}^{(n)} {\E}d_{TV}^2\left(\nu(X_0),\mu_{\wh\b_{T}}(X_0)\right)=o(1)\,\,\text{as $n\rightarrow\infty$ for $d$ fixed}.
$
\end{corollary}

\proof Section \ref{sec:proof_cor_dnn} in the Appendix.

\begin{remark}\label{rem:cor}
Using the same architecture as in Corollary 6, given $n$ and $d$ we can find the smallest $\epsilon$ such that the prior concentration is satisfied. Choosing $1/C=o(1)$ as $T\rightarrow\infty$  such that $\e^C\sqrt{(\log T\times Pmax)/T}=o(1)$ will yield an upper bound $C_n^T(\epsilon,C)$ that is $o(1)$ as $T\rightarrow \infty$ for fixed $n$ and $d$. 
\end{remark}

\section{Performance Evaluation}\label{sec:performance}
This section demonstrates very promising performance of our B-GAN approaches in  \Cref{alg:WGAN} (B-GAN), \Cref{alg:WGAN-RL} (B-GAN-2S) and  \Cref{alg:WGAN-VB} (B-GAN-VB)  and on simulated examples relative to other Bayesian likelihood-free methods 
(plain ABC using  summary statistics (SS);   2-Wasserstein distance ABC by \citet{bernton2019approximate};  Sequential Neural Likelihood (SNL) \citep{papamakarios2019sequential} with default settings). 
SNL \citep{papamakarios2019sequential} employs autoregressive flow to estimate the likelihood function from simulated ABC reference table.
The implementation details of our methods and the counterparts are described in \Cref{sec:implementation_lv} for the Lotka-Volterra model and \Cref{sec:implementation_bnb} for the Boom-and-Bust model.

 We evaluate the performance of different approximated posteriors in terms of three metrics: bias $\E\abs{\hat\theta_i -\theta_i}$, width of the 95\% credible intervals and its coverage and the Posterior Mass Concentration (PMC) in a small $\delta$ neighborhood of the true values $\btheta_0$. PMC is defined coordinate-wise as $\P\big(\hat \theta_j \in (\btheta_{0, j}-\delta_j, \btheta_{0,j}+\delta_j )\big)$ for every $j=1, \ldots, d$. PMC incorporates both bias and variance of the posterior approximation, and it is also helpful in understanding how much uncertainty has reduced compared to the prior. Since the examples we consider here all use uniform priors, we choose $\delta_j$ to be 5\% of the width of those uniform priors to visualize how much uncertainty has reduced after observing $X_0$.

\subsection{Lotka-Volterra Model}\label{sec:lv_main}

The Lotka-Volterra (LV) predator-prey model \citep{wilkinson2018stochastic} is one of the classical likelihood-free examples and describes population evolutions in ecosystems where predators interact with prey.  The state of the population is prescribed deterministically via a system of  ordinary differential equations (ODEs). Inference for such models is challenging because the transition density is intractable. However, simulation from the model is possible, which makes it a natural candidate for simulator-based inference methods. 

The model monitors population sizes of  predators $x_t$ and  prey $y_t$ over time $t$. The changes in states are determined by  four parameters $\btheta=(\theta_1, \ldots, \theta_4)'$ controlling: (1) the rate $r_1^t=\theta_1 x_t y_t$ of a predator being born; (2) the rate $r_2^t=\theta_2 x_t$ of a predator dying; (3) the rate $r_3^t=\theta_3 y_t$ of a prey being born; (4) the rate $r_4^t=\theta_4 x_t y_t$ of a prey dying.  Given the initial population sizes  $(x_0, y_0)$ at time $t=0$, the dynamics can be simulated using the Gillespie algorithm \citep{gillespie1977exact}, {which is a stochastic discrete-time Markov chain model.} The algorithm samples times to an event from an exponential distribution (with a rate $\sum_{j=1}^4 r_j^t$) and picks one of the four reactions with probabilities proportional to their individual rates $r_j^t$.  
We use the same  setup as \citet{kaji2021mh} where each simulation is started at $x_0=50$ and $y_0=100$ and state observations   are recorded every  0.1 time units for a period of 20 time units, resulting in a series of $201$ observations each. 

\begin{figure}[!t]
\centering
\includegraphics[width=0.95\textwidth, height=0.5\textwidth]{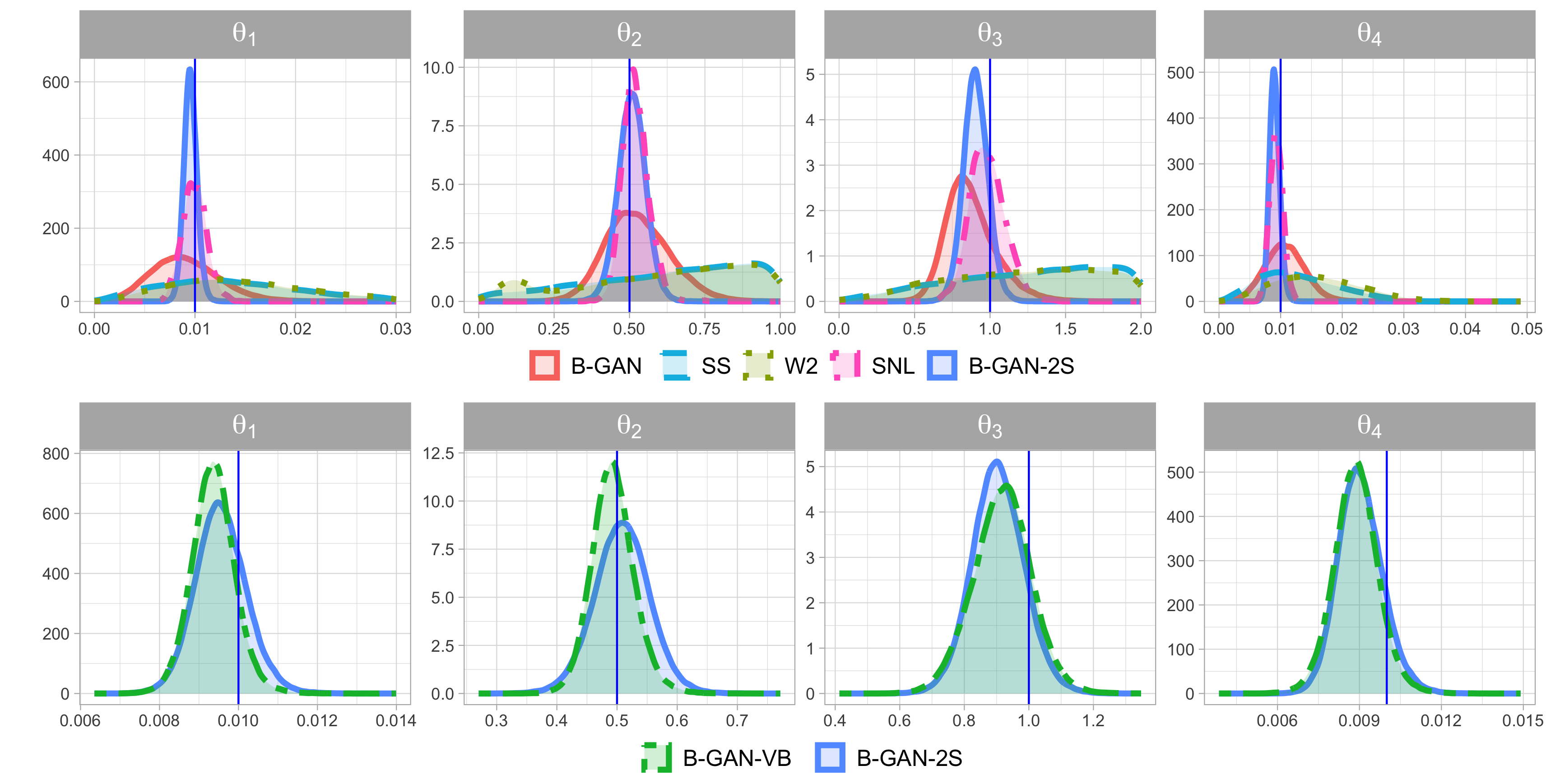}
\caption{\small Approximate posterior densities under the Lotka-Volterra Model. The true parameter vector (marked by vertical lines) is $\btheta_0=(0.01, 0.5, 1, 0.01)'$.}\label{fig:lv_post}
\end{figure}

The real data $X_0$  are generated with true values $\btheta_0=(0.01, 0.5, 1, 0.01)'$.  The data vector $X_0$ is stretched into one $(201\times 2\times n^*)$ vector, where $n^*$ is number of i.i.d. copies of time-series observations. The advantage of our approach is that it can be used even for $n^*=1$ when other methods (such as   \citet{kaji2021mh}) cannot. We focus on the  $n^*=1$ case here.
We  use an informative prior $\theta \in U(\Xi)$ with a restricted domain $\Xi =[0, 0.1] \times [0, 1] \times [0,2] \times [0, 0.1]$ to make it easier for classical ABC methods (see  \citet{kaji2021mh}) and to make the GAN training more efficient. Previous analyses \citep{papamakarios2016fast}  suggested summary statistics as the mean, log-variance, autocorrelation (at lag 1 and 2) of each series as well as their correlation. \citet{papamakarios2019sequential} also built their sequential neural network on top of this set of summary statistics. 
 We also build our model on the summary statistics, since it performs better than using the time series empirically.
  This example is quite challenging due to the spikiness of the likelihood in very narrow areas of the parameter space (as explained in \citet{kaji2021mh}).

A typical snapshot (for one particular data realization) of the approximated posteriors is given in  \Cref{fig:lv_post} and the summary statistics averaged over $10$ repetitions are reported in \Cref{tab:lv_post}.  Since we do not have access to the true posterior, we look at the width of the $95\%$ credible interval, its coverage (proportion of the 10 replications such that the true value is inside the credible interval),   bias of the posterior mean and the posterior mass concentration. 
Again, we observe that B-GAN-2S and B-GAN-VB outperforms B-GAN with  smaller biases, tighter variances and higher probability mass concentrated near $\btheta_0$. In \Cref{fig:lv_post}, B-GAN-VB appears to have smaller bias than B-GAN-2S when estimating all parameters and also consistently outperforms SNL. The computation cost requirements are compared in Section \ref{sec:compute_cost}.

\begin{table}[!t]
\centering
\scalebox{0.85}{
\begin{tabular}{*{2}{l} | *{6}{c}}
\toprule
 & & B-GAN & B-GAN-RL & B-GAN-VB & SNL & SS & W2 \\
\midrule
  \rowcolor{gray!50}  & Bias  & 0.30 & \bf{0.08} & \bf{0.08} & 0.11 & 0.96 & 1.10 \\
 \rowcolor{gray!50}  & CI   & 1.34 & 0.28 (0.9) & \bf{0.25 (0.9)} & 0.44 & 3.80 & 4.02 (0.9) \\
 \rowcolor{gray!50}  \multirow{-3}{*}{$\theta_1=0.01$}& PMC  & 0.82 & \bf{1.00} & \bf{1.00} & \bf{1.00} & 0.40 & 0.38 \smallskip \\
 \multirow{3}{*}{$\theta_2=0.5$} &Bias & 0.91 & 0.46 & \bf{0.43} & 0.45 & 2.49 & 2.42 \\
&CI   & 0.40 & 0.18 & \bf{0.14} & 0.17 & 0.91 & 0.84 \\
&PMC  & 0.33 & 0.62 & \bf{0.64} & \bf{0.64} & 0.10 & 0.10\smallskip \\
\rowcolor{gray!50} & Bias & 0.22 & \bf{0.12} & 0.13 & 0.13 & 0.49 & 0.47 \\
\rowcolor{gray!50}  & CI width   & 0.77 & \bf{0.41} & \bf{0.41} & 0.48 & 1.76 & 1.73 \\
\rowcolor{gray!50}  \multirow{-3}{*}{$\theta_3=1$}& PMC  & 0.27 & \bf{0.48} & \bf{0.48} & 0.47 & 0.11 & 0.11\smallskip \\
 \multirow{3}{*}{$\theta_4=0.01$} & Bias & 0.29 & \bf{0.12} & \bf{0.12} & 0.15 & 0.68 & 0.79 \\
& CI width  & 1.29 & 0.38 & \bf{0.35} & 0.52 & 2.72 & 2.82 \\
& PMC  & 0.83 & \bf{0.99} & \bf{0.99} & 0.98 & 0.51 & 0.42  \\
\bottomrule
\end{tabular}}
\caption{\footnotesize Summary statistics of the approximated posteriors under the Lotka-Volterra model (averaged over 10 repetitions). Bold fonts mark the best model of each column. The coverage of the $95\%$ credible intervals are 1 unless otherwise noted in the parentheses.}\label{tab:lv_post}
\end{table}

\subsection{Simple Recruitment, Boom and Bust}\label{sec:bnb_main}
 Our second  demonstration is on   the simple recruitment, boom and bust model  
  \citep{fasiolo2018extended}. 
The model is prescribed by a discrete stochastic process, characterizing the fluctuation of the population size of a certain
group over time. Given the population size $N_t$ and parameter $\btheta=(r, \kappa, \alpha, \beta)'$, the population size at the next timestep $N_{t+1}$ follows the following distribution 
\begin{equation*}
N_{t+1} \sim \left\{ \begin{array}{ll}
\text{Poisson} \big(N_t(1+r)\big)+\epsilon_t, &\text{ if } N_t\leq \kappa\\
\text{Binom} \big(N_t, \alpha\big)+\epsilon_t, &\text{ if } N_t > \kappa\\
\end{array}\right . ,
\end{equation*}
where $\epsilon_t \sim \text{Pois}(\beta)$ is a stochastic arrival process, with rate $\beta>0$.  The population grows stochastically at rate $r>0$, but it crashes if the carrying capacity $\kappa$ is exceeded. The survival probability $\alpha\in (0,1)$ determines the severity of the crash. Over time the population fluctuates between high and low population sizes for several cycles. 

This model has been shown to be extra challenging for both synthetic likelihood (SL) methods and ABC methods in \citet{fasiolo2018extended}. The distribution of the statistics is far from normal which breaks the normality assumption of SL. In addition, the authors show that ABC methods require exceedingly low tolerances and low acceptance rates to achieve satisfying accuracy.

We first run the simulation study using the  setup in \citet{an2020robust}. The real data $X_0$ is generated using parameters $r=0.4, \kappa=50, \alpha=0.09$ and $\beta=0.05$, and the prior distribution is uniform on $[0,1] \times [10, 80]\times [0,1]\times[0,1]$. The observed data has 250 time-steps, with 50 burn-in steps to remove the transient phase of the process.

Previous analyses of the model suggested various summary statistics, including the mean, variance, skewness, kurtosis of the data, lag 1 differences, and lag 1 ratios \citep{an2020robust}. We use them in SS and SNL methods. We have explored three types of input: the time series itself, the time series in conjunction with their summary statistics, and the summary statistics only.  We find that the network built on the summary statistics appears to perform the best, thus we only include the results from this network here.

\begin{figure}[!ht]
\centering
\includegraphics[width=0.9\textwidth]{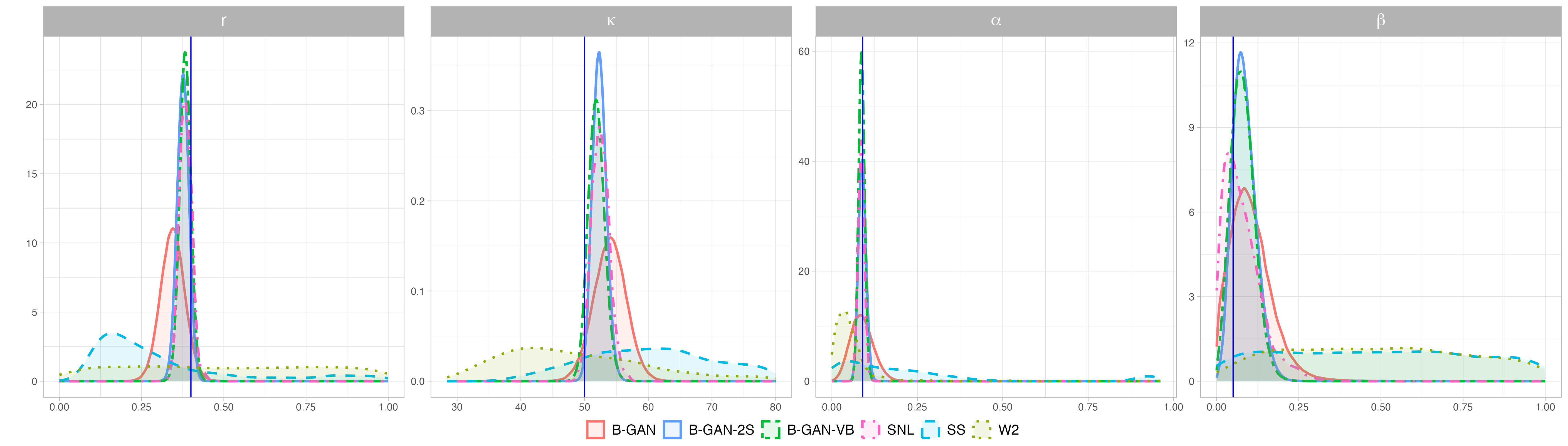}
\caption{\small Approximate posterior densities under the Boom-and-Bust Model. The true parameter is $\btheta_0=(0.4, 50, 0.09, 0.05)'$.  }\label{fig:bnb_post}
\end{figure}

We include SS, W2, SNL as competitors for comparisons. One snapshot of the approximate posterior densities is provided in \Cref{fig:bnb_post}. We report the performance summary averaged over 10 repetitions  in \Cref{tab:bnb_post}. ABC methods struggle to identify the parameters and provide very flat posteriors. The vanilla B-GAN is able to identify the correct location of parameters but with rather wide credible intervals. 
We observe great improvements after applying either the 2-step refinements or the VB refinements. For this example, B-GAN-VB performs the best in estimating $r$ and $\beta$, while B-GAN-RL approximates $\kappa$ better and SNL estimates $\alpha$ slighly better than B-GAN-VB. The BnB model is more challenging to learn, compared with the LV model. Potentially the performance of two refinements can be further improved if we add more sequential refinement steps with fewer training epoches to gradually modify the proposal distribution.

\begin{table}[!ht]
\centering
\scalebox{0.85}{
\begin{tabular}{*{2}{l} | *{6}{c}}
\toprule
 & & B-GAN & B-GAN-RL & B-GAN-VB & SNL & SS & W2 \\
\midrule
 \rowcolor{gray!50} 
& Bias ($\times 10^{-1}$) & 0.44 & 0.26 & \bf{0.24} & \bf{0.24} & 2.16 & 2.59 \\
\rowcolor{gray!50}  & CI width  ($\times 10^{-1}$) & 1.65 & 0.81 (0.9) & \bf{0.76 (0.9)} & 0.93 & 8.26 & 9.49 \\
\rowcolor{gray!50} \multirow{-3}{*}{$r=0.4$}  &PMC & 0.63 & 0.88 & \bf{0.91} & 0.90 & 0.08 & 0.09\smallskip \\
\multirow{3}{*}{$\kappa=50$} & Bias & 3.03 & \bf{1.42} & 1.56 & 1.52 & 10.60 & 10.16 \\
& CI width & 11.00 & \bf{4.33 (0.9)} & 5.09 & 5.37 & 37.17 & 43.20 \\
& PMC & 0.66 & \bf{0.96} & 0.94 & 0.94 & 0.22 & 0.19  \smallskip\\
\rowcolor{gray!50} & Bias ($\times 10^{-2}$) & 2.92 & 1.29 & 1.05 & \bf{1.01} & 15.08 & 5.46 \\
\rowcolor{gray!50} & CI width ($\times 10^{-1}$) & 1.37 & 0.50 & 0.39 & \bf{0.38} & 9.18 & 2.77 \\
\rowcolor{gray!50}  \multirow{-3}{*}{$\alpha=0.1$}& PMC & 0.83 & \bf{1.00} & \bf{1.00} & \bf{1.00} & 0.31 & 0.55 \smallskip\\
\multirow{3}{*}{$\beta=0.05$} & Bias ($\times 10^{-1}$) & 1.20 &\bf{0.96} & 1.01 & 1.28 & 4.41 & 3.92 \\
&CI width & 0.39 (0.9) & 0.24 (0.7) & \bf{0.23 (0.8)} & 0.39 (0.9) & 0.95 & 0.86 (0.5) \\
& PMC & 0.44 & \bf{0.54} & \bf{0.54} & 0.52 & 0.11 & 0.12 \\
\bottomrule
\end{tabular}}
\caption{\footnotesize Summary statistics of the approximated posteriors under the Boom-and-Bust model (averaged over 10 repetitions). Bold fonts mark the best model of each column. The coverage of the $95\%$ credible intervals are 1 unless otherwise noted in the parentheses.}\label{tab:bnb_post}
\end{table}

\section{Empirical Analysis: the Susceptible-Infected-Recovered (SIR) epidemic model with application to the common cold data}\label{sec:real_data}

In this section, we illustrate our approach on the problem of estimating the three-parameter Susceptible-Infected-Recovered (SIR) epidemic model and apply our methods to the 21-day common-cold outbreak data in Tristan da Cunha island  from October 1967 (Hammond \& Tyrrell (1971); Shibli et al. (1971)).  We provide the data in \Cref{tab:common_cold}, which tracks the number of infected and recovered individuals. 

\begin{table}[!ht]
	\centering
	 \resizebox{0.8\textwidth}{!}{
	\begin{tabular}{l |*{21}{c}}
		\toprule
		Day&1 &2 &3 &4 &5 &6 &7 &8 &9 &10 &11 &12 &13 &14 &15 &16 &17 &18 &19 &20 &21 \\
		\midrule
		Infected &  1 &1& 3& 7& 6& 10& 13 &13 &14 &14 &17 &10 &6 &6 &4& 3& 1& 1& 1& 1& 0 \\
		Recovered & 0 &0 &0 &0 & 5 &7 &8 &13 &13 &16 &16 &24 &30 &31 &33 &34 &36 &36 &36 &36 &37 \\
		\bottomrule
	\end{tabular}
	}
	\caption{Common cold outbreak data in Tristan da Cunha island (1967).}\label{tab:common_cold}
\end{table}

The SIR model categorizes hosts into three statuses at time $t$. Individuals are considered susceptible (S), if they are able to be infected by the pathogen, infected (I) if currently infected with the pathogen or recovered (R) if they have successfully cleared the pathogen. SIR models and their variants, in both deterministic and stochastic forms, are among the most fundamental epidemiological models and have found use describing diseases as diverse as influenza, herpes and malaria. We adopt the Model (3.12) in \citet{toni2009approximate}, which has the highest probability, to describe the dynamics. This model introduces a latent state where individuals are infected but not yet infectious, to account for the delay between infection and the ability to infect others.

The deterministic SIR model without demographics can be written mathematically as
\begin{align*}
\frac{\d S(t)}{\d t}  = -\beta I(t)S(t), \quad \frac{\d L(t)}{\d t}= \beta I(t)S(t) - \delta L(t), \quad
\frac{\d I(t)}{\d t} = \delta L(t) - \gamma I(t), \quad \frac{\d R(t)}{\d t} = \gamma I(t),
\end{align*}
where $S(t), L(t), I(t), R(t)$ are the numbers of susceptible, latent, infected and recovered individuals in the population at time $t$ (days), and $\delta$ is the rate at which latent individuals become infectious. The parameter $\beta$ is the transmission rate, and $\gamma$ is the recovery rate. 

Similar to the Lotka-Volterra model,  we use stochastic discrete-time Markov chain to simulate the dynamics. The model is governed by four parameters $\theta=(\beta, \gamma,\delta, S(0))'$, where $S(0)$ is the number of susceptible individuals at week $0$. Since the data in \Cref{tab:common_cold} only collected the number of infected and recovered individuals, the number of susceptible individuals is not directly observed and thus $S(0)$ needs to be estimated. 

Following the setting in \citet{toni2009approximate}, we place a prior on the parameters $\theta$ as $S(0)\sim \text{U}[37,100], \beta\sim \text{U}[0,3], \gamma\sim \text{U}[0,3],\delta\sim \text{Unif}[0,5]$. We include the plot of approximated posterior in  \Cref{fig:sir_post}. For $\beta$ and $S(0)$, the posterior seems to be unimodal, similar to the findings in \citet{toni2009approximate}, and B-GAN-VB provides the tightest credible intervals. The posterior of $\gamma$ and $\delta$ appears to be multimodal, which makes the inference more challenging and the posterior we obtain here is more dispersed than that in \citet{toni2009approximate}.

\begin{figure}[!ht]
	\centering
	\includegraphics[width=0.9\textwidth]{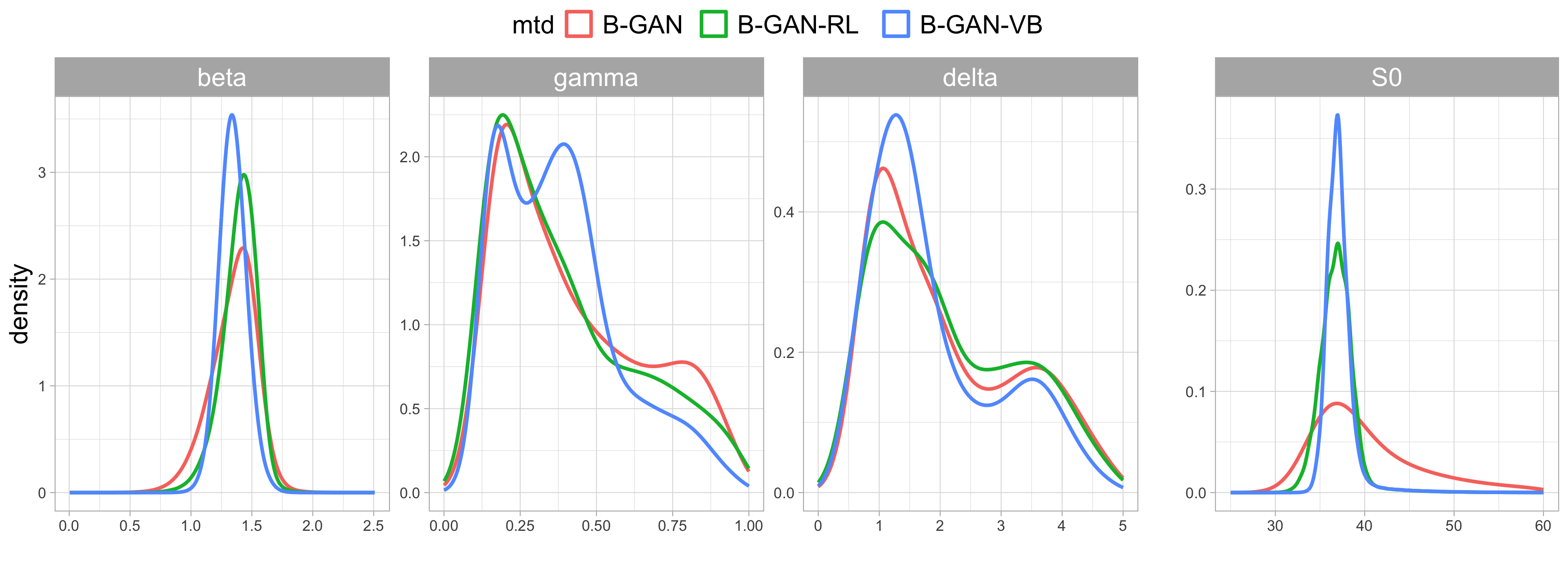}
	\caption{Approximated posterior densities under the SIR model.}\label{fig:sir_post}
\end{figure}

To further investigate the quality of the approximated posteriors, we examine the predictive performance of the fitted models. We simulate 100 datasets from the posterior predictive distribution and compare them with the observed data $X_0$. The results are provided in \Cref{fig:sir_pred}. We observe that for all methods, the simulated datasets cover the observed data, with B-GAN-VB providing the best fit.

\begin{figure}[!ht]
	\centering
	\includegraphics[width=0.9\textwidth]{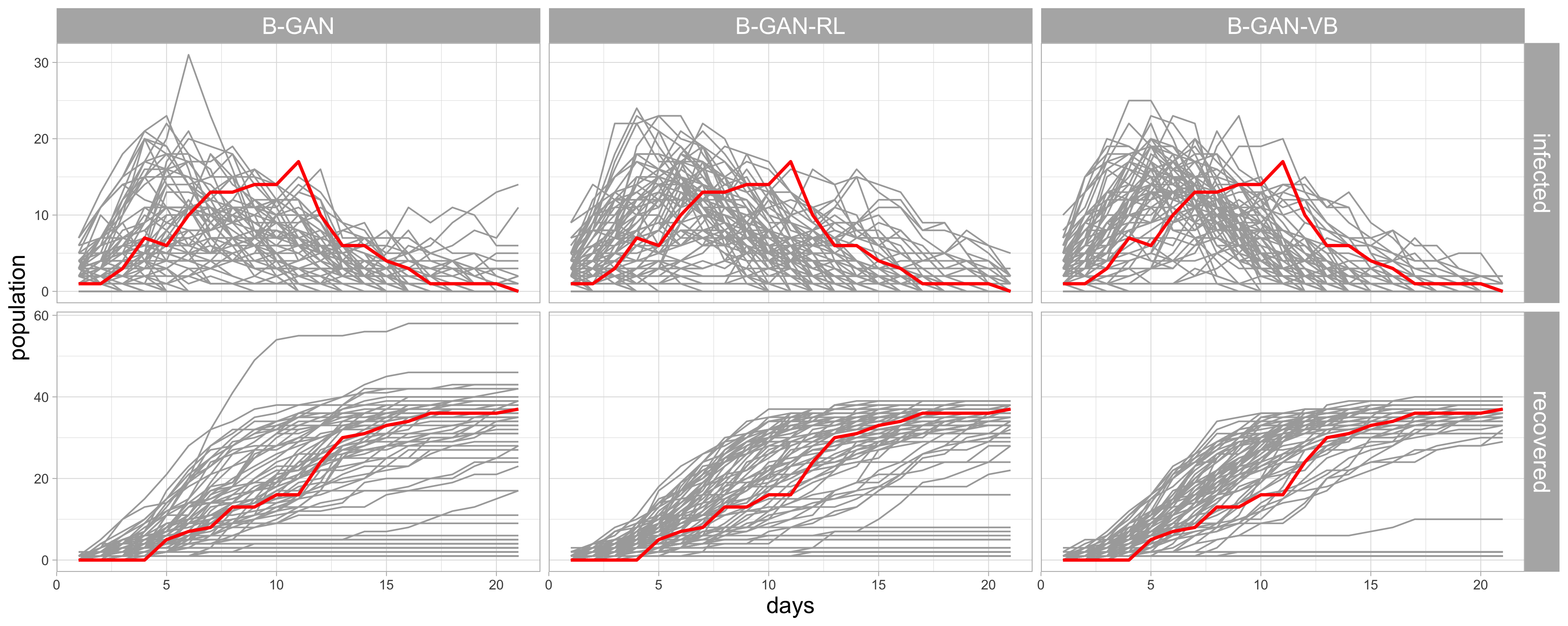}
	\caption{Simulated datasets under the posterior distributions. The red curves are the observed data. Each grey line represents one simulated time series of the infected/recovered population.}\label{fig:sir_pred}
\end{figure}

\section{Discussion}\label{sec:discussion}
This paper proposes strategies for Bayesian simulation using generative networks. We have formalized several schemes for implicit posterior simulation using  GAN conditional density regression estimators  as well as implicit variational Bayes. The common denominator behind our techniques is (joint) contrastive adversarial learning  \citep{tran2017hierarchical, huszar2017variational}. We have provided firm theoretical support in the form of bounds for the typical total variation distance between the posterior and its approximation. We have highlighted the potential of our adversarial samplers on several simulated examples with very encouraging findings. We hope that our paper will  embolden  practitioners to implement neural network posterior samplers in difficult situations when  likelihood (and prior) are implicit. 

 One advantage of using conditional GANs over other conditional density estimators, such as the autoregressive flow in \citet{papamakarios2019sequential}, lies in their expressive flexibility and ability to incorporate statistical structures. While the two methods share a similar purpose, they are fundamentally different. SNL provides a closed-form representation for the approximated density function, while the generator $g(\cdot, X^{(n)})$ in WGAN is approximating a pushforward mapping from $\pi_Z$ to the conditional distribution $\pi(\theta\mid X^{(n)})$ and no closed-form of the density function is available. Due to the difference, SNL is more likely to suffer from the curse of dimensionality due to the invertibility constraint \citep{papamakarios2021normalizing}. In addition, the invertibility constraint also imposes restrictions on choices of neural network architecture. In constrast, the generator in our WGAN framework can be designed to accommodate various structures that exploit the statistical properties of the data. For instance, in the case of i.i.d. data  as discussed in \Cref{sec:iid} or  partially exchanegable data as in \citet{luciano2025permutations}, the generator can be designed in a parameter-efficient manner that exploits the exchangeability of the data, thereby avoiding exploding network complexity as the number of observations increases. Moreover, when the posterior is approximately Gaussian, the generator can be constructed as two neural networks that separately parameterize the posterior mean and variance.  More recently, \citep{baptista2024conditional} consider the monotonicity constraint for the generator, which could offer extra inferential benefits with its connection to the optimal transport map.

Another interesting direction is to extend our approach to the case of mis-specified models, where the true data is not consistent with the model. Understanding the behavior of the posterior in such cases is crucial for practical applications. We leave the exploration of this topic for future work.
Some possible directions include incorporating the idea of Bayesian predictive check \citep{guttman1967use,rubin1984bayesianly} to diagnose the model mis-specification and  estimating the conditional distribution given  robust summary statistics.

\section*{Acknowledgments}
The authors gratefully acknowledge support from the James S. Kemper Research Fund at the Booth School of Business and the National Science Foundation (DMS: 1944740, DMS: 2515542). This work used the DeltaAI system at the National Center for Supercomputing Applications through allocation MTH250021 from the Advanced Cyberinfrastructure Coordination Ecosystem: Services \& Support (ACCESS) program, which is supported by National Science Foundation grants \#2138259, \#2138286, \#2138307, \#2137603, and \#2138296. We also thank the editor and the anonymous reviewers for their constructive comments and suggestions, which have significantly improved the quality of this paper.

\bibliography{cgan}

\newpage

\appendix

\section{Theory for Jensen-Shannon Conditional GANs}\label{sec:JS_detail}

\begin{definition}(Discriminator)
We define a deterministic {\em discriminative model}  as a mapping $d:(\mathcal X\times \Theta)\rightarrow (0,1)$   which predicts whether the data pair $(X,\theta)$ came from $\pi(X,\theta)$ (label $1$) or from $\pi_{g}(X,\theta)$ (label $0$).
 \end{definition}

\begin{lemma}{\citep[Lemma 2.1]{zhou2022deep}}\label{lem:noise_outsourcing} Let $(X, \theta)$ be a random pair taking values in $\mX\times \Theta$ with a joint distribution $\pi(X,\theta)$. Then,  for any given ${d_z}\geq 1$, there exists a random vector $Z\sim \pi_Z = N(0, I_{d_z})$  and a Borel-measurable function $g^*: \R^{d}\times \mX \to \Theta$ such that $Z$ is independent of $X$ and 
$(X, \theta) = (X, g^*(Z, X))$ almost surely. 
\end{lemma}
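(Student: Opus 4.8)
The plan is to treat this as the classical noise-outsourcing (transfer) lemma from measure theory, whose only nonstandard feature here is the use of Gaussian rather than uniform noise. Since $\Theta\subseteq\R^d$ is a Borel subset of a Polish space, it is a standard Borel space, which is exactly the regularity needed for the conditioning and functional-representation arguments below. First I would prove the statement with a single uniform randomizer $U\sim\text{Unif}[0,1]$ and then convert $U$ into the prescribed Gaussian noise $Z\sim N(0,I_d)$ by a deterministic measurable map.

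Next I would invoke the existence of a regular conditional distribution. Because $\Theta$ is standard Borel, the conditional law of $\theta$ given $X$ exists as a probability (Markov) kernel $\mu(x,A)=\P(\theta\in A\mid X=x)$, measurable in $x$ for each Borel $A\subseteq\Theta$ \citep{kallenberg2002foundations}. The crux is then the \emph{functional representation of kernels}: for a kernel into a standard Borel space there is a jointly measurable map $h:\mX\times[0,1]\to\Theta$ such that, for every $x$, the pushforward of Lebesgue measure under $h(x,\cdot)$ equals $\mu(x,\cdot)$. Intuitively $h(x,\cdot)$ is a measurably-chosen quantile (inverse-CDF) transform of the conditional law $\mu(x,\cdot)$, and constructing it uniformly in $x$ is where the standard-Borel assumption is essential. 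Consequently, if $U\sim\text{Unif}[0,1]$ is drawn independently of $X$, then $h(X,U)$ has the same conditional law given $X$ as $\theta$, so that $(X,h(X,U))\overset{d}{=}(X,\theta)$.

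To upgrade this distributional equality to the asserted \emph{almost sure} identity, I would apply the transfer theorem: since $(X,h(X,U))$ and $(X,\theta)$ have the same law and $U$ is a randomizer independent of $X$, one may realize (after enlarging the probability space if necessary) a uniform variable, again independent of $X$, for which $\theta=h(X,U)$ holds almost surely. Finally, to replace $U$ by Gaussian noise, fix any $d\geq 1$, let $Z\sim N(0,I_d)$ be independent of $X$, and set $U=\Phi(Z_1)$, where $\Phi$ is the standard normal CDF and $Z_1$ the first coordinate of $Z$; the probability integral transform gives $U\sim\text{Unif}[0,1]$, independent of $X$. Defining $g^*(z,x)=h(x,\Phi(z_1))$ then yields a Borel map with $\theta=g^*(Z,X)$ almost surely, hence $(X,\theta)=(X,g^*(Z,X))$ a.s.

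I expect the main obstacle to be the functional-representation step, namely producing the randomization map $h(x,\cdot)$ \emph{jointly measurably} in the conditioning variable $x$: pointwise in $x$ this is just the inverse-CDF trick, but securing measurability in $x$ (and handling a general standard Borel $\Theta$ rather than an interval) is the genuinely measure-theoretic part, and is precisely what the standard-Borel structure of $\Theta$ buys us. The Gaussian-to-uniform conversion and the transfer step are comparatively routine once this representation is in hand.
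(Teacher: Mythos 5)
The paper does not actually prove this lemma: it is imported by citation as Lemma 2.1 of \citep{zhou2022deep}, which in turn rests on the noise-outsourcing/transfer theorem in \citep{kallenberg2002foundations}. Your argument is, in substance, the standard proof of that cited result --- existence of a regular conditional distribution $\mu(x,\cdot)$ of $\theta$ given $X$ (using that $\Theta\subset\R^d$ is standard Borel), functional representation (randomization) of the kernel by a jointly measurable $h:\mX\times[0,1]\to\Theta$, the transfer theorem to upgrade the distributional identity $(X,h(X,U))\overset{d}{=}(X,\theta)$ to an almost-sure one on an enlarged probability space, and a probability integral transform to trade uniform for Gaussian noise --- so it coincides with the canonical route behind the paper's citation rather than offering a different one.

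One detail in your final step is stated in the wrong order and, as literally written, would fail. After the transfer step you hold a \emph{specific} uniform $U$, independent of $X$, satisfying $\theta=h(X,U)$ a.s.; you cannot then draw a \emph{fresh} $Z\sim N(0,I_d)$ independent of $X$ and set $U=\Phi(Z_1)$, because for such a $Z$ you only recover equality in distribution, not the almost-sure identity. Run the construction in the other direction: define $Z_1=\Phi^{-1}(U)$, append coordinates $Z_2,\dots,Z_d$ that are independent standard normals, independent of everything else (enlarging the space once more if necessary), and set $g^*(z,x)=h\big(x,\Phi(z_1)\big)$. Then $Z=(Z_1,\dots,Z_d)\sim N(0,I_d)$ is independent of $X$, and $g^*(Z,X)=h(X,U)=\theta$ a.s., which is exactly the claimed statement. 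With that reordering your proof is complete and correct.
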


\begin{lemma}\label{lemma:cGAN}
Consider a minimax game
$(g^*,d^*)=\arg\min\limits_{g\in\mathcal G}\max\limits_{d\in\mathcal D} D(g,d)$
prescribed by
\begin{equation}\label{eq:gan_obj}
D(g,d)=E_{(X,\theta)\sim\pi(X,\theta)}\log d(X,\theta)+E_{X\sim \pi(X),Z\sim\pi_Z}\log[1-d(X,g(Z,X)].
\end{equation}
Assume that $\mathcal G$ and $\mathcal D$ are universal approximators capable of representing any function  $g:(\mathcal Z\times \mathcal X)\rightarrow \Theta$ and $d:(\mathcal X\times \Theta)\rightarrow (0,1)$, respectively. Then, uniformly on $\mX$ and $\Theta$, the solution $(g^*,d^*)$   satisfies 
\begin{equation*}
\pi_{g^*}(\theta\C X)=\frac{\pi(X,\theta)}{\pi(X)}=\pi(\theta\C X)\quad\text{and}\quad d^*_g(X,\theta)=\frac{\pi (X,\theta)}{\pi(X,\theta)+\pi_g(X,\theta)} \quad\text{for any $g\in\mathcal G$}.
\end{equation*}
\end{lemma}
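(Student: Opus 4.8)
The plan is to adapt the two-step argument from \citep{goodfellow2014generative} to the joint (conditional) setting, exploiting throughout that both the true joint $\pi(X,\theta)$ and the implicit model $\pi_g(X,\theta)=\pi_g(\theta\C X)\pi(X)$ share the \emph{same} $X$-marginal $\pi(X)$ by construction (Definition \ref{def:generator}). First I would fix an arbitrary generator $g\in\mG$ and solve for the best-responding discriminator. The decisive observation is that, when $X\sim\pi(X)$ and $Z\sim\pi_Z$ are drawn independently, the pair $(X,g(Z,X))$ is distributed according to $\pi_g(X,\theta)$; hence the second expectation in \eqref{eq:gan_obj} rewrites as an integral against $\pi_g(X,\theta)$ and
\[
D(g,d)=\int_{\mX\times\Theta}\Big[\pi(X,\theta)\log d(X,\theta)+\pi_g(X,\theta)\log\big(1-d(X,\theta)\big)\Big]\,\d\theta\,\d X .
\]
For each fixed $(X,\theta)$ the integrand has the form $a\log y+b\log(1-y)$ with $a=\pi(X,\theta)$, $b=\pi_g(X,\theta)$, which over $y\in(0,1)$ is uniquely maximized at $y=a/(a+b)$. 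Since $\mD$ is a universal approximator, this pointwise maximizer is attainable, yielding the asserted optimal discriminator
\[
d^*_g(X,\theta)=\frac{\pi(X,\theta)}{\pi(X,\theta)+\pi_g(X,\theta)}\qquad\text{for any }g\in\mG .
\]

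Second, I would substitute $d^*_g$ back into the objective and minimize the resulting functional $C(g):=D(g,d^*_g)$ over $g$. A direct rearrangement (adding and subtracting $\log 2$ inside each logarithm) casts $C(g)$ into the Jensen--Shannon form
\[
C(g)=-\log 4+\text{JS}\big(\pi(X,\theta),\pi_g(X,\theta)\big),
\]
with $\text{JS}$ in the normalization of the footnote to \eqref{eq:JS}. By nonnegativity of the Jensen--Shannon divergence, $C(g)\geq-\log 4$, with equality if and only if $\pi_g(X,\theta)=\pi(X,\theta)$ almost everywhere. Because the two joints share the marginal $\pi(X)$, matching the joints is equivalent to matching the conditionals; hence at any minimizer $g^*$ one obtains $\pi_{g^*}(\theta\C X)=\pi(X,\theta)/\pi(X)=\pi(\theta\C X)$, which is the claim. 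The final point is to certify that the minimum is actually attained inside $\mG$: the noise-outsourcing lemma (\Cref{lem:noise_outsourcing}) supplies a Borel map whose pushforward of $(Z,X)$ reproduces $\pi(\theta\C X)$ given $X$, and the universal-approximation hypothesis on $\mG$ guarantees such a $g^*$ lies in the admissible class.

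The routine calculations here are the pointwise optimization of $a\log y+b\log(1-y)$ and the algebra collapsing $C(g)$ onto the Jensen--Shannon divergence; neither is the crux. The genuinely load-bearing step is the reparametrization identifying $(X,g(Z,X))$ with a draw from $\pi_g(X,\theta)$ that carries the \emph{fixed} marginal $\pi(X)$. It is this shared marginal that promotes equality of the two joints to the desired equality of conditionals, and it is precisely what separates the conditional statement from the vanilla GAN result of Section \ref{sec:vGAN}. I would take some care to record the measure-theoretic side conditions under which the argument is valid: absolute continuity of $\pi$ and $\pi_g$ with respect to a common dominating measure on $\mX\times\Theta$, integrability of the logarithmic integrands so that the pointwise maximization and the Fubini-type interchange are legitimate, and the caveat that equality of conditionals holds only $\pi(X)$-almost everywhere and up to $\pi(\theta\C X)$-null sets.
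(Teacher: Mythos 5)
Your proof is correct and follows essentially the same route as the paper's: fix $g$, obtain the optimal discriminator, plug it back in to reduce the generator's objective to a Jensen--Shannon divergence between the joints, and use the shared marginal $\pi(X)$ to pass from equality of joints to equality of conditionals. The only difference is presentational: where the paper cites Proposition 1 and Theorem 1 of Goodfellow et al.\ for the two key steps, you derive them inline (the pointwise maximization of $a\log y+b\log(1-y)$ and the $-\log 4+\text{JS}$ rearrangement) and additionally invoke the noise-outsourcing lemma for attainability, which strengthens rather than alters the argument.
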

\begin{proof}
The expression for $d^*_g(X,\theta)$ is an immediate consequence of Proposition 1 in  \citet{goodfellow2014generative}.
Plugging-in this expression into \eqref{eq:gan_obj}, we find that
$$
g^*=\argmin_{g\in\mathcal G} \left( E_{(X,\theta)\sim\pi(X,\theta)}\log d^*_g(X,\theta)+E_{X\sim \pi(X),z\sim\pi_Z}\log[1-d^*_g(X,g(Z,X))]\right),
$$
According to Theorem 1 of  \citet{goodfellow2014generative}, the minimum is achieved if and only if $\pi_{g^*}(X,\theta)={\pi(X,\theta)}=\pi(\theta\C X)\pi(X)$. The fact that  $\pi(X,\theta)$ and $\pi_{g^*}(X,\theta)$ have the same marginal $\pi(X)$ implies 
the expression for $\pi_{g^*}(\theta\C X)$.
\end{proof}

\section{Proofs from \Cref{sec:theory}}\label{sec:proof}

\subsection{Proof of \Cref{thm:bound}}\label{sec:proof_thm_bound}

\begin{proof}
{We continue denoting $\Xn$ simply by $X$.} Recall the definition of the KL neighborhood 
\begin{equation}\label{eq:KL}
B_n(\theta_0;\epsilon)=\{\theta\in\Theta: \KL(P_{\theta_0}^{(n)}\| P_{\theta}^{(n)})\leq n\epsilon^2, V_{2,0}(P_{\theta_0}^{(n)},P_{\theta}^{(n)})\leq n\epsilon^2\},
\end{equation}
where
$\KL(P_{\theta_0}^{(n)}\| P_{\theta}^{(n)})=P_{\theta_0}^{(n)}\log[p_{\theta_0}^{(n)}/p_{\theta}^{(n)}]$ and
\begin{equation}\label{eq:variance}
V_{2,0}(P_{\theta_0}^{(n)},P_{\theta}^{(n)})=P_{\theta_0}^{(n)}\abs{\log[p_{\theta_0}^{(n)}/p_{\theta}^{(n)}]-\KL(P_{\theta_0}^{(n)}\| P_{\theta}^{(n)})}^2.
\end{equation}
We define an event, for some fixed $C>0$ and $\epsilon>0$,
$$
\mA_n(\epsilon)=\left\{X: \int_{B_n(\theta_0;\epsilon)}\frac{p_\theta^{(n)}(X)}{p_{\theta_0}^{(n)}(X)}\pi(\theta)\d\theta>\e^{-(1+C)n\varepsilon^2}\Pi[B_n(\theta_0;\epsilon)]\right\}.
$$
{We denote with $\E$ the expectation with respect to $\{(\theta_j,X_j)\}_{j=1}^T$ from the ABC reference table sampled from the joint $\pi(\theta, X)$. For simplicity of notation, we use $\E_{\wh\b_T}$ interchangeably with $\E$, since it is equivalently accounting for the randomness in $\wh\b_T$.} Using the fact that the total variation distance is bounded by $2$,  we have
\begin{align}
P_{\theta_0}^{(n)}{\E_{\hat\beta_T}} d_{TV}^2\left(\nu(X_0),\mu_{\wh\b_T}(X_0)\right)=&\int_{\mA_n(\epsilon)}p_{\theta_0}^{(n)}(X_0) {\E_{\hat\beta_T}}  d_{TV}^2\left(\nu(X_0),\mu_{\wh\b_T}(X_0)\right)\d X_0\label{eq:decompose}\\
&+4\, \P_{\theta_0}^{(n)}[\mA_n^c(\epsilon)].
\end{align}
According to Lemma 10 in \cite{ghosal2007convergence}, we have $\P_{\theta_0}^{(n)}[\mA_n^c(\epsilon)]\leq \frac{1}{C^2n\epsilon^2}$.  Denoting with $\pi(X)=\int_\theta p_\theta^{(n)}(X)\pi(\theta)\d\theta$ the marginal likelihood,  we can rewrite the term in \eqref{eq:decompose} as
$$
{\int_{\mX}\mathbb I_{\mA_n(\epsilon)}}(X)\pi(X)r(X) {\E_{\hat\beta_T}} d_{TV}^2\left(\nu(X),\mu_{\wh\b_T}(X)\right)\d X
$$
where 
$$
\frac{1}{r(X)}\equiv{\int_{\Theta}\frac{p_{\theta}^{(n)}(X)}{p_{\theta_0}^{(n)}(X)}\pi(\theta)\d\theta}\geq {\int_{  B_n(\theta_0;\epsilon)}\frac{p_{\theta}^{(n)}(X)}{p_{\theta_0}^{(n)}(X)}\pi(\theta)\d\theta}.
$$
On the event $\mA_n(\epsilon)$, we can thus write
$$
r(X)< \frac{\e^{(1+C)n\epsilon^2}}{\Pi[B_n(\theta_0;\epsilon)]}.
$$
Under the assumption \eqref{eq:prior_conc}, the term in \eqref{eq:decompose} can be  upper bounded by
\begin{equation}\label{eq:until}
\e^{(1+C+C_2)n\epsilon^2}\int_{\mX}\pi(X) {\E_{\hat\beta_T}} d_{TV}^2\left(\nu(X),\mu_{\wh\b_T}(X)\right)\d X \leq \frac{\e^{(1+C+C_2)n\epsilon^2}}{4} {\E_{\hat\beta_T}} [{\KL(\nu  \| \mu_{\wh\b_T})+\KL(\mu_{\wh\b_T}\| \nu)}].
\end{equation}
The inequality above stems from the Pinsker's inequality \citep[Theorem 4.8]{van2014probability}  and the fact that the joint measures $\nu$ and $\mu_{\wh\b_T}$ have the same marginal distribution $\pi(X)$. In particular, {using Fubini's theorem}, we can write
\begin{align*}
&\int_{\mX}\pi(X) {\E_{\hat\beta_T}} d_{TV}^2\left(\nu(X),\mu_{\wh\b_T}(X)\right)\d X\\
&\quad\leq \frac{1}{4}\int_{\mX}\pi(X) {\E_{\hat\beta_T}} [\KL(\nu(X) \| \mu_{\wh\b_T}(X))+\KL(\mu_{\wh\b_T}(X) \| \nu(X))]\d X\\
&\quad= \frac{1}{4} {\E_{\hat\beta_T}}  \int_{\mX}\pi(X)  \int_\Theta \log\frac{\pi(\theta\C X)}{\pi_{\wh\b_T}(\theta\C X)}\left[\pi(\theta\C X)-\pi_{\wh\b_T}(\theta\C X)\right]\d \theta \d X \\
&\quad= \frac{1}{4} {\E_{\hat\beta_T}}  \left[{\KL(\nu \| \mu_{\wh\b_T})+\KL(\mu_{\wh\b_T} \| \nu)}\right]\equiv \frac{1}{4} {\E_{\hat\beta_T}} d_{\KL}^S(\nu, \mu_{\wh\b_T}).
\end{align*}
The above inequality is essential for understanding how the average squared total variation distance between the posterior and its approximation (with the average taken with respect to the observed data generating process) can be related to the `symmetrized' KL divergence $d_{\KL}^S(\nu, \mu_{\wh\b_T})$ between the {\em joint} distribution and its approximation. We now continue to bound the symmetrized KL divergence.
For simplicity, we denote with $\wh\b$ the estimator $\wh\b_T$ in \eqref{eq:bhat}. We have the following decomposition, for any $\bomega$ such that $f_{\bomega}\in\mF$,
 \begin{align*}
d_{\KL}^S(\nu, \mu_{\wh\b})=& \int_{\mX}\pi(X)\int_\Theta f_\omega(\theta,X)[\pi(\theta\C X)-\pi_{\wh\b}(\theta\C X)]\d\theta \d X\\
&+\int_{\mX}\pi(X) \int_{\Theta}\left[ \log \frac{\pi(\theta\C X)}{\pi_{{\wh\b}}(\theta\C X )}-f_\omega(\theta,X)\right][\pi(\theta\C X)-\pi_{{\wh\b}}(\theta\C X)]\d\theta \d X\\
\leq& \,
d_\mF\left(\nu,\mu_{\wh\b}\right)+2\left\| \log \frac{\pi(\theta\C X)}{\pi_{{\wh\b}}(\theta\C X )}-f_\omega(\theta,X)\right\|_{\infty},
\end{align*}
where we have used the inequality $\int f g\leq \|f\|_\infty\|g\|_1$ and the fact that  $\pi(\theta\C X)$ and $\pi_{{\wh\b}}(\theta\C X)$ are both non-negative and integrate to one.
Then, choosing $f_{\bomega}\in\mF$ that minimizes the second term we obtain
$$
d_{\KL}^S(\nu, \mu_{\wh\b})\leq 2\, \mathcal A_1(\mF, \mG,\wh\b) + d_\mF\left(\nu,\mu_{\wh\b}\right),
$$
{where $\mathcal A_1(\mF,{\mG},\wh\b)$ is defined as
\[\mathcal A_1(\mF,{\mG},\wh\b)\equiv  \inf_{\bomega:f_{\bomega}\in\mF}\left\| \log \frac{\pi(\theta\C X)}{\pi_{{\wh\b_T}}(\theta\C X )}-f_\omega(X, \theta)\right\|_{\infty}\]
and $\mathcal A_1(\mF,\mG)=\E_{\wh\b}\mathcal  A_1(\mF,\mG,\wh\b)$ was defined in \eqref{eq:term1}.}

We now apply a mild modification of the oracle inequality in \citep[Lemma 12]{liang2021well}. As long as $\mF$ and $\mathcal H$ are symmetric\footnote{i.e. if $f\in\mF$ then also $-f\in\mF$}, then  for any $\b$ such that $g_{\b}\in\mG$ we have
\begin{equation}\label{eq:oracle}
d_\mF(\nu,\mu_{ \wh\b})\leq d_\mF(\mu_{\b},\nu)+2d_\mF(\bar \nu_T,\nu)+d_\mF(\bar\mu^{\b}_T,\mu_{\b})+d_{\mathcal H}(\bar \pi_T,\pi),
\end{equation}
where $\bar\nu_T$ and $\bar\mu^{\b}_T$ are the empirical counterparts of $\nu, \mu_{\b}$ based on $T$ iid samples (ABC reference table $\{(\theta_j,X_j)\}_{j=1}^T$ for $\nu$ and $\{(g_{\b}(Z_j,X_j),X_j)\}_{j=1}^T$ with $Z_j\iid \pi_Z$ for $\mu_{\b}$). In addition $\bar\pi_T$ is the empirical version for the distribution $\pi_Z$ for $Z$
and   $\mathcal H=\{h_{\bomega,\b}: h_{\bomega,\b}(Z,X)=f_{\bomega}(X,g_{\b}(Z,X))\}$. This oracle inequality can be proved analogously as \citep[Lemma 12]{liang2021well} the only difference being that due to the conditional structure of our GANs the function class $\mathcal H$ is not entirely a composition of networks $f_{\bomega}$ and $g_{\b}$ but has a certain nested structure. 
Similarly as in \citep{liang2021well} (proof of Theorem 13), we  can write for any $\b$ such that  $g_{\b}\in\mG$
\begin{align*}
d_\mF(\mu_{\b},\nu)&\leq B\times d_{TV}(\mu_{\b},\nu)\leq B\,\sqrt{\frac{1}{4}d_{\KL}^S(\mu_{\b},\nu)}\\
&\leq \frac{B}{\sqrt 2}\left[{\int_{\mX}\left\|\log \frac{\pi_{\b}(\theta\C X)}{\pi(\theta\C X)}\right\|_\infty  {\pi(X)} \d X}\right]^{1/2}.
\end{align*}
Choosing $\b$ that minimizes the expectation on the right side, we obtain
$d_\mF(\mu_{\b},\nu)\leq \frac{B}{\sqrt{2}}\mathcal A_2(\mG)$, where the term $A_2(\mG)$ was defined in \eqref{eq:term2}.
Denote with 
$$
R_T(\mF)=E_{\bm\varepsilon}\sup_{f\in\mathcal F}\frac{1}{T}\sum_{j=1}^T\varepsilon_j f(X_j,\theta_j)
$$ 
the Rademacher complexity with $\bm\varepsilon=(\varepsilon_1,\dots,\varepsilon_T)'$ iid Rademacher\footnote{taking values $\{-1,+1\}$ with probability $1/2$.} variables.
For the second term in \eqref{eq:oracle}, the symmetrization property (see e.g. Lemma 26 in \citet{liang2021well} or Theorem 3.17 in \citet{sen2018gentle})
yields  for $T\geq Pdim(\mathcal F)$
$$
\E\, d_\mF(\bar \nu_T,\nu)\leq 2\, \E\, R_T(\mF)\leq \wt C\times B\sqrt{\frac{Pdim(\mathcal F)\log T}{T}}
$$
for some $\wt C>0$, where  $Pdim(\mathcal F)$ is the pseudo-dimension of the function class $\mathcal F$ (Definition 2 in \citep{bartlett2017spectrally}). The second inequality follows, for example, from Lemma 29 in \citep{liang2021well}. The bounds on 
$\E\, d_{\mathcal H}(\bar \pi_T,\pi),$ and  $\E\, d_\mF(\bar\mu^{\beta}_T,\mu_\beta)$ in \eqref{eq:oracle} are analogous.
Putting the pieces together from the oracle inequality in \eqref{eq:oracle} we can upper-bound $P_{\theta_0}^{(n)}\E d_{TV}^2\left(\nu(X_0),\mu_{\wh\b_T}(X_0)\right)$ with
\begin{align*}
\frac{1}{C^2n\varepsilon^2} +\frac{\e^{{(1+C+C_2)}n\varepsilon^2}}{4} \left[2\mathcal A_1(\mF,{\mG})+
\frac{B}{\sqrt 2}\mathcal A_2(\mG)+4\wt C\,B \sqrt{\frac{\log T}{T}}\left(Pdim(\mF)\vee Pdim(\mathcal H)\right)^{1/2} \right] 
\end{align*}
which yields the desired statement.
\end{proof}

\subsection{Proof of Corollary \ref{cor:bound}}\label{sec:proof_cor_bound}

\begin{proof}
We continue to use the shorthand notation $X$ for $\Xn$ and $\wh\b$ for $\wh\b_T$.
Denote with $\wt \pi(\theta)$ the proposal distribution. Then, the posterior $\wt\pi(\theta\C X)$ under $\wt\pi(\theta)$ satisfies
\begin{equation}\label{eq:reweight_posterior}
\pi(\theta\C X)=\wt\pi(\theta\C X)\times R(X)\times r(\theta),\quad\text{where}\quad R(X)= \frac{\wt\pi(X)}{\pi(X)} \,\,\text{and}\,\, r(\theta)=\frac{\pi(\theta)}{\wt\pi(\theta)}.
\end{equation}
Our reconstruction in Algorithm \ref{alg:WGAN-RL} works by first approximating the joint distribution $\wt\pi(\theta,X)$ and then reweighting by the prior ratio, namely
\begin{equation}\label{eq:reweigh}
\pi_{\wh\b}(\theta\C X)=\wt\pi_{\wh\b}(\theta\C X)\times R(X)\times r(\theta),
\end{equation}
where $\wh \b$ has been learned by B-GAN (Algorithm \ref{alg:WGAN}) by matching the joint $\wt\pi(\theta,X)=\wt\pi(\theta\C X)\wt\pi(X)$ under the prior $\wt\pi(\theta)$. We denote the joint measure with this density by $\wt\nu$. 
Denote with $\mu_{\wh\b_T}(X)$ the approximating conditional measure with a density \eqref{eq:reweigh}.  
We can apply the same steps as in the proof of Theorem  \ref{thm:bound} until the step in \eqref{eq:until}. Similarly, we denote with $\wt\E$ the expectation with respect to $\{(\theta_j,X_j)\}_{j=1}^T$ from the ABC reference table sampled from the joint $\tilde \pi(\theta, X)$, and we use $\wt\E_{\wh\b_T}$ interchangeably with $\wt\E$.
The next steps will have minor modifications.  Notice that
$$
\log\frac{\pi(\theta\C X)}{\pi_{\wh\b}(\theta\C X)}=\log\frac{\wt\pi(\theta\C X)}{\wt\pi_{\wh\b}(\theta\C X)}
$$
and thereby
\begin{align*}
&\int_{\mX}\pi(X) {\wt\E_{\wh \b}} d_{TV}^2\left(\nu(X),\mu_{\wh\b_T}(X)\right)\d X\\
&\quad\leq \frac{1}{4}\int_{\mX}\pi(X) {\wt\E_{\wh \b}} \Big[{\KL\big(\nu(X) \| \mu_{\wh\b}(X)\big)+\KL\big(\mu_{\wh\b}(X)\| \nu(X)\big)}\Big]\d X\\
&\quad= \frac{1}{4}  {\wt\E_{\wh \b}} \int_{\mX}\pi(X)\int_\Theta \log\frac{\pi(\theta\C X)}{\pi_{\wh\b}(\theta\C X)}\left[\pi(\theta\C X)-\pi_{\wh\b}(\theta\C X)\right]\d\theta\d X \\
&\quad= \frac{1}{4}  {\wt\E_{\wh\b}}\int_{\mX}\wt\pi( X)\int_\Theta r(\theta) \log\frac{\pi(\theta\C X)}{\pi_{\wh\b}(\theta\C X)}\left[\wt\pi(\theta\C X)-\wt\pi_{\wh\b}(\theta\C X)\right]\d\theta\d X \\
&\quad \equiv \frac{1}{4}  {\wt\E_{\wh\b}} d_{\KL}^S(\wt\nu, \wt\mu_{\wh\b}).
\end{align*}
Using similar arguments as in the proof of Theorem \ref{thm:bound},  we have the following decomposition, for any $\bomega$ such that $f_{\bomega}\in\mF$,
 \begin{align*}
d_{\KL}^S(\wt\nu, \wt\mu_{\wh\b})=& \int_{\mX}\wt\pi(X)\int_\Theta f_\omega(\theta,X)[\wt\pi(\theta\C X)-\wt\pi_{\wh\b}(\theta\C X)]\d\theta \d X\\
&+\int_{\mX}\wt\pi(X) \int_{\Theta}\left[r(\theta) \log \frac{\pi(\theta\C X)}{\pi_{{\wh\b}}(\theta\C X )}-f_\omega(\theta,X)\right][\wt\pi(\theta\C X)-\wt\pi_{{\wh\b}}(\theta\C X)]\d\theta \d X\\
\leq& \,
d_\mF\left(\wt\nu,\wt\mu_{\wh\b}\right)+2\left\|r(\theta) \log \frac{\pi(\theta\C X)}{\pi_{{\wh\b}}(\theta\C X )}-f_\omega(\theta,X)\right\|_{\infty}.
\end{align*}
The rest of the proof is analogous. The only difference is that $\wh\b$ now minimizes the empirical version of $d_\mF\left(\wt\nu,\wt\mu_{\wh\b}\right)$ under the proposal distribution $\wt\pi(\theta)$.
 
\subsection{Motivation for the Sequential Refinement}\label{sec:motivation_RL}

\begin{remark}[2step Motivation]\label{rem:RL}
For the proposal distribution $\wt\pi(\theta)$, using similar arguments as in the proof of Theorem \ref{thm:bound},  the TV distance of the posterior at $X_0$ (not averaged over $P_{\theta_0}^{(n)}$) can be upper-bounded by 
\begin{align*}
4\,d_{TV}^2\left(\nu(X_0),\mu_{\wh\b}(X_0)\right)\leq&\,\, 2\,\mathcal A_1(\mF,\mG,X_0)+ \frac{B}{\sqrt 2}\mathcal A_2(\mG)+4\wt C\,B \sqrt{\frac{\log T\times Pmax}{T}} + A_3(\wt\pi) 
\end{align*}
where $\mathcal A_1(\mF,\mG,X_0)\equiv\sup_{\b:  g_{\b} \in \mG} \inf_{\bomega:f_\bomega \in \mF} \left\|\log \frac{\pi(\theta\C X_0)}{\pi_{{\b}}(\theta\C X_0)}-\frac{f_\omega(X_0,\theta)}{r(\theta)}\right\|$ is the discriminability {\em evaluated at $X_0$} (as opposed to \eqref{eq:term1}) and where
$$
A_3(\wt\pi)=2 \int_{\mX}\wt\pi(X)\left[ \|f_{\bomega}(X_0,\theta)-f_{\bomega}(X,\theta)\|_{\infty}+ B\|g_{\wh\b}(\theta)(X)-g_{\wh\b}(\theta)(X_0) \|_1\right] \d X
$$
and $g_{\wh\b}(\theta)(X)\equiv \pi(\theta\C X)-\pi_{{\wh\b}}(\theta\C X)$.
 This decomposition reveals how the TV distance can be related to discriminability around  $X_0$  and an average discrepancy between the true and approximated posterior densities  relative to their value at $X_0$ where the average is taken over the marginal $\wt\pi(X)$. These averages will be smaller since the marginal $\wt\pi(X)$ produces datasets more similar to $X_0$. For example, an approximation to the the posterior predictive distribution
$\wt\pi(X)=\int_\mX p_{\theta}^{(n)}(X)\pi_{\wh \b}(\theta\C X_0)$ where $\wh\b$ has been learned by B-GAN (Algorithm \ref{alg:WGAN}) is likely to yield datasets similar to $X_0$, thereby producing a tighter upper bound than a flat prior.

\end{remark}

We provide clarifications of the calculations and reasoning in Remark \ref{rem:RL}.
We assume that a prior distribution $\wt\pi(\theta)$ has been used in the ABC reference table that yields the marginal $\wt\pi(X)=\int_{X}p_{\theta}^{(n)}(X)\wt\pi(\theta)\d\theta$. 
Recall the definition of the reweighted posterior reconstruction in \eqref{eq:reweigh} and 
\eqref{eq:reweight_posterior}.
Denote with  
$$
g_{\wh\b}(\theta)(X)\equiv \pi(\theta\C X)-\pi_{{\wh\b}}(\theta\C X)= R(X)\times r(\theta)\times [\wt\pi(\theta\C X)-\wt\pi_{{\wh\b}}(\theta\C X)]
$$ the difference between true and approximated posteriors at $X$, where $\wh\b$ has been trained using the proposal $\wt\pi(\theta)$ and where $R(X)=\wt\pi(X)/\pi(X)$ and $r(\theta)=\pi(\theta)/\wt\pi(\theta)$.
Using similar arguments as in the proof of Theorem \ref{thm:bound},  the squared TV distance of the posterior and its approximation  satisfies, for any element $f_{\bomega}\in\mF=\{f:\|f\|_{\infty}\leq B\}$,
\begin{align*}
4\,d_{TV}^2\left(\nu(X_0),\mu_{\wh\b}(X_0)\right)\leq&  \int_{\Theta} \log \frac{\pi(\theta\C X_0)}{\pi_{{\wh\b}}(\theta\C X_0)}g_{\wh\b}(\theta)(X_0)\d\theta  \\
=& \int_{\Theta}\left[ \log \frac{\pi(\theta\C X_0)}{\pi_{{\wh\b}}(\theta\C X_0)}-\frac{f_\omega(X_0,\theta)}{r(\theta)}\right]g_{\wh\b}(\theta)(X_0)\d\theta \\
&+ \int_{\mX}\pi(X)\int_\Theta \frac{f_\omega(X,\theta)}{r(\theta)}g_{\wh\b}(\theta)(X_0)\d\theta \d X\\
&+ \int_{\mX}\frac{\pi(X)}{r(\theta)}\int_\Theta [f_\omega(X_0,\theta)-{f_\omega(X,\theta)}]g_{\wh\b}(\theta)(X_0)\d\theta \d X\\
\leq &2\, \inf_{\bomega} \left\|\log \frac{\pi(\theta\C X_0)}{\pi_{{\wh\b}}(\theta\C X_0)}-\frac{f_\omega(X_0,\theta)}{r(\theta)} \right\|_{\infty}+ d_\mF\left(\wt\nu,\wt\mu_{\wh\beta}\right)\\
&+2\int_{\mX}\wt\pi(X)\|f_{\bomega}(X_0,\theta)-f_{\bomega}(X,\theta)\|_{\infty}\d X\\
&+2 B\times \int_{\mX}\wt\pi(X)\left\|\frac{g_{\wh\b}(\theta)(X)}{R(X)r(\theta)}-\frac{g_{\wh\b}(\theta)(X_0)}{R(X_0)r(\theta)} \right\|_1\d X.
\end{align*}
The term $d_\mF\left(\wt\nu,\wt\mu_{\wh\b}\right)$ can be bounded as in the proof of Corollary \ref{cor:bound} 
by 
$$
\frac{B}{\sqrt 2}\wt{\mathcal A}_2(\mG)+4\wt C\,B \sqrt{\frac{\log T\times Pmax}{T}} 
$$
Compared to Corollary \ref{cor:bound}, the upper bound on $d_{TV}^2\left(\nu(X_0),\mu_{\wh\b}(X_0)\right)$ now involves the discriminability {\em evaluated at  $X_0$} (not averaged over the marginal $\wt\pi(X)$), i.e.
$$
\mathcal A_1(\mF,\mG,X_0)\equiv\sup_{\b:  g_{\b} \in \mG} \inf_{\bomega:f_\bomega \in \mF} \left\|\log \frac{\pi(\theta\C X_0)}{\pi_{{\b}}(\theta\C X_0)}-\frac{f_\omega(X_0,\theta)}{r(\theta)}\right\|.
$$
The additional two terms in the upper bound involve integration over $\wt\pi(X)$.

\end{proof}

\subsection{Proof of Corollary \ref{cor:dnn}}\label{sec:proof_cor_dnn}

With $\varepsilon_n=O(1/\sqrt{n})$  we need to verify that for some suitable choice  of $C_n\rightarrow\infty$ we have as $n\rightarrow\infty$
\begin{align}
A_1(\mF, \mG)&=o(\e^{-C_n})\label{eq:ass1}\\
A_2(\mG)&=o(\e^{-C_n})\label{eq:ass2}\\
 {\frac{\log T}{T}}\times \big[Pdim(\mF)\vee Pdim(\mathcal H) \big]&=o(\e^{-2C_n})\label{eq:ass3}
\end{align}
for $T$ that is large enough, i.e. $T\geq Pdim(\mF)\vee Pdim(\mathcal H)$.
The term $\mathcal A_2(\mG)$ equals zero from our assumption of representability of $\pi(\theta,X)=\pi_{g_{\b_0}}(\theta,X)$ for $g_{\b_0}\in\mathcal G$, which verifies \eqref{eq:ass2}.
We assume that $\Xn$ is a stacked vector of $n$ observed vectors of length $q$, not necessarily iid, and denote  $d^*=d+nq$. 
Using leaky ReLU networks and assuming representability,  for any $\b$ such that $g_{\b}\in\mathcal G$  the log posterior ratio $r_{\b}(\theta,\Xn)=\log \frac{\pi(\theta\C \Xn)}{\pi_{\b}(\theta\C \Xn)}$ is continuous and, due to boundedness of the network weights, satisfies
$$
0<\underbar C\leq r_{\b}(\theta,\Xn)\leq \bar C<\infty
$$
for any fixed $d^*$. With large enough $T$ and setting $E=[-\log T,\log T]^{d^*}$  and $R=\log T$, Lemma \ref{lemma:approx} yields that there exists a ReLU network $f_{\bomega}\in\mF $ 
with a width 
$$
W=3^{d^*+3}\max\{d^*\lfloor N^{1/d^*}\rfloor, N+1\}
$$ 
and depth $L=12\log T+14+2d^*$ such that 
{$$
\mathcal A_1(\mF,\mG)\leq {\sup_{\b: g_{\b} \in \mG } \inf_{\bomega: f_{\bomega}\in \mF}} \|r_{\b}(\theta,\Xn)- f_{\bomega}(\Xn,\theta)\|_{L_\infty(E)}\leq 19 \sqrt{d^*}\omega_f^E(2 (\log T)^{1-2/d^*}N^{-2/d^*}), 
$$}
where $\omega_f^E$  is the modulus of continuity of $f(t)$ satisfying $\omega_f^E(t)\rightarrow 0$ as $t\rightarrow0^+.$
Choosing $N$ such that $2^{d^*/2} (\log T)^{d^*/2-1}=o(N)$ as $T\rightarrow\infty$, the right-hand side above goes to zero for any fixed $d^*=d+nq$. For each $n$, we can find $T$ large enough (depending on the modulus of continuity) 
such that $\mathcal A_1(\mF,\mG)\e^{C_n}\sqrt{d^*}\leq \eta_n$ for some $\eta_n=o(1)$, yielding \eqref{eq:ass1}. The smallest $T$ that satisfies this will be denoted with $T_n$.

In order to verify \eqref{eq:ass3},   Theorem 14.1 in  \citet{anthony1999neural} and Theorem 6 in \citet{bartlett2017spectrally}  show that for piecewise linear activation functions (including ReLU and leaky ReLU) there exist constants $c,C>0$ such that
$$
c\times SL\log (S/L)\leq Pdim(\mF)\leq C\times S L\log S,
$$
where $\mF$ is a class of discriminator networks with $L$ layers and $S$ parameters. 
Since elements in $\mathcal H$ can be regarded as sparse larger neural networks with $L+L^*$ layers, $S+S^*$ parameters and piece-wise linear activations, we have
$$
Pdim(\mF)\vee Pdim(\mathcal H) \leq C\times (S+S^*) (L+L^*)\log (S+S^*).
$$
Our assumption $T\geq Pdim(\mF)\vee Pdim(\mathcal H)$ will thus be satisfied, for instance, when 
\begin{equation}\label{eq:T}
T>C\times (S+S^*) (L+L^*)\log (S+S^*).
\end{equation}
Choosing $N=\lfloor 2^{d^*/2} (\log T)^{d^*/2}\rfloor$, which satisfies the requirement  $2^{d^*/2} (\log T)^{d^*/2-1}=o(N)$ as $T\rightarrow\infty$, yields 
\begin{equation}\label{tune:W}
W=3^{d^*+3}\max\{d^*\lfloor N^{1/d^*}\rfloor, N+1\}=3^{d^*+3}\lfloor 2^{d^*/2} (\log T)^{d^*/2}+1\rfloor
\end{equation}
for a sufficiently large $n$ (and thereby $d^*$).
Recall that in the feed-forward neural networks, the total number of parameters $S=\sum_{l=0}^{L-1}[w_l(w_l+1)]$ satisfies $S\leq LW(W+1)$. 
For any fixed $d^*$ (and thereby $n$), assuming $L=12\log T+14+2d^*$ as before and $W$ as in \eqref{tune:W}, we define $T(d^*)$ as the smallest $T$ that satisfies
$$
C\times [LW(W+1)+S^*] (L+L^*)\log [LW(W+1)+S^*]\leq \frac{T}{\log T}\times\e^{-2C_n}\times \eta_n
$$
for some $\eta_n=o(1)$. Any $T>T(d^*)$ satisfies $T>Pdim(\mF)\vee Pdim(\mathcal H)$ and $\e^{2C_n}\frac{\log T}{T}[ Pdim(\mF)\vee Pdim(\mathcal H)]\leq \eta_n$. With $T\geq \max\{T_n,T(d^*)\}$, the condition \eqref{eq:ass3} is verified.

\begin{lemma}\citep[Lemma B5]{zhou2022deep}\label{lemma:approx}
Let $f$ be a uniformly continuous function defined on $E\subset[-R,R]^d$. For any $L,N\in\mathbb N^+$, there exists a ReLU network function $f_{\bphi}$ with width $3^{d+3}\max\{d\lfloor N^{1/d}\rfloor, N+1\}$ and depth $12L+14+2d$ such that
$$
\|f-f_{\bphi}\|_{L_{\infty}(E)}\leq 19\sqrt{d} \omega_f^E(2RN^{-2/d}L^{-2/d}),
$$
where $\omega_f^E(t)$ is the modulus of continuity of $f(t)$ satisfying $\omega_f^E(t)\rightarrow 0$ as $t\rightarrow0^+.$
\end{lemma}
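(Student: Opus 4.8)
The plan is to prove this as a constructive, quantitative universal approximation bound, building $f_{\bphi}$ explicitly as a composition of two ReLU sub-networks in the style of the neuron-counting approximation theory of \citetsupp{zhou2022deep}. First I would reduce to the unit cube: the affine map $T(x)=(x+R\mathbf 1)/(2R)$ sends $E\subset[-R,R]^d$ into $[0,1]^d$ and is realized by a single linear layer, so it suffices to approximate $\tilde f=f\circ T^{-1}$ on $[0,1]^d$. A Euclidean displacement of size $s$ in the unit cube corresponds to displacement $2Rs$ in the original domain, which is precisely where the factor $2R$ inside $\omega_f^E$ will enter. On $[0,1]^d$ I would fix a per-axis resolution $K$ (tuned so that $K^{-1}\asymp N^{-2/d}L^{-2/d}$), partition into $K^d$ sub-cubes $Q_{\bm\beta}$ indexed by $\bm\beta\in\{0,\dots,K-1\}^d$, choose a representative corner $x_{\bm\beta}$ of each, and take as target the piecewise-constant function $\bar f=\sum_{\bm\beta}\tilde f(x_{\bm\beta})\mathbf 1_{Q_{\bm\beta}}$. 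Since $\diam(Q_{\bm\beta})=\sqrt d\,K^{-1}$, one has $|\tilde f-\bar f|\le\omega_{\tilde f}(\sqrt d\,K^{-1})$ on cell interiors, and the modulus scaling inequality $\omega(\lambda t)\le(\lfloor\lambda\rfloor+1)\omega(t)$ converts the $\sqrt d$ in the argument into the prefactor $\sqrt d$ appearing in the final bound.

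I would then assemble the network as $f_{\bphi}=\psi\circ\iota\circ q$. Here $q$ is a \emph{quantizer}: a coordinatewise ReLU gadget of width $O(\lfloor N^{1/d}\rfloor)$, refined using depth, approximates the step map $x_i\mapsto\lfloor K x_i\rfloor$, so that $q(x)=\bm\beta$ on $Q_{\bm\beta}$ away from cell boundaries, and the flattening $\iota(\bm\beta)=\sum_{i}\beta_i K^{i}$ into a single integer $m\in\{0,\dots,K^d-1\}$ is linear. The crux is the \emph{value-assignment} network $\psi$, which must map $m\mapsto\tilde f(x_{\bm\beta(m)})$. A width-only construction distinguishes only $O(N)$ values and would cap the rate at $N^{-1/d}$; the improvement to $N^{-2/d}L^{-2/d}$ comes from using depth via bit extraction (the Bartlett--Yarotsky device), encoding the $K^d\asymp N^2L^2$ scalar values as a binary string and reading the relevant bits with a depth-$O(L)$ cascade. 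In this way width $O(N^{1/d})$ together with depth $O(L)$ realizes $\asymp N^2L^2$ distinct outputs, and this quadratic, $L^2$-boosted capacity is exactly the super-expressive gain from depth.

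The error and resource accounting then proceeds as follows. The quantizer is inaccurate only on a thin boundary strip $\Omega_\delta$ whose measure shrinks with $\delta$; on $[0,1]^d\setminus\Omega_\delta$ one has $f_{\bphi}=\bar f$ exactly, hence $|\tilde f-f_{\bphi}|\le\omega_{\tilde f}(\sqrt d\,K^{-1})$ there. To retain a uniform ($L_\infty$, not merely $L_1$) bound across $\Omega_\delta$ I would apply the standard shifted-grid trick, running the construction on a second translated grid and combining the two so that every point lies in the interior of at least one grid. Substituting $K^{-1}\asymp N^{-2/d}L^{-2/d}$, undoing the rescaling to recover the $2R$, and tracking all constants through the composition yields width $3^{d+3}\max\{d\lfloor N^{1/d}\rfloor,N+1\}$, depth $12L+14+2d$, and the bound $\|f-f_{\bphi}\|_{L_\infty(E)}\le19\sqrt d\,\omega_f^E(2RN^{-2/d}L^{-2/d})$.

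The hard part will be the depth-driven value-assignment network $\psi$ together with the uniform treatment of the boundary strips: making bit extraction realize $\asymp N^2L^2$ prescribed values within the stated width and depth budget, and ensuring the $L_\infty$ bound survives on the measure-shrinking region where the quantizer misfires, are the delicate steps that pin down the explicit constants $3^{d+3}$, $12L+14+2d$ and $19\sqrt d$. Since this is precisely Lemma~B5 of \citetsupp{zhou2022deep} (a packaging of the Shen--Yang--Zhang neuron-count construction), I would ultimately invoke that result rather than reproduce every bit-extraction estimate.
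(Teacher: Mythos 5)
The paper offers no internal proof of this statement: it is imported verbatim, with citation, as Lemma~B5 of \citetsupp{zhou2022deep}, which in turn packages the Shen--Yang--Zhang neuron-counting construction. Your sketch correctly reconstructs exactly that argument --- rescaling $[-R,R]^d$ to the unit cube (producing the $2R$ inside the modulus), a piecewise-constant target on $K^d$ subcubes with $K^{-1}\asymp N^{-2/d}L^{-2/d}$, a coordinatewise quantizer, bit-extraction point fitting to realize $\asymp N^2L^2$ prescribed values, and a shifted-grid device to upgrade the bound from off-the-trifling-region to genuine $L_\infty$ --- so in substance you are proving the lemma the same way its source does, and your concluding appeal to the citation matches what the paper itself does. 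Two details in your sketch are worth correcting, though both sit inside the part you defer to the citation. First, the value-assignment (point-fitting) subnetwork has width $O(N)$ and depth $O(L)$, not width $O(N^{1/d})$: it is precisely this module that contributes the $N+1$ term in the width $3^{d+3}\max\{d\lfloor N^{1/d}\rfloor,\,N+1\}$, while the $d\lfloor N^{1/d}\rfloor$ term comes from the quantizer. Second, ``a second translated grid'' does not suffice when $d>1$: a point can lie in the boundary strip of both of two grids simultaneously (e.g.\ near a boundary in several coordinates at once), so the $L_\infty$ repair needs on the order of $d+1$ translates combined by a ReLU median network (or the $3^d$-shift variant), and this is exactly the origin of the $3^{d+3}$ width factor and the $+2d$ in the depth $12L+14+2d$ that you would otherwise be unable to account for. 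With those two points repaired, or simply absorbed into the invocation of \citetsupp[Lemma B5]{zhou2022deep}, your proposal is a faithful account of the only proof on record for this lemma.
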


\subsection{Theory for Adversarial Variational Bayes}\label{sec:thm_VB}

\begin{thm}\label{thm:vb_bound}
Let $\wh \b_T$ be as in  \eqref{eq:psihat} where $\mF=\{f: \|f\|_\infty\leq B\}$ for some $B>0$. Denote with $\E$ the expectation with respect to  $\{Z_j\}_{j=1}^T$ in the   reference table. Assume that  the prior satisfies \eqref{eq:prior_conc}.
Then for $T\geq Pdim(\mF\circ\mG)$ we have for any $C_n>0$
$$
P_{\theta_0}^{(n)}\E\,  d_{TV}^2\left(\nu(X_0),\mu_{\wh\b_T}(X_0)\right)\leq \mathcal D_n^T({\mF,\mG},\varepsilon_n,C_n),
$$
where 
$$
D_n^T({\mF,\mG},\varepsilon_n,C_n)=\frac{\mathcal A_3({\mF,\mG})}{2}+\frac{1}{C_n^2n\varepsilon_n^2}+\frac{1}{2}\wt C\sqrt{\frac{\log T}{T}  Pdim(\mF\circ\mathcal G)}+\frac{\e^{ (1+C_2+C_n)n\varepsilon_n^2}}{4} \frac{B}{\sqrt{2}}\mathcal A_2(\mG)
$$ 
for some $\wt C>0$ where $\mathcal A_2(\mG)$ was defined in \eqref{eq:term2} and where 
$$
\mathcal A_3({\mF,\mG})=  P_{\theta_0}^{(n)} {\E} \left\| \log \frac{\pi_{\wh\b_T}(\theta\C X_0)}{\pi(\theta\C X_0)}- f_{\bomega(\wh\b_T)} (X_0,\theta)\right\|_\infty.
$$
{Here $\E$ account for the nested randomness in the estimation process of $\bomega(\wh\b_T)$ and $\hat \b_T$.}
\end{thm}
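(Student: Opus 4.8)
The plan is to follow the architecture of the proof of Theorem \ref{thm:bound}, but to keep every quantity localized at $X_0$ for as long as possible, since the variational estimator \eqref{eq:psihat_VB} is trained on $X_0$ rather than on the whole reference table. First I would split the expected squared total variation distance over the event $\mA_n(\epsilon_n)$ used in Theorem \ref{thm:bound}; on $\mA_n^c(\epsilon_n)$ the contribution is at most $\P_{\theta_0}^{(n)}[\mA_n^c(\epsilon_n)]\le 1/(C_n^2 n\epsilon_n^2)$ by \citepsupp[Lemma 10]{ghosal2007convergence}. On $\mA_n(\epsilon_n)$ I would apply Pinsker's inequality at $X_0$, $d_{TV}^2(\nu(X_0),\mu_{\wh\b_T}(X_0))\le \tfrac14 d_{KL}^S(\nu(X_0),\mu_{\wh\b_T}(X_0))$, and insert the critic $f_{\bomega(\wh\b_T)}$ returned by the inner maximization, writing
$$
d_{KL}^S(\nu(X_0),\mu_{\wh\b_T}(X_0))=\int_\Theta\Big[\log\tfrac{\pi(\theta\C X_0)}{\pi_{\wh\b_T}(\theta\C X_0)}-f_{\bomega(\wh\b_T)}(X_0,\theta)\Big]\big[\pi(\theta\C X_0)-\pi_{\wh\b_T}(\theta\C X_0)\big]\d\theta+\mathrm{IPM}_{X_0}(\wh\b_T),
$$
where $\mathrm{IPM}_{X_0}(\b)\equiv E_{\theta\sim\pi(\theta\C X_0)}f_{\bomega(\b)}(X_0,\theta)-E_{Z\sim\pi_Z}f_{\bomega(\b)}(X_0,g_\b(Z,X_0))$ is exactly the population objective that \eqref{eq:psihat_VB} targets.

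The bracketed term is bounded by $\|\cdots\|_\infty\,\|\pi(\cdot\C X_0)-\pi_{\wh\b_T}(\cdot\C X_0)\|_1$; extending the integration from $\mA_n(\epsilon_n)$ to all of $\mX$ (the integrand is nonnegative, so no reweighting is needed here) and taking $\tfrac14 P_{\theta_0}^{(n)}\E$ produces the leading $\tfrac12\mathcal A_3(\wh\b_T)$ term, where the symmetry of $\mF$ absorbs the sign of the log-ratio. For the remaining $\mathrm{IPM}_{X_0}(\wh\b_T)$ I would use optimality of the estimator: letting $\b^*$ attain $\mathcal A_2(\mG)$, the empirical version of $\mathrm{IPM}_{X_0}$ at $\wh\b_T$ is no larger than at $\b^*$, so replacing the empirical $Z$-average by its population counterpart, at both $\wh\b_T$ and $\b^*$, costs at most twice the uniform deviation $\sup_{\b}|\tfrac1T\sum_j f_{\bomega(\b)}(X_0,g_\b(Z_j,X_0))-E_Z f_{\bomega(\b)}(X_0,g_\b(Z,X_0))|$ over the composed class $\mF\circ\mathcal G=\{Z\mapsto f_{\bomega}(X_0,g_\b(Z,X_0))\}$. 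Symmetrization (as in \citepsupp{liang2021well}, Lemma 26, or \citepsupp[Thm.~3.17]{sen2018gentle}) together with the pseudo-dimension bounds for piecewise-linear networks (\citepsupp[Thm.~14.1]{anthony1999neural}, \citepsupp[Thm.~6]{bartlett2017spectrally}) bound this deviation in expectation by $\wt C\sqrt{(\log T/T)\,Pdim(\mF\circ\mathcal G)}$, contributing $\tfrac12\wt C\sqrt{(\log T/T)\,Pdim(\mF\circ\mathcal G)}$ after the factor $\tfrac14$ and the two comparisons; again no reweighting is needed, since this bound is uniform in $X_0$. This leaves only the population term $\mathrm{IPM}_{X_0}(\b^*)$, which by $\|f\|_\infty\le B$ is at most $B\,\|\pi(\cdot\C X_0)-\pi_{\b^*}(\cdot\C X_0)\|_1$.

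Only this last, generator-approximation term requires the change of measure, and this is the sole source of the exponential prefactor. Because $\mathcal A_2(\mG)$ in \eqref{eq:term2} is an average against the true marginal $\pi(X)$, I must convert the $P_{\theta_0}^{(n)}$-average of $\|\pi(\cdot\C X_0)-\pi_{\b^*}(\cdot\C X_0)\|_1$ into a $\pi(X)$-average. On $\mA_n(\epsilon_n)$ I would write $p_{\theta_0}^{(n)}(X)=\pi(X)r(X)$ with $r(X)<\e^{(1+C_n)n\epsilon_n^2}/\Pi[B_n(\theta_0;\epsilon_n)]\le \e^{(1+C_2+C_n)n\epsilon_n^2}$, using the prior concentration \eqref{eq:prior_conc}, exactly as in the passage of Theorem \ref{thm:bound} leading to \eqref{eq:until}. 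The same Pinsker-and-Jensen chain used there to bound $d_\mF(\mu_{\b},\nu)$ then turns $\int_{\mX}\pi(X)\|\pi(\cdot\C X)-\pi_{\b^*}(\cdot\C X)\|_1\d X$ into $\tfrac{B}{\sqrt2}\mathcal A_2(\mG)$, so that this piece contributes $\tfrac{\e^{(1+C_2+C_n)n\epsilon_n^2}}{4}\tfrac{B}{\sqrt2}\mathcal A_2(\mG)$; collecting all four pieces yields the claimed bound $\mathcal D_n^T(\wh\b_T,\varepsilon_n,C_n)$.

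The principal obstacle is the entanglement of the two optimizations: unlike Theorem \ref{thm:bound}, the critic $f_{\bomega(\wh\b_T)}$ is trained on the joint reference table while the generator is trained at $X_0$, so $\bomega$ is a data-dependent functional of $\b$. Making the symmetrization step rigorous therefore requires a deviation bound that is uniform over $\b$ and the induced critics $\bomega(\b)$ simultaneously, at the fixed point $X_0$; this is what forces the complexity to be measured on the composed class $\mF\circ\mathcal G$ rather than on $\mF$ and $\mG$ separately. The second delicacy is the bookkeeping of which terms may be left as honest $P_{\theta_0}^{(n)}$-averages (the discriminability $\mathcal A_3$ and the stochastic-error term, both outside the exponential) and which one must be transported to the prior-weighted marginal $\pi(X)$ (the generator bias $\mathcal A_2$, inside the exponential); getting this split right is precisely what distinguishes Theorem \ref{thm:vb_bound} from Theorem \ref{thm:bound}, where every approximation term sits inside the exponential factor.
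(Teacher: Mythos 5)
Your proposal is correct and follows essentially the same route as the paper's proof: Pinsker's inequality at $X_0$, insertion of the learned critic $f_{\bomega(\wh\b_T)}$ to isolate the $\mathcal A_3$ term, optimality of $\wh\b_T$ for the empirical objective plus two empirical-versus-population comparisons controlled uniformly over the composed class $\mF\circ\mG$ via symmetrization and pseudo-dimension, and a change of measure (via \eqref{eq:prior_conc} and the Ghosal--van der Vaart lemma) applied \emph{only} to the generator-approximation term, producing the exponential prefactor on $\frac{B}{\sqrt{2}}\mathcal A_2(\mG)$. The only differences are cosmetic: you perform the event-splitting over $\mA_n(\epsilon_n)$ once at the outset rather than only when treating the population IPM term, and your bound of the complement's contribution by $\P_{\theta_0}^{(n)}[\mA_n^c(\epsilon_n)]$ rather than a constant multiple of it shares the same constant-factor looseness present in the paper itself.
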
   
\proof
We denote with $\bar E$ the expectation with respect to the empirical distribution. 
Because   the class $\mF$ is symmetrical (i.e. $f\in\mF$ implies $-f\in\mF$), the adversarial variational Bayes estimator is defined as 
\begin{equation}\label{eq:psihat}
\wh\b_T=\arg\min\limits_{\b:g_{\b}\in\mathcal G}\left[  \bar E_{Z\sim\pi_Z} f_{\bomega(\b)}\big(X_0, g_{\b}(Z,X_0)\big)-   E_{\theta\sim\pi(\theta\C X_0)} f_{\bomega(\b)}\left(X_0,\theta\right)\right] 
\end{equation}
where
$$
\bomega(\b)=\arg\max\limits_{\bomega:f_{\bomega\in\mF}} \left[ \bar E_{Z\sim\pi_Z,X\sim\pi(X)} f_{\bomega}\left(X,g_{\b}(Z,X)\right) -\bar E_{(\theta,X)\sim\pi(\theta,X)} f_{\bomega}(X,\theta)\right].
$$
Note that the (stochastic) gradient descent update for $\b$, conditioning on the most recent value of $\bomega$, {\em does not} involve the second term $E_{\theta\sim\pi(\theta\C X_0)} f_{\bomega(\b)}\left(\theta,X_0\right)$ in \eqref{eq:psihat} because it does not depend on $\b$. The minimization occurs only over the first term. We obtain theoretical results for $\wh\b_T$ and note that our Algorithm  \ref{alg:WGAN-VB} targets this estimator. In the sequel, we denote $\wh\b_T$ simply by $\wh\b$ and use the notation $\pi_{\b}(\theta,X)=\pi_{\b}(\theta\C X)\pi(X)$ for the joint generator model.
 Using the Pinsker inequality we obtain 
\begin{align*}
P_{\theta_0}^{(n)} {\E}4\,d_{TV}^2\left(\nu(X_0),\mu_{\wh\b}(X_0)\right)\leq&  P_{\theta_0}^{(n)}  {\E} \int_{\Theta} \log \frac{\pi(\theta\C X_0)}{\pi_{{\wh\b}}(\theta\C X_0)}\big[\pi(\theta\C X_0)-\pi_{\wh\b}(\theta\C X_0)\big]\d\theta  \\
\leq & P_{\theta_0}^{(n)}  {\E} 2\left\| \log \frac{\pi_{\wh\b}(\theta\C X_0)}{\pi(\theta\C X_0)}- f_{\bomega(\wh\b)} (X_0,\theta)\right\|_\infty\\
&+ P_{\theta_0}^{(n)}  {\E} d_{\wh\b}(\nu_{\wh\b}(X_0), \mu(X_0)), 
\end{align*}
where we define (for any   $\b,\wt\b$ such that $g_{\b}\in\mG$ and $g_{\wt\b}\in\mG$)
$$
d_{\wt\b}(\nu_{\b}(X),\mu(X))\equiv   E_{\theta\sim\pi_{\b}(\theta\C X)}f_{\bomega(\wt\b)} (X,\theta)-  E_{\theta\sim\pi(\theta\C X)}f_{\bomega(\wt\b)} (X,\theta).
$$
From the definition of \eqref{eq:psihat} and since $\mathcal F\circ \mathcal G$ is symmetrical, we have for any realization $X_0$ and  for any $\b$
\begin{align}
d_{\wh\b}(\nu_{\wh\b}(X_0),\mu(X_0))&=  d_{\wh\b}(\nu_{\wh\b}(X_0),\bar \nu_{\wh\b}(X_0))+ d_{\wh\b}(\bar \nu_{\wh\b}(X_0),\mu(X_0))\notag\\
&\leq   d_{\wh\b}(\nu_{\wh\b}(X_0),\bar \nu_{\wh\b}(X_0))+ d_{\b}(\bar \nu_{\b}(X_0),\mu(X_0))\label{eq:def_min}\\
&\leq   d_{\mF}(\nu_{\wh\b}(X_0),\bar \nu_{\wh\b}(X_0))+ d_{\b}(\bar \nu_{\b}(X_0),  \nu_{\b}(X_0)) + d_{\b}(\nu_{\b}(X_0),\mu(X_0))\notag\\
&\leq  2d_{\mF\circ\mathcal G}(\pi_Z,\bar\pi_Z)+  d_{\b}(\nu_{\b}(X_0),\mu(X_0)).\notag
\end{align}
Next, using the same arguments as in the proof of Theorem \ref{thm:bound}, we obtain
\begin{align*}
P_{\theta_0}^{(n)} {\E}   d_{\b}(\nu_{\b}(X_0),\mu(X_0))&\leq 4\,\P_{\theta_0}^{(n)}[\mathcal A_n^c(\epsilon)]+\e^{(1+C_n+C_2)n\epsilon^2}  {\E} \int_{\mX} d_{\b}(\nu_{\b}(X),\mu(X))\pi(X)\d X 
\end{align*}
 Since $\|f_{\bomega(\b)}\|_\infty\leq B$, we  have for any $\b$
\begin{align*}
E_X d_{ \b}(\mu_{\b}(X),\nu(X))&= \int_{\mX}\pi(X)\int_{\Theta}f_{\bomega(\b)} (X,\theta)\big[\pi(\theta\C X)-\pi_{\b}(\theta\C X)\big]\d\theta\d X \\
&\leq B\times E_X  d_{TV}(\mu_{\b}(X),\nu(X))\\
&\leq \frac{B}{\sqrt{2}}E_X\sqrt{\left\|\log\frac{\pi(\theta\C X)}{\pi_{\b}(\theta\C X)}\right\|_{\infty}}.
\end{align*}
In the sequel, we choose $\b$ which minimizes this term.
Next, using the symmetrization techniques as before in the proof of Theorem \ref{thm:bound} and denoting with $\E$ the expectation with respect to $\{Z_j\}_{j=1}^T$, we have
\begin{align*}
 d_{\mF\circ\mathcal G}(\pi_Z,\bar \pi_Z)&\leq  \,\E \mathcal R_n(\mF\circ\mathcal G) \leq \wt C\sqrt{\frac{\log T}{T}  Pdim(\mF\circ\mathcal G)}.
\end{align*}
Putting the pieces together, we obtain
an upper bound $D_n^T( {\mF, \mG} ,\varepsilon_n,C_n)$ for $P_{\theta_0}^{(n)}\E\,  d_{TV}^2\left(\nu(X_0),\mu_{\wh\b_T}(X_0)\right)$.

\section{More on Adversarial Variational Bayes}\label
{sec:supplement_VB}
The VB algorithm reported in the main paper follows the scheme in  \Cref{alg:WGAN-VB}. \Cref{fig:gauss_ex2_another} includes an example of performance of Algorithm \ref{alg:WGAN-VB} on another realization of Example \ref{ex:toy}. 

\begin{figure}[!ht] 
\centering
\includegraphics[width=0.95\textwidth]{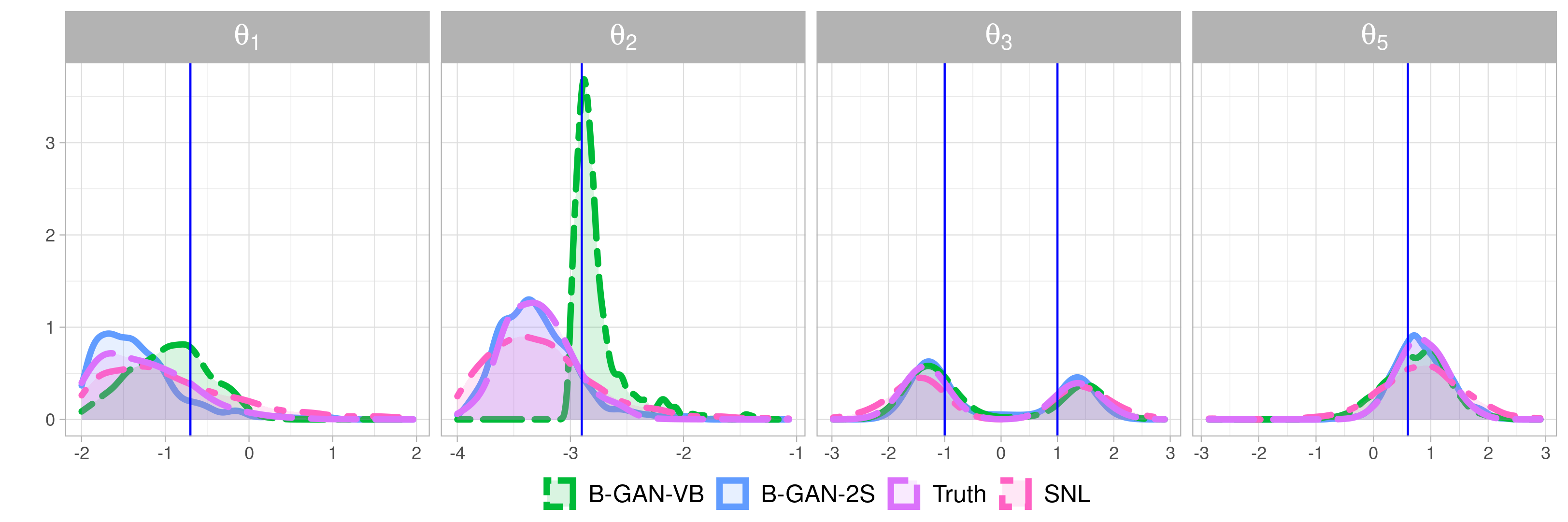}
\caption{Posterior densities under the Gaussian model (another repetition). The true parameter is $\btheta_0=(-0.7, -2.9, -1.0, -0.9, 0.6)'$, while the signs of $\theta_3$ and $\theta_4$ are not identifiable.}\label{fig:gauss_ex2_another}
\end{figure}

\section{Jensen-Shannon Version of B-GAN}\label{sec:JS_cGAN}

The empirical version of the minimax game  using \eqref{eq:gan_obj}  involves the ABC reference table $\{(\theta_j,X_j)\}_{j=1}^T$  and  noise realizations $\{Z_j\}_{j=1}^T \sim \pi_Z(\cdot)$
to solve
\begin{equation}\label{eq:gan2}
(g^*,d^*)=\arg\min_{g\in\mathcal G}\max_{d\in\mathcal D }\left(\sum_{j=1}^T  \log d(X_j,\theta_j)+\sum_{j=1}^T \log\Big[1-d\big(X_j,g(Z_j,X_j)\big)\Big]\right).
\end{equation}

The convergence difficulties of  Jensen-Shannon (JS) GANs \citep{goodfellow2014generative}, similar to  \Cref{alg:KL-GAN},  have been no secret. We provide a simple illustration of how it can fail on the toy example in \Cref{ex:toy}. In particular, we show that the convergence is very sensitive to the choice of the learning rate (step size in stochastic gradient descent).  To make comparisons with the Wasserstein version more fair, we use the same network architecture  used for Algorithm \ref{alg:WGAN} described in \Cref{sec:implementation_toy}  for the JS version as well. At the end of this section we  point out that  JS can work with more careful tuning at higher computational cost.

\begin{figure}[!ht]
\centering
\begin{minipage}{.78\linewidth}
\begin{algorithm}[H]
	\caption{\em B-GAN   (Jensen-Shannon Version)}\label{alg:KL-GAN}
\centering
	\spacingset{1.1}
	\small
	\resizebox{\linewidth}{!}{
		\begin{tabular}{l l}
			\hline 
			\multicolumn{2}{c}{\bf INPUT \cellcolor[gray]{0.6} }\\	
				\hline
			\multicolumn{2}{c}{Prior $\pi(\theta)$, observed data $X_0$ and noise distribution $\pi_Z(\cdot)$}\\
				\hline
			\multicolumn{2}{c}{\bf PROCEDURE \cellcolor[gray]{0.6}}\\		
			\hline
			\multicolumn{2}{c}{Initialize  networks $d^{(0)}$ and $g^{(0)}$}\\
			\multicolumn{2}{c}{\bf  \cellcolor[gray]{0.9}ABC Reference Table}\\
						For $j=1,\dots, T$:  
                       Generate   $(X_j,\theta_j)$ where $\theta_j\sim\pi(\theta)$ and $X_j\sim P_{\theta_j}^{(n)}$.  \\
			\multicolumn{2}{c}{\bf  \cellcolor[gray]{0.9}GAN Training }\\
			\multicolumn{2}{l}{For $t=1,\dots, N$:}\\
			\qquad  Generate noise $Z_{j}\sim\pi_Z(z)$ for  $j=1,\ldots, T$. &\\
			\qquad  Update $d^{(t)}$ and $g^{(t)}$  using stochastic gradient descent applied to  \eqref{eq:gan2}. &\\
			\multicolumn{2}{c}{\bf  \cellcolor[gray]{0.6}POSTERIOR SIMULATION}\\
			\multicolumn{2}{l}{For $i=1,\dots, M$: 	Simulate	 $Z_i\sim\pi_Z(z)$ and set $\wt \theta_i=  g^{(N)}(Z_i, X_0^{(n)})$}.
				\end{tabular}}
\end{algorithm}
\end{minipage}
\end{figure}

One of the most common causes of failure is an overly-powerful discriminator. Because of the  `log loss', when the discriminator learns too fast and becomes too strong, the `close-to-boundary' predictions (near 0 for fake data or near 1 for real data) cause vanishing gradients for the generator.
We show how the training balance (between the generator and the discriminator) can be easily disturbed when we alter the learning rate of the generator. We consider two scenarios where the noise variables $\{Z_j\}_{j=1}^T$ are  (a) refreshed for each stochastic gradient step, and (b) when they are sampled only once ahead of time and then random minibatches selected for each stochastic gradient steps. In \Cref{fig:KL_theta}, we report the four approximated posteriors for these two scenarios considering two learning rates of the generator (weak $\text{lr}_g=10^{-4}$ versus strong $\text{lr}_g=10^{-3}$) while  keeping the learning rate of the discriminator at $10^{-4}$. When the generator is weak compared to the discriminator ($\text{lr}_g=10^{-4}$), the posterior reconstructions are not quite around the true values (bottom figure). When the generator is stronger ($\text{lr}_g=10^{-3}$), the posteriors at least cover the correct locations of the parameters, but they are not nearly as successful as the Wasserstein  B-GAN reconstructions we have seen in \Cref{fig:gauss_ex1}. It is also interesting to compare the results for the two treatments of the noise $Z$.  Sampling $Z$'s ahead of time and then sub-sampling for stochastic gradient yields less satisfactory reconstructions. 

\begin{figure}[!t]
\centering
\subfigure[$\text{lr}_g=10^{-3}$]{\includegraphics[width=\textwidth,height=0.25\textwidth]{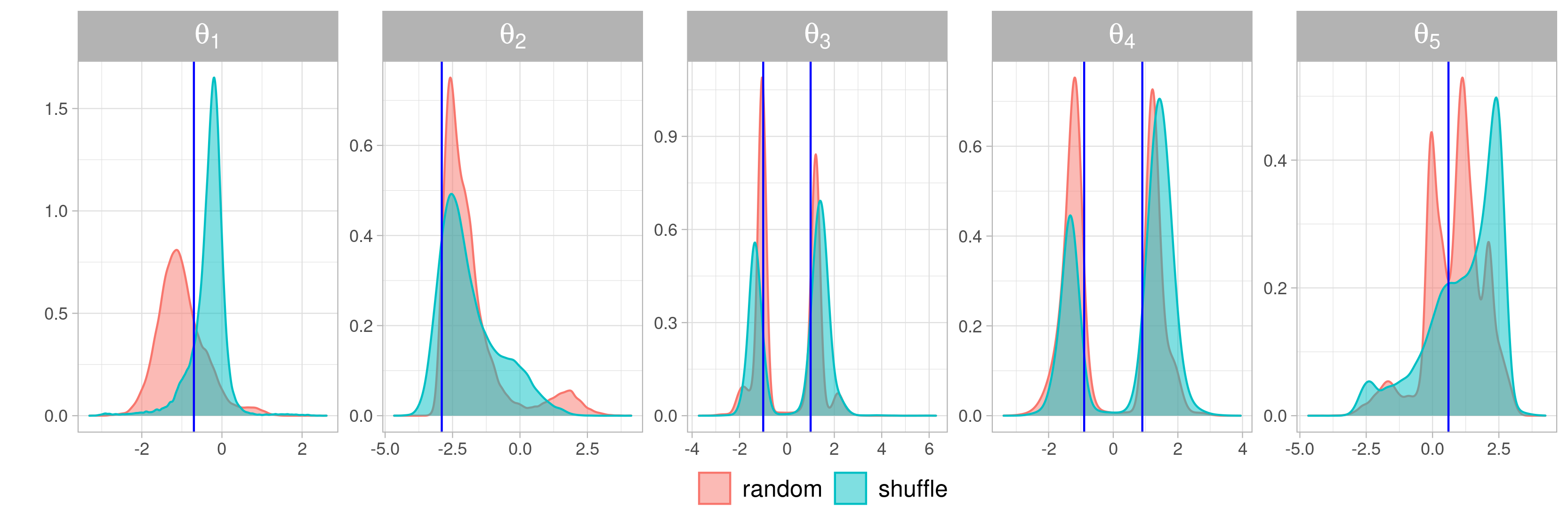}}
\subfigure[$\text{lr}_g=10^{-4}$]{\includegraphics[width=\textwidth,height=0.25\textwidth]{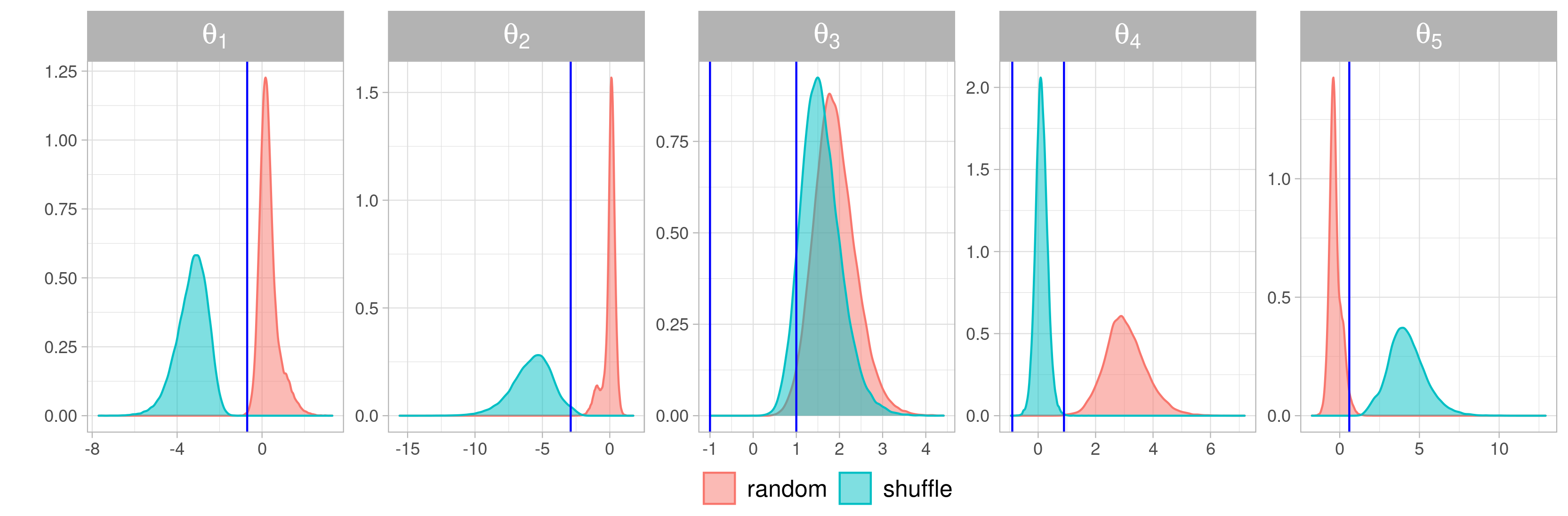}}
\caption{\footnotesize Approximated posteriors under different generator learning rates and different $Z$ randomizations. The learning rate of the discriminator is fixed at $10^{-4}$ in both. The blue vertical lines mark the correct locations of the parameters. }\label{fig:KL_theta}
\end{figure}

 The JS version of  B-GAN could work with a more delicate  calibration and extensive training. 
 The Gaussian example was also studied in \citet[Section 3.1]{ramesh2022gatsbi}, who constructed much bigger and deeper networks for both the discriminator and generator. They used a 4-layer network for the generator, with   layer widths equal to $(d_X+d_Z, 128, 128,  128,  128, d_\theta)$, and a 5-layer network for the discriminator, with layer widths $(d_X+d_\theta, 2048, 2048, 2048, 2048, 2048, 1)$. They used  leaky ReLU activations with a $0.1$ negative slope. Using $100\,000$ pairs of $(X_i, \theta_i)$ (twice as many compared to what we used), the networks were trained for $20\,000$ epochs with $100$ discriminator updates per each generator update. Spectral normalization is also applied to ensure stable training. Yet, the authors observed that the performance on the Gaussian example is inferior to SNL (Figure 2A in their paper). Our Wasserstein B-GAN implementation could outperform SNL with much simpler networks, smaller ABC reference table and significantly lower optimization costs. We explored other  structures, which are different from the ones mentioned above and simpler than the ones used by the authors. For example, a generator network with input-output dimensionality as $(d_X+d_Z, 128, 128, 128, d_\theta)$ and a discriminator network with input-output dimensionality as $(d_X+d_\theta, 512, 512, 512, 512, 1)$. However, they did not produce satisfactory posterior approximations.
 
\section{Implementation}\label{sec:implementation}

\subsection{Implementation details of the one-sided gradient penalty}
For the Lipschitz constraint in \eqref{eq:wgan_obj}, to avoid the ad hoc gradient clipping in \citet{arjovsky2017towards}, \citet{gulrajani2017improved} proposed the soft constraint with a penalty on the gradient of $f_\omega$ with respect to a convex combination of the two constrasting datasets. They adopt the two-sided penalty encouraging the norm of the gradient to go towards $1$ instead of just staying below 1 (one-sided penalty). This is inspired by the fact that the optimal critic function contains straight lines with gradient norm $1$ connecting coupled points from the contrasted distributions \citep[Proposition1]{gulrajani2017improved}. Similarly as \citet{athey2021using}, we adopt the one-sided penalty only with respect to $\theta_j$  
 \begin{equation}\label{eq:gp}
\lambda \Big\{ \frac{1}{T}\sum_{j=1}^T \Big[\max\Big(0, \norm{\nabla_{\bar \theta} f_{\bomega}(X_j,\bar \theta_j)}_2-1\Big)\Big]^2 \Big\}
\end{equation}
where $\bar \theta_j = \epsilon_j \theta_j + (1-\epsilon_j)g(Z_j, X_j)$ with the $\epsilon_j$ re-drawn from a uniform distribution at each step. The choice of $\lambda$ is discussed in \Cref{sec:implementation}.

\subsection{Implementation Details for the Gaussian Example} 
\label{sec:implementation_toy}

\begin{table}[!ht]
\centering
\scalebox{0.8}{
\begin{tabular}{l | *{10}{c}}
\toprule
        & \multicolumn{2}{c}{$\theta_1=0.7$} & \multicolumn{2}{c}{$\theta_2=-2.9$} & \multicolumn{2}{c}{$\abs{\theta_3}=1.0$} & \multicolumn{2}{c}{$
        \abs{\theta_4}=0.9$} & \multicolumn{2}{c}{$\theta_5=0.6$} \\
        & bias       & CI width     & bias       & CI width     & bias       & CI width      & bias       & CI width     & bias       & CI width      \\
                \midrule
Truth   & 0.63 & 1.50 (0.8) & 0.39 & 1.04 (0.9) & 0.37 & 2.85 & 0.27 & 2.39 & 0.83 & 1.68 (0.8) \smallskip \\
SNL     & 0.73 & 3.50 & 0.49 & 2.38 & 0.40 & 4.11 & 0.28 & 3.47 & 0.88 & 3.33 \\
SS      & 1.25 & 5.35 & 1.48 & 6.53 & 0.83 & 5.66 & 0.86 & 5.67 & 1.56 & 5.69 \\
W2      & 1.14 & 3.74 & 0.98 & 3.29 (0.9) & 0.46 & 4.20 & 0.45 & 3.70 & 1.36 & 5.52 \smallskip\\

\hline 

& \multicolumn{10}{c}{ReLU activations} \\

B-GAN        & 0.69 & 2.89 & 0.43 & 2.08 & 0.37 & 3.54 & 0.29 & 3.12 & 0.87 & 2.38 (0.8)  \smallskip  \\

&\multicolumn{10}{c}{Architecture $(128,128,128)$}\\
B-GAN-2S & 0.57 & 1.86 (0.8) & 0.32 & 1.15 (0.9) & 0.31 & 2.91 (0.9) & 0.23 & 2.42 (0.9) & 0.84 & 2.16 (0.8)   \\

&\multicolumn{10}{c}{Architecture $(256, 256)$}\\
B-GAN-2S &  0.57 & 1.73 (0.8) & 0.32 & 1.18 (0.9)& \bf{0.28} & \bf{2.76 (0.8)} & 0.20 & \bf{2.30 (0.9)} & 0.73 & 1.79 (0.8) \\

\hline
& \multicolumn{10}{c}{Leaky ReLU activations} \\

B-GAN    & 0.64 & 2.74 & 0.39 & 1.92 & 0.35 & 3.53 & 0.26 & 3.04 & 0.86 & 2.28  (0.8)  \\
B-GAN-2S & \bf{0.53} & 1.77 (0.8) & 0.30 & \bf{1.12 (0.9)} & 0.33 & 2.98 & 0.27 & 2.52 & 0.83 & 2.12  (0.9)  \\
\bottomrule   
\end{tabular}}
\caption{\footnotesize Summary statistics of the approximated posteriors under the Gaussian model (averaged over 10 repetitions).  For $\theta_3$ and $\theta_4$, we compute the statistics using the absolute values of the parameters, since the posteriors have symmetric modes. Truth refers to the posterior calculated from the exact likelihood function. Bold fonts mark the best model of each column (excluding the true posterior). The coverage of the intervals are 1 unless otherwise stated in the parenthese.}\label{tab:gauss_post}
\end{table}

 \paragraph{Network Architectures.} Our implementation of Algorithm \ref{alg:WGAN} in python builds on the codes provided by \citep{athey2021using}.\footnote{https://github.com/evanmunro/dswgan-paper} We use fairly modest generator/critic networks. For the generator network, we use only 3 hidden layers totaling in dimensions   $(d_Z+d_X,128, 128,128, d_\theta)$. For the critic network, we use a similar architecture with layer dimensions equal to $(d_\theta+d_X,128, 128, 128, 1)$.  A small amount of regularization (dropout $=0.1$ for each layer) was applied  to avoid over-fitting. All the weights are initialized at $0$ and the $\pi_Z(\cdot)$ is specified as the mean zero Gaussian with identity covariance matrix, dimension $d_Z$ equal to $d_\theta$.

  For the post-processing enhancements in \Cref{alg:WGAN-RL} and \Cref{alg:WGAN-VB}, we choose shallower but wider networks with the hope that they will better capture local aspects \citep{chen1999improved}. In particular, for both the generator and critic networks, we use only two-layer networks, resulting in layer dimensions   $(d_Z+d_X, 256, 256, d_\theta)$ for the generator and $(d_\theta+ d_X, 256, 256, 1)$ for the critic. We summarize comparisons of this setting with the previous 3 layers of 128 units for B-GAN-2S in \Cref{tab:gauss_post}. From the results, we see that  wider and shallower networks are superior to the deeper and narrower ones for the two local refinements. We have thereby reported posteriors with two layers of 256 units networks in the main paper. In terms of activation functions, our previous analyses are conducted with a ReLU activation, given its expressibility and inherent sparsity \citep{goodfellow2016deep}. We have also considered the leaky ReLU with a negative slope $a=0.1$ (see \Cref{tab:gauss_post}) and {found no significant advantage of one versus the other.} We report results for ReLUs in \Cref{fig:gauss_ex1}.

\paragraph{Hyperparameters.} Regarding the  choice of $T$, for our ABC methods (with summary statistics and the Wasserstein distance), we use a reference table of size $T=100\,000$. We construct approximated posteriors using the top $1\%$ draws with the smallest ABC discrepancies. For Algorithm \ref{alg:WGAN},  we use the same reference table. However,  for the stochastic gradient descent updates we use a batchsize $T=6400$ (implementations in \citep{ramesh2022gatsbi} suggest a batch size around $10\%$ of the total sample size $T$). The data pairs $(X_j,\theta_j)$ are thus subsetted with replacement (not re-sampled) for each iteration of the stochastic gradient descent.   The noise variables $Z_j$'s, however,  are refreshed (not pre-sampled and subsetted) for each batch.  This is commonly used in existing GAN implementations, including
\citep{athey2021using}. 

   We set $n_\text{critic} =15, \lambda=5$, and the learning rate for the two networks as $\text{lr}_g=\text{lr}_c=10^{-4}$, which are used in both \citep{gulrajani2017improved}  and \citep{athey2021using}.  For optimization, we use the ADAptive Moment estimation algorithm (ADAM) by \citep{kingma2014adam}.  We train B-GAN for $N=1\,000$ epochs or until convergence (the test loss stops decreasing).

For Algorithm \ref{alg:WGAN-RL} and \ref{alg:WGAN-VB}, we use smaller reference tables ($T=50\,000$) with a smaller batch-size $1280$, again training the networks for $N=1000$ epochs. We increase the $n_\text{critic}$ and the penalty $\lambda$ with the hope that a regularized and  stabilized critic could help the generator to learn better in the local region. 
 In general, a well-behaved critic network always helps but the extra training costs could be high when the sample size is large.
 Since these local enhancement variants are trained on a smaller reference table, we make these alterations so that they can converge faster with only a minor increase in computation costs.  
In addition, unlike the JS version of GAN as described in 
\Cref{alg:KL-GAN}, the Wasserstein version is less sensitive to the choice of hyper-parameters. Essentially, different hyper-parameters yield similar results and only lead to a trade-off between the convergence speed and computation costs.

\paragraph{Details of Other Methods.} For the SNL model \citep{papamakarios2019sequential}, we adopt the configurations suggested by the authors.\footnote{The authors provide their implementations on https://github.com/gpapamak/snl} They generated 1\,000 pairs of $(\theta_j, X_j)$ in each round, with 5\% randomly selected to be used as a validation set. They stopped training if the validation log likelihood did not improve after 20 epochs.  They suggested 40 rounds for  the Gaussian model.
Each Masked Autoregressive Flow (MAF) network has two layers of 50 hidden units with hyperbolic tangent function (tanh) activations.   

For ABC methods, we use the mean and variance as summary statistics for naive ABC (SS) and the Wasserstein version is implemented using the 2-Wasserstein (W2) distance under the Euclidean metric defined as $W2(X_i, X_j) =\min_\gamma \big[\sum_{k_1=1}^n \sum_{k_2=1}^n \gamma_{k_1 k_2}\norm{X_{i, k_1} - X_{j, k_2}}^2 \big]^{1/2}$ s.t. $\gamma' \1_n=\1_n, \gamma \1_n=\1_n$ with $0\leq \gamma_{k_1 k_2}\leq 1$. Note that when we calculate the 2-Wasserstein distance, each $X_i$ is treated as 4 pairs $(n=4)$ of bi-variate normal variables rather than a flat vector of length 8.  For both methods, the approximated posteriors are constructed using the ABC draws with the top 1\% smallest  data discrepancies. Performance details (after 10 repetitions) are summarized in \Cref{tab:gauss_post}.

\subsection{Implementation Details for the I.I.D. M/G/1-Queuing Model}\label{app:iid_implementation}
For the M/G/1-queuing example, we generate ABC reference table of size $T=100\,000$ in all three examples. For the stacking and deep set structure, the entire table is generated upfront as $\{(\theta_i, \Xn_i)\}$. For the stacking implementation, $\Xn_i$ is flatten as a long tensor of shape $(1, n)$, while the deep set structure takes a three-dimensional tensor of shape $(1, n^*, 5)$. For the sequential update implementation, we generate new ABC reference table with $T=100\,000$  as $\{\theta_i, X^{n/5}_i\}$ (we use 5 batch update) with $\theta_i$ generated from the posterior distribution learned from last iteration and each dataset only contains $n^*/5$ i.i.d. observations and is flatten of shape $(1, n/5)$.

For our deep set architecture for learning the conditional distribution on i.i.d. datasets, it is structured as
\begin{align*}
g_\beta(Z, [X_1, X_2, \ldots, X_n]) &= g^{(1)}_{\beta_1} (Z, \sum_{i=1}^n g^{(2)}_{\beta_2}(Z, X_i))\\
f_\omega(\theta, [X_1, X_2, \ldots, X_n]) &=  f^{(1)}_{\omega_1} (\theta, \sum_{i=1}^n f^{(2)}_{\omega_2}(\theta, X_i))
\end{align*}
where $g^{(1)}_{\beta_1},  g^{(2)}_{\beta_2}, f^{(1)}_{\omega_1}, f^{(2)}_{\omega_2}$ are sub-network structures inside the generator and critic functions. For the M/G/1-queuing example, we use the same deep set structure for $g_{\beta_2}^{(2)}$ and $f_{\omega_2}^{(2)}$ as ReLU networks with $1$ hidden layers with $64$ units. We use 2 hidden layers with $(64, 64)$ units for $g^{(1)}_{\beta_1}$ and 1 hidden layer with 64 units for $f^{(1)}_{\omega_1}$. We set $\text{lr}_g=1e^{-3}$ and $\text{lr}_c=1e^{-4}$, and train for 2\,000 epoches with batch size of 1\,280.

For the sequential implementation, we use 3 hidden ReLU layers of $(128, 128, 128)$ units for both the generator and the critic functions. We set $\text{lr}_g=5e^{-4}$ and $\text{lr}_c=1e^{-4}$ and train the networks with batch size of 1\,280.  We train the network for 1\,000 epoches for the first batch and 500 epoches for the remaining batches.

For the stacked implementation, we use 3 hidden ReLU layers of $(128, 128, 128)$ units for both the generator and the critic functions. We set $\text{lr}_g=5e^{-4}$ and $\text{lr}_c=1e^{-4}$ and train the networks for 2\,000 epoches with batch size of 1\,280.

\subsection{Implementation Details for the Lotka-Volterra Example}\label{sec:implementation_lv}

For the Lotka-Volterra example, we adopt the 9-dimensional summary statistics used in \citet{papamakarios2016fast}. For \Cref{alg:WGAN}, we use ReLU neural networks with $L-1=3$ hidden layers with $(128,128,128)$ units from bottom to top for both the generator and the critic functions. We generate $T=1\,000\,000$ pairs of $(X_i, \theta_i)$ for the vanilla B-GAN, and the networks are trained with a batch size $B=12\, 80$ for $1\,000$ epochs.  We adopt the same $n_\text{critic}, \lambda, \text{lr}_g =\text{lr}_c$ as the Gaussian example in \Cref{sec:implementation_toy}.

 For learning from the adjusted prior $\wt\pi(\theta)=\pi_{\wh g}(\theta\C X_0)$ where $\wh g$ was obtained from \Cref{alg:WGAN}, we choose shallower networks with $L-1=2$ hidden layers and $W=256$ hidden units in each layer. We generate $T=50\,000$ samples for local enhancements and use a batch size $B=1\,280$ and $1000$ epochs for training. For the VB variant (B-GAN-VB in \Cref{alg:WGAN-VB}), we use the same network architecture and training configuration as B-GAN-RL. We set the weights of the VB loss to be $0.2$.

For the two ABC methods, we use the same $T=1\,000\,000$ pairs used in the B-GAN training. We adopt the summary statistics described in \Cref{sec:lv_main} for the naive ABC (SS) and Wasserstein version is calculated on top of the pairs of predator-prey population $\{x_t, y_t\}$. For both models, we again accept ABC draws with the top $1\%$  smallest data discrepancies.

For the SNL model, the author suggested building the network on top of the same set of summary statistics used in naive ABC, and 20 epochs of training. The network architecture and other training configuration remain the same as in \Cref{sec:implementation_toy}.

\subsection{Implementation Details for the Boom-and-Bust Example}\label{sec:implementation_bnb}

We generate $T=500\,000$ pairs of $(X_i, \theta_i)$ for training the vanilla B-GAN in \Cref{alg:WGAN}. We use the same batch size, learning rate and network architectures as in \Cref{sec:implementation_toy}, except that we train the networks for $2\,000$ epochs this time. We have explored three types of different inputs: (1) the time-series itself; (2) the summary statistics suggested by previous literature such as \citep{an2020robust} (described in \Cref{sec:bnb_main}); (3) the time-series together with the summary statistics. We find the one built on only the summary statistics works best. 

For this example, we find that this model is  more challenging than the Lotka-Volterra example and the vanilla posterior does not always learn the correct location of $\theta_0$. Directly using the B-GAN posterior results in poor training samples for both the 2-step refinement and the VB implementation. To improve the robustness of our methods in repeated experiments, we have revised the proposal distribution in the second step to be a mixture of 50\% prior and 50\% posterior from the vanilla B-GAN. This ensures that we can be guided towards the area closer to the true values $\theta_0$ while guarantees that $\theta_0$ is absolutely covered by the proposal distribution.

For the local enhancement variants, B-GAN-2S in \Cref{alg:WGAN-RL} and B-GAN-VB in \Cref{alg:WGAN-VB}, we generate $T=50\,000$ samples. The network architectures and training configurations are the same as the ones in \Cref{sec:implementation_toy}.

For ABC methods, they are built on the same $T=500\,000$ draws used in B-GAN. We adopt the summary statistics mentioned in \Cref{sec:bnb_main} for the naive ABC (SS), and the 2-Wasserstein ABC is trained on the time series.

For SNL models, we use the summary statistics as the input to the networks and train the model for 20 epochs, similar to the setting used in \Cref{sec:implementation_lv}.

\subsection{Implementation Details for the Common Cold example}\label{sec:implementation_cold}

We generate $T=500\,000$ pairs of $(X_i, \theta_i)$ for training the vanilla B-GAN in \Cref{alg:WGAN}. We use the same batch size, learning rate as in \Cref{sec:implementation_toy}. We  use ReLU networks with $L-1=2$ hidden layers with $(128,128)$ units from bottom to top for both the generator and the critic functions. 

For B-GAN-RL and B-GAN-VB, we generate $T=50\,000$ samples from the B-GAN posterior.The network architectures and training configurations are the same as B-GAN.

\subsection{Computation Costs Comparison}\label{sec:compute_cost}

We provide  comparisons of computation times of each method in \Cref{tab:comp_time} for the Gaussian example and the Lotka-Volterra model. Note that SS, W2 and SNL were executed with CPUs and all B-GAN models were computed using GPUs. We report the time of B-GAN-2S and B-GAN-VB for computation using the adjusted prior only (i.e. without the pilot run). Since B-GAN model learns the joint distribution of $(X, \theta)$ and is universal regardless of the observed data $X_0$, we recycle the same pre-trained B-GAN model to recover the adjusted prior repeatedly for different $X_0$ in our simulation study. Although we only report the computation time for one repetition here, this feature saves a lot of computation costs when one wants to investigate average performance from multiple repetitions.  

The complexity of computing the exact Wasserstein distance is $\mathcal O(T^3)$ \citep{burkard2009assignment} and  $\mathcal O(T^2)$ \citep{cuturi2013sinkhorn} for the  approximate one. The computation costs of SNL and B-GAN depends on the network architecture and  how fast the networks converge. We can only give a rough computation complexity estimate as $\mathcal O(T\times \# epochs \times \# weights)$ for these neural network based models.  From \Cref{tab:comp_time}, we observe that our methods could be more scalable than W2 and SS when the dimension of the dataset is high (8 of Gaussian model vs 402 of Lotka-Volterra (LV) model). The computation time of SNL on LV is smaller than for the Gaussian model due to fewer epochs in training (the author suggested 20 rounds for LV and 40 rounds for Gaussian model). In addition, SNL is trained on summary statistics ($q=9$) rather than the time-series for the LV model. The computation costs of SS increase significantly on the LV example, resulting from both the increase in dimension and the computation costs of the selected summary statistics. The computation costs of the Wasserstein distance are unsurprisingly high as it is known that they do not scale well.

\begin{table}[!ht]
\centering
\begin{tabular}{l | *{6}{r}}
\toprule
      & SS   & W2  & SNL           & B-GAN           & B-GAN-2S          & B-GAN-VB       \\
      \midrule
Gauss&33.75    & 221.28   & 4790.56 & 2736.93 & 676.25   & 726.22 \\
Lotka-Volterra    & 5846.95 & 162644.96 & 3080.96 & 1610.05           & 762.21 & 753.61\\
\bottomrule
\end{tabular}
\caption{\footnotesize Computation time of one repetition for each method on Gauss example and Lotka-Volterra (LV) example (in seconds). The time of B-GAN-2S and B-GAN-VB is for computation using the adjusted prior.}\label{tab:comp_time}
\end{table}

\end{document}